\theoremstyle{definition}
\newtheorem{example}{Example}[section]
\theoremstyle{plain}
\newtheorem{assumption}{Assumption}[section]
\newtheorem{theorem}{Theorem}[section]
\newtheorem{lemma}{Lemma}[section]
\newtheorem{corollary}{Corollary}[section]
\theoremstyle{remark}
\newtheorem{remark}{Remark}[section]
\newcommand\independent{\protect\mathpalette{\protect\independenT}{\perp}}
\def\independenT#1#2{\mathrel{\rlap{$#1#2$}\mkern2mu{#1#2}}}
\title{Orthogonal Series Estimation for the Ratio of Conditional Expectation Functions}
\author{
Kazuhiko Shinoda\\
Graduate School of Economics, Keio University\\
\And
Takahiro Hoshino\\
Faculty of Economics, Keio University\\
RIKEN AIP
}
\date{\empty}
\begin{document}
\maketitle
\begin{abstract}
In various fields of data science, researchers are often interested in estimating the ratio of conditional expectation functions (CEFR).
Specifically in causal inference problems, it is sometimes natural to consider ratio-based treatment effects, such as odds ratios and hazard ratios, and even difference-based treatment effects are identified as CEFR in some empirically relevant settings.
This chapter develops the general framework for estimation and inference on CEFR, which allows the use of flexible machine learning for infinite-dimensional nuisance parameters.
In the first stage of the framework, the orthogonal signals are constructed using debiased machine learning techniques to mitigate the negative impacts of the regularization bias in the nuisance estimates on the target estimates.
The signals are then combined with a novel series estimator tailored for CEFR.
We derive the pointwise and uniform asymptotic results for estimation and inference on CEFR, including the validity of the Gaussian bootstrap, and provide low-level sufficient conditions to apply the proposed framework to some specific examples.
We demonstrate the finite-sample performance of the series estimator constructed under the proposed framework by numerical simulations.
Finally, we apply the proposed method to estimate the causal effect of the 401(k) program on household assets.
\end{abstract}

\section{Introduction}
In various fields of data science, researchers often face problems of estimating \emph{the ratio of conditional expectation functions (CEFR)}.
Although CEFR appears not only in causal inference studies, there are several important examples of CEFR in the treatment effect estimation literature.
When an outcome of interest is the relapse rate of a specific disease, it is more natural to consider the ratio of the conditional expectation of potential outcomes $E[Y_1|X]/E[Y_0|X]$, rather than the difference $E[Y_1|X]-E[Y_0|X]$, as a measure of causal effects of a medical treatment, where $Y_1$ and $Y_0$ are the potential outcome with and without a treatment, respectively, and $X$ is a vector of baseline covariates.
Likewise, ratio-based treatment effects such as the odds ratio and hazard ratio have been widely used especially in clinical settings.
Furthermore, in a data combination setting where an outcome and a treatment status are only separately observed, both conditional average treatment effect (CATE) and local average treatment effect (LATE) are identified as in the form of CEFR \citep{yamane,shinoda2022estimation}, while these effects are defined as the difference of the potential outcomes.

In this article, we start by developing a novel series estimator for CEFR in a very simple setting without selection bias in observed data.
This series estimator is itself useful in estimating treatment effects when data can be collected completely at random from the population of interest, but such randomized data are often not available in practice.
Technically, when there is selection bias in collected data, we need to estimate potentially infinite-dimensional nuisance parameters to adjust for the bias, but these parameters may be hard to estimate with a ``sufficiently high quality'' in observational studies on complex systems since they can be very high-dimensional and/or highly nonlinear.
The highly complex nuisance parameters do not satisfy the traditional assumptions that limit the complexity of a function class, and therefore the resulting semiparametric estimator fails to be $\sqrt{N}$-consistent.
We employ debiased machine learning (DML), a set of techniques to enable the use of flexible machine learning (ML) methods for nuisance estimation, to develop a simple and general framework for constructing a high-quality estimator for CEFR even in the presence of selection bias in observed data.

The major contribution of this study is the development of a novel inference framework for CEFR with theoretical guarantees.
We derive the general asymptotic results for estimation and uniform inference on the best linear approximation to the target CEFR under the proposed framework, including the validity of the Gaussian bootstrap.
It is worth noting that we do not have to add stronger regularity conditions than the assumptions previously known in the literature to establish the theoretical results in this article.
In addition to the general results, we provide a set of low-level sufficient conditions to apply the proposed framework to several specific settings.
Besides the asymptotic analysis, we conduct numerical simulations to evaluate the performance of the proposed method on finite samples.
We also illustrate the use of the framework in an empirical example.

This study builds upon three important bodies of research within the semiparametric literature: DML, series estimation and CEFR estimation.
DML \citep{chernozhukov2018,chernozhukov2022locally} enables inference on the finite-dimensional parameter in the presence of infinite-dimensional nuisance parameters by using the Neyman orthogonal moment conditions.
When the moment condition satisfies the Neyman orthogonality \citep{neyman1959optimal}, bias in the nuisance estimates has no first-order asymptotic effect on the estimator of the target parameter.
DML is powerful enough to allow the use of a broad class of ML estimators ---such as $\ell_1$-penalized methods in sparse models \citep{bickel2009simultaneous,buhlmann2011statistics,belloni2011square,belloni2012sparse,belloni2011ell1,belloni2013least}, $L_2$-boosting methods in sparse linear models \citep{luo2016high}, neural nets \citep{schmidt2020nonparametric,farrell2021deep,kohler2021on}, trees and random forest \citep{wager2015adaptive,syrgkanis2020estimation}--- under a broad range of data generating processes and for various causal parameters.
Moreover, the extensions of DML have been proposed to estimate a function, e.g. CATE or continuous treatment effect, with the existence of complex infinite-dimensional nuisance parameters \citep{jacob2019group,zimmert2019nonparametric,colangelo2020double,semenova2021debiased,fan2022estimation}.
The present study provides a novel method for the orthogonal estimation under a variety of realistic settings that cannot be covered by the previous works, such as ratio-based treatment effects, and treatment effects in the data combination settings.

The second foundation of this study is the least squares series estimation.
Series estimation is a type of nonparametric estimation method that approximates a function of interest by a linear combination of multiple basis functions. 
It is especially useful when the exact functional form of the target function is unknown.
Its asymptotic properties have been investigated intensively in the literature \citep{van1990estimating,andrews1991asymptotic,eastwood1991adaptive,gallant1991asymptotic,newey1997convergence,van2002m,huang2003local,chen2007large,cattaneo2013optimal,belloni2015some,chen2015optimal}, among which we will mainly rely on the results in \cite{belloni2015some}.
This study contributes to the series estimation literature by establishing the asymptotic theory for the CEFR problems rather than the standard regression setting.

This study is also closely related to CEFR estimation in causal inference.
A simple yet important example is ratio-based treatment effects such as the odds and hazard ratio.
To the best of our knowledge, the previous works on the estimation of the ratio-based treatment effects all impose assumptions on the functional form somewhere in the model \citep{dukes2018note,liang2020relative,yadlowsky2021estimation,lee2022survival}.
For example, \cite{liang2020relative} supposes that the ratio-based treatment effects can be expressed as the monotone single index model, and \cite{yadlowsky2021estimation} imposes a stronger condition that the monotone link function is an exponential function.
On the other hand, this study considers a fully nonparametric model for treatment effects, imposing no assumptions on the functional form of CEFR.

Furthermore, \cite{yamane} and \cite{shinoda2022estimation} show that difference-based CATE and LATE are identified in the form of CEFR in the data combination setting where we cannot observe an outcome and a treatment status simultaneously in a single dataset, and develop estimation methods for CEFR.
This study accommodates the generalized version of their problem settings, where each separate dataset contains selection bias.
Their estimators cannot handle the selection bias without introducing additional nuisance parameters, but their methods are not orthogonal to the nuisance parameters.

The rest of this article is organized as follows.
In Section \ref{sec:direct}, we develop a simple series estimator for CEFR.
We propose the general framework for estimation and inference of CEFR in Section \ref{sec:framework}.
The main theoretical results for the asymptotic properties of the series estimator under the proposed framework are presented in Section \ref{sec:asymtheory}, and the application of the framework to some specific treatment effects problems is shown in Section \ref{sec:app_dml}.
In Section \ref{sec:simdml}, we illustrate the finite sample performance of the proposed method by numerical simulations.
We also apply the proposed method to the empirical example of estimating a causal effect of participation in 401(k) on net financial assets in Section \ref{sec:empirical}.
Finally, we provide further discussion on the limitation and future direction of the present study in Section \ref{sec:discussiondml}.

\section{Direct Series Estimator for CEFR}\label{sec:direct}
Consider the following model:
\begin{gather}
    \theta_0(X)=\frac{\nu_0(X)}{\zeta_0(X)},\hspace{5mm}U=\nu_0(X)+\varepsilon_U,\hspace{5mm} T=\zeta_0(X)+\varepsilon_T,\notag\\
    E[\varepsilon_U|X]=E[\varepsilon_T|X]=0,\label{eq:regmodel}
\end{gather}
where $X\in\mathcal{X}\subset\mathbb{R}^q$ is a $q$-dimensional vector of covariates, and $\zeta_0(x)\neq0$ for all $x\in\mathcal{X}$ is necessary for $\theta_0(x)$ to be well-defined.
Now we are interested in estimating $\theta_0(x)$ from separately observed iid samples: $\{u_i,x_i\}_{i=1}^{N_U}$ of $(U,X)$ and $\{t_i,x_i\}_{i=1}^{N_T}$ of $(T,X)$.
Suppose $N=N_U=N_T$ without loss of generality.
Also, assume that the samples $\{u_i,x_i\}_{i=1}^N$ and $\{t_i,x_i\}_{i=1}^N$ are complete random draws from the population of interest, i.e. there is no selection bias in these separate samples.
Then, regarding $U$ and $T$ as outcome variables with and without a treatment, respectively, this problem coincides with the estimation of ratio-based treatment effects \citep{ogburn2015,dukes2018note,yadlowsky2021estimation} in randomized control trials.
Moreover, estimation of difference-based CATE and LATE from separately observed samples \citep{yamane,shinoda2022estimation} is included in the model (\ref{eq:regmodel}).
Although the estimator developed below primarily aims at estimating $\theta_0(x)$ from the separate samples, it can also be applicable to the estimation from joint samples $\{u_i,t_i,x_i\}_{i=1}^N$ of $(U,T,X)$.
Thereafter, we omit arguments of functions if they are obvious from the context.

It follows from the model (\ref{eq:regmodel}) that $E[U-\zeta_0(X)\theta_0(X)|X]=0$.
Then, using this moment condition, we can consider the linear approximation to the target function $\theta_0(x)$ by a vector of $k$ basis functions $p(x):=(p_1(x),\ldots,p_k(x))'$ as $\theta(x)=p(x)'\beta$, where
\begin{align}
    \beta&=\underset{b\in\mathbb{R}^k}{\mathrm{argmin}}E[(U-\zeta_0(X)p(X)'b)^2]\notag\\
    &=E[p(X)p(X)'\zeta_0(X)^2]^{-1}E[p(X)\zeta_0(X)U].\label{eq:serest}
\end{align}
However, as the true denominator function $\zeta_0(x)$ is unknown, the above equation cannot be directly applied for estimation.
To deal with this issue, \cite{shinoda2022estimation} proposed to plug-in the estimator $\hat{\zeta}(x)$ of $\zeta_0(x)$, while \cite{yamane} introduced an auxiliary function and reformulate the problem into minimax optimization.
Both of these approaches have problems: the former lacks the theoretical justification to use flexible nonparametric methods for $\hat{\zeta}(x)$ since simply plugging-in nonparametric estimates generally does not ensure $\sqrt{N}$-consistency of the target estimator, and the latter suffers from unreliable hyperparameter selection caused by the minimax objective.
In what follows, we develop a simpler one-step estimator for CEFR in the spirit of \cite{vapnik1999nature}'s principle: \emph{When solving a given problem, try to avoid solving a more general problem as an intermediate step}.

Define $f(x):=\zeta_0(x)p(x)$ and its series estimator 
\begin{align*}
\hat{f}(x):=E[p(X)p(X)'T]E[p(X)p(X)']^{-1}p(x)=:QS^{-1}p(x).
\end{align*}
Substituting $f$ for (\ref{eq:serest}) gives
\begin{align*}
    \beta=E[f(X)f(X)']^{-1}E[f(X)U].
\end{align*}
Then, by replacing $f(x)$ with its estimator $\hat{f}(x)$, we obtain
\begin{align*}
    \beta&=(QS^{-1}SS^{-1}Q)^{-1}QS^{-1}E[p(X)U]\\
    &=Q^{-1}SQ^{-1}QS^{-1}E[p(X)U]\\
    &=Q^{-1}E[p(X)U],
\end{align*}
and its sample analog estimator $\hat{\beta}=\hat{Q}^{-1}E_N[p(x_i)u_i]$, where $E_N[g(x_i)]:=N^{-1}\sum_{i=1}^N f(x_i)$ for some function $g$ and $\hat{Q}:=E_N[p(x_i)p(x_i)'t_i]$.
The key observation here is that we can compute $\hat{\beta}$ even when $\{u_i,x_i\}_{i=1}^N$ and $\{t_i,x_i\}_{i=1}^N$ are only separately observed.
Henceforth, we refer to this estimator as the \emph{Direct Series Ratio (DSR)} estimator.

\begin{remark}[Asymptotic Properties of the DSR Estimator]
Although the estimator $\hat{f}(x)$ is plugged-in during the derivation process of DSR, there is no actual need to calculate $\hat{f}(x)$.
Using the nonparametric estimator of nuisance functions as done in \cite{shinoda2022estimation} may adversely affect the theoretical and practical properties of a target estimator since nonparametric estimators may have biases due to model selection and regularization.
DSR, on the other hand, does not require the estimated value.
Consequently, it converges at the same rate as in the general regression setting under mild conditions.
The pointwise and uniform asymptotic theory for the DSR estimator immediately follows from the theoretical results shown in \ref{sec:asymtheory}.
\end{remark}

\begin{remark}[Regularized Estimation]
For practical purposes, the estimation of $\hat{Q}$ can be unstable when the sample size is small or when $\hat{Q}$ is close to a singular matrix.
One way to stabilize the estimation is to perform $\ell_2$-regularization (ridge estimation) with $\lambda$ as the regularization parameter:
\begin{align}
    \tilde{\beta}=(\hat{Q}+\lambda I_k)^{-1}E_N[p(x_i)u_i].\label{eq:ridge_param}
\end{align}
The impact of $\lambda$ on the performance of DSR is investigated in the simulation study in Section \ref{sec:dsrsim}.
\end{remark}

\begin{remark}[Model Selection]\label{rm:cv}
To implement the estimators $\hat{\beta}$ and $\tilde{\beta}$, we need to choose the series length $k$ and the regularization parameter $\lambda$.
One practical way of selection is to find hyperparameters that minimize a criterion computed based on cross-validation (CV).
The Mean Squared Error (MSE) is a general criterion, but it is difficult to evaluate MSE directly in the setting (\ref{eq:regmodel}).
However, in special cases where $\zeta_0(x)>0$ for all $x\in\mathcal{X}$, it is possible to calculate a CV criterion that preserves the rank order of MSE.
Expanding MSE gives
\begin{align*}
    E[(U-\zeta_0(X)\hat{\theta}(X))^2]=E[U^2]-2E[U\zeta_0(X)\hat{\theta}(X)]+E[\zeta_0(X)^2\hat{\theta}(X)^2],
\end{align*}
and we can ignore the first term as it does not involve the estimator $\hat{\theta}$.
Moreover, if $\zeta_0(x)$ is positive for all $x\in\mathcal{X}$, 
\begin{align*}
    E[\zeta_0(X)\hat{\theta}(X)^2]-2E[U\hat{\theta}(X)]
\end{align*}
is monotone increasing in MSE.
Therefore, we can use the sample analogue as the criterion:
\begin{align}
    E_N[t_i\hat{\theta}(x_i)^2]-2E_N[u_i\hat{\theta}(x_i)].\label{eq:cvcriterion}
\end{align}
CV in the general situation where $\zeta_0$ can take negative values is a subject for future work.
This limitation, however, does not detract significantly from the value of this study since researchers often have \emph{a priori} knowledge that $\zeta_0(x)$ is positive in many practical situations.
Some of these situations are explained in Section \ref{sec:app_dml}.
\end{remark}

\section{General Framework for Estimation and Inference of CEFR}\label{sec:framework}
In this section, we propose a general framework for the CEFR problems by using DSR developed in the previous section as one of the building blocks.
The proposed framework is based on the generalized version of the model (\ref{eq:regmodel}), and therefore it can accommodate a variety of real-world applications.

\subsection{Setup}
In observational studies of complex systems, possibly high-dimensional covariates $X$ may be necessary for conditional independence to hold, which is one of the critical conditions in many causal inference problems.
However, researchers are often not interested in estimating treatment effects as a function of all covariates, but they need to focus on the heterogeneous relationships between treatment effects and only a few important factors.
In such a case, estimating the target function on the full vector $X$ is unnecessarily hard due to the curse of dimensionality, and directly estimating a function of a subvector can stabilize the estimation.
Therefore, suppose that our parameter of interest $\theta_0$ is now a function of $V\in\mathcal{V}$, a subvector of covariates $X$.
Also, suppose that $\nu_0$ and $\zeta_0$ in (\ref{eq:regmodel}) are defined as
\begin{align}
    \nu_0(v)=E[U(O,\eta_0)|V=v],\hspace{5mm}\zeta_0(v)=E[T(O,\eta_0)|V=v],\label{eq:signals}
\end{align}
where the random variables $U$ and $T$ depend on a vector of observed random variables $O$ and a set of infinite-dimensional nuisance parameters $\eta_0(x)$.
Note that $\eta_0$ is a function of $X$, reflecting that $X$ is indispensable for adjusting for selection biases.
We refer to $U$ and $T$ as signals, following \cite{semenova2021debiased}, and use the notations $U_0:=U(O,\eta_0)$ and $T_0:=T(O,\eta_0)$.

Among many possible choices of the signals $U$ and $T$ that satisfy (\ref{eq:signals}), we focus on the signals with the Neyman orthogonal property, which is defined later.
The Neyman orthogonality is the key property to deliver high-quality inference on the target function $\theta_0(v)$ even when modern ML estimators that do not satisfy Donsker conditions are used for the nuisance functions $\eta_0$.
The classic semiparametric theories ensure that the target estimator achieves $\sqrt{N}$-consistency by bounding the complexity of the nuisance space, but estimators in such a restricted class are not suitable for fitting high-dimensional and/or highly nonlinear nuisance functions in data-rich environments.
On the other hand, modern ML estimators are so flexible to fit any function that there is no need to specify the functional form of the nuisance parameters in advance.
ML estimators also perform well in high-dimensional settings by employing regularization to reduce variance at the cost of regularization bias.
These properties are especially important in observational studies in which the distribution of an outcome is a complex mixture of treated and non-treated populations, and there are many confounding factors.

Although ML estimators are effective in predicting the values of nuisance functions, their prediction is inherently biased by model selection and regularization.
Therefore, the naive plug-in estimator of the target function fails to be $\sqrt{N}$-consistent.
To ensure the desirable theoretical properties of the target estimator while using flexible ML estimators for nuisance functions, we need signals that are locally insensitive to the bias of the nuisance estimators.
Formally, the Neyman orthogonality is defined in terms of the pathwise (Gateaux) derivative as follows:
\begin{align*}
    \partial_s E[U(O,\eta_0+s(\eta-\eta_0))-\nu_0(V)|V=v]|_{s=0}&=0,\\
    \partial_s E[T(O,\eta_0+s(\eta-\eta_0))-\zeta_0(V)|V=v]|_{s=0}&=0,
\end{align*}
for all $v\in\mathcal{V}$, where $\partial$ is the partial derivative operator and $s\in[0,1]$ is a constant.

\subsection{Examples}\label{sec:dmlexample}
We describe some empirically relevant examples for the generalized setting.
The detailed discussions on conditions for identification and inference of estimands in each example are given in Section \ref{sec:app_dml}.

\begin{example}[Local Average Treatment Effect]\label{ex:late}
Let $Y_1$ and $Y_0$ be the potential outcomes realized only when an individual is treated and not treated, respectively.
Similarly, let $D_1$ and $D_0$ be the potential treatment status realized only when an individual is assigned to a treatment group or not.
We also denote a binary treatment assignment as $Z$.
The observable is $O=(Y,D,Z,X)$, where $D=ZD_1+(1-Z)D_0$ is a realized treatment status, and $Y=DY_1+(1-D)Y_0$ is a realized outcome.
The parameter of interest is LATE, which is defined as a treatment effect measured for the subpopulation of compliers
\begin{align*}
    \theta_0(v)=E[Y_1-Y_0|D_1>D_0,V=v],
\end{align*}
where compliers are individuals who always follow the given assignment $Z$.
LATE is often used when there is self-selection to receive a treatment (often referred to as noncompliance to treatment assignment), and thus the observed covariates $X$ are insufficient to adjust for selection bias in $D$.

A standard identification strategy for LATE is using $Z$ as an instrument to $D$.
Under several assumptions including conditions necessary for $Z$ to be a valid instrument, LATE is identified as
\begin{align*}
    \theta_0(v)=\frac{E[E[Y|Z=1,X]-E[Y|Z=0,X]|V=v]}{E[E[D|Z=1,X]-E[D|Z=0,X]|V=v]}.
\end{align*}
Therefore, we can consider LATE estimation as a problem of CEFR by setting $\nu_0(v)=E[\mu_0(1,X)-\mu_0(0,X)|V=v]$ and $\zeta_0(v)=E[\pi_0(1,X)-\pi_0(0,X)|V=v]$, where $\mu_0(z,x)=E[Y|Z=z,X=x]$ and $\pi_0(z,x)=E[D|Z=z,X=x]$.
\end{example}

\begin{example}[Ratio-Based Treatment Effects]\label{ex:rbte}
The observable vector is $O=(Y,D,X)$, where $Y$ is an outcome of interest, and $D$ is a binary treatment status.
As in the case of LATE, let $Y_1$ and $Y_0$ be the potential outcomes.
The parameter of interest is the ratio-based CATE:
\begin{align*}
    \theta_0(v)=\frac{E[Y_1|V=v]}{E[Y_0|V=v]},
\end{align*}
where $E[Y_0|V=v]\neq0$ for all $v\in\mathcal{V}$ is necessary for $\theta_0$ to be well-defined.
If unconfoundedness $Y_1,Y_0\independent D|X$, and other standard assumptions hold, the ratio-based CATE is identified as:
\begin{align*}
    \theta_0(v)=\frac{E[E[Y|D=1,X]|V=v]}{E[E[Y|D=0,X]|V=v]}.
\end{align*}
Therefore, we can consider the estimation of ratio-based CATE as a problem of CEFR by setting
\begin{align*}
    \nu_0(v)&=E[\mu_0(1,X)|V=v]=E[E[Y|D=1,X]|V=v],\\
    \zeta_0(v)&=E[\mu_0(0,X)|V=v]=E[E[Y|D=0,X]|V=v].
\end{align*}
\end{example}

\begin{example}[Instrumented Difference-in-Differences]\label{ex:idid}
Instrumented difference-in-Differences (IDID) is the method that combines the advantages of instrumental variables (IVs) and difference-in-differences \citep{Ye2020InstrumentedD,vo2022structural}.
IDID allows the identification of treatment effects under conditions milder than required by the instrumental methods or difference-in-differences alone.
Suppose that we observe a vector of random variables $O=(Y,D,Z,W,X)$, where $Y$ is an outcome of interest, $D$ is a binary treatment status, $Z$ is a binary treatment assignment, and $W$ is a binary time indicator.
Let $D_{zw}$ be the potential treatment status that would be observed if $Z=z$ in time $w$, and $Y_{dw}$ be the potential outcome that would be observed if $D=d$ in time $w$.
The parameter of interest is CATE
\begin{align*}
    \theta_0(v)=E[Y_1-Y_0|V=v],
\end{align*}
where we assume $E[Y_1-Y_0|V]=E[Y_{11}-Y_{01}|V]=E[Y_{10}-Y_{00}|V]$.
Under some identification assumptions, CATE is identified as:
\begin{align*}
    \theta_0(v)=\frac{E[\mu_0(1,1,X)-\mu_0(0,1,X)-\mu_0(1,0,X)+\mu_0(0,0,X)|V=v]}{E[\pi_0(1,1,X)-\pi_0(0,1,X)-\pi_0(1,0,X)+\pi_0(0,0,X)|V=v]},
\end{align*}
where $\mu_0(w,z,x)=E[Y|W=w,Z=z,X=x]$ and $\pi_0(w,z,x)=E[D|W=w,Z=z,X=x]$.
Therefore, we can consider the estimation of CATE in IDID as a problem of CEFR by setting
\begin{align*}
    \nu_0(v)&=\sum_{(w,z)\in\{0,1\}^2}(-1)^{w+z}E[\mu_0(w,z,X)|V=v],\\
    \zeta_0(v)&=\sum_{(w,z)\in\{0,1\}^2}(-1)^{w+z}E[\pi_0(w,z,X)|V=v].
\end{align*}
\end{example}

\begin{example}[Treatment Effects in the Data Combination Setting]\label{ex:dcte}
We extend the data combination setups studied in \cite{yamane} and \cite{shinoda2022estimation}.
Suppose that the parameter of interest is CATE or LATE, but we cannot observe an outcome $Y$ and a treatment status $D$ simultaneously.
Moreover, binary treatment assignment $Z$ is not observed at all.
We refer to a dataset that includes $Y$ as the outcome dataset, and a dataset that includes $D$ as the treatment dataset.
According to \cite{yamane} and \cite{shinoda2022estimation}, if we have two sets of the outcome and treatment datasets with different treatment regimes, we can identify CATE and LATE, where a treatment regime stands for the conditional treatment assignment probability $P(Z=1|X)$.
In summary, we have four different datasets in this setup: the outcome dataset with regime 1, outcome dataset with regime 0, treatment dataset with regime 1 and treatment dataset with regime 0.

Let $Y_d$ be the potential outcome that would be observed when $D=d$ and $D_z$ be the potential treatment status that would be observed when $Z=z$.
We also consider the potential treatment assignment $Z_w$ that would realize only in a dataset with regime $w=0,1$.
Suppose that the observable vector is $O=(HY+(1-H)D,W,H,X)$, where $W$ is a binary regime indicator, and $H$ is a binary dataset indicator.
Note that we do not have to observe $Z$ in this setting, but we must be sure that $P(Z_1|X=x)\neq P(Z_0|X=x)$.
\cite{yamane} and \cite{shinoda2022estimation} simplified the setup by assuming that samples in each dataset are completely random draws from the population of interest, formally
\begin{align*}
    Y_1,Y_0,D_1,D_0,Z_1,Z_0,X\independent H,W.
\end{align*}
However, it is not realistic that the four different datasets share the same joint distribution in observational studies.
Thus, we relax the above condition to the following conditional independence
\begin{align*}
    Y_1,Y_0,D_1,D_0,Z_1,Z_0\independent H,W|X.
\end{align*}
Under this condition and other corresponding identification assumptions, CATE and LATE are identified in the same form:
\begin{align*}
    \theta_0(v)=\frac{E[\mu_0(1,X)-\mu_0(0,X)|V=v]}{E[\pi_0(1,X)-\pi_0(0,X)|V=v]},
\end{align*}
where $\mu_0(w,x)=E[Y|H=1,W=w,X=x]$ and $\pi_0(w,x)=E[D|H=0,W=w,X=x]$.
Therefore, we can consider the estimation of CATE and LATE in the data combination setting as a problem of CEFR by setting $\nu_0(v)=E[\mu_0(1,X)-\mu_0(0,X)|V=v]$ and $\zeta_0(v)=E[\pi_0(1,X)-\pi_0(0,X)|V=v]$.
\end{example}

\subsection{Overall Inference Procedures}
We develop a two-stage estimator using the orthogonal signals and DSR proposed in Section \ref{sec:direct}.
We refer to this two-stage estimator as the \emph{Orthogonal Series Ratio (OSR)} estimator.
The first stage consists of constructing the orthogonal signals by cross-fitting.
Cross-fitting is another important technique to eliminate biases in the orthogonal signals constructed from finite samples.
The cross-fitting of the orthogonal signals is implemented as follows:
\begin{enumerate}
\item Let $\{J_g\}_{g=1}^G$ denote a $G$-fold random partition of the sample indices $[N]:=\{1,2,\ldots,N\}$, where $G$ is the number of partition. 
Suppose that the sample size of each fold $n:=N/G$ is an integer without loss of generality.
For each partition $g\in[G]$, define $J_g^c:=[N]\backslash J_g$.
\item For each partition $g\in[G]$, construct a set of estimators $\hat{\eta}_g:=\hat{\eta}(O_{i\in J_g^c})$ by using only the samples in $J_g^c$.
For any index $i\in J_g$, construct the orthogonal signals $\hat{U}_i:=U(O_{i\in J_g},\hat{\eta}_g)$ and $\hat{T}_i:=T(O_{i\in J_g},\hat{\eta}_g)$.
\end{enumerate}
Cross-fitting uses different samples for estimating the nuisance functions and constructing the orthogonal signals.
Such sample-splitting allows the nuisance estimators to be treated as non-random when constructing the orthogonal signals $\hat{U}$ and $\hat{T}$, which helps the debiasing of the target estimator.

In the second stage, the estimated orthogonal signals $\hat{U}$ and $\hat{T}$ are plugged-in to the DSR estimator:
\begin{align}
    \hat{\beta}=\hat{Q}^{-1}E_N[p_i\hat{U}_i],
\end{align}
where we redefine $Q$ and $\hat{Q}$ as $Q=E[pp'T_0]$ and $\hat{Q}=E_N[p_ip_i'\hat{T}_i]$, respectively.
The asymptotic covariance matrix of the OSR estimator is
\begin{align*}
    \Omega:=Q^{-1}E[pp'(\varepsilon_U+\theta_0\varepsilon_T)^2]Q^{-1}=\Omega_{k},
\end{align*}
where $\varepsilon_U:=U_0-\nu_0(V)$ and $\varepsilon_T:=T_0-\zeta_0(V)$ are the stochastic errors.
The sample analogue of the asymptotic variance is
\begin{align*}
    \hat{\Omega}:=\hat{Q}^{-1}E_N[p_ip_i'(\hat{U}_i-\hat{T}_i\hat{\theta}_i)^2]\hat{Q}^{-1}=\hat{\Omega}_{Nk},
\end{align*}
where $\hat{\theta}(v)=p(v)'\hat{\beta}$ is the OSR estimator of the target function $\theta_0$.

For statistical inference, denote the standard deviation of the OSR estimator at $v$ and its sample analogue as $\sigma(v)=\sqrt{p(v)'\Omega p(v)}$ and $\hat{\sigma}(v)=\sqrt{p(v)'\hat{\Omega}p(v)}$, respectively.
Then, we can write $t$-statistic as
\begin{align}
    \tau_N(v):=\frac{\hat{\theta}(v)-\theta_0(v)}{\hat{\sigma}(v)/\sqrt{N}},\label{eq:tstat}
\end{align}
and the bootstrapped $t$-statistic as
\begin{align}
    \hat{\tau}_N^b(v):=\frac{p(v)'\hat{\Omega}^{1/2}}{\hat{\sigma}(v)}\mathcal{N}_k^b,\label{eq:bootstrapt}
\end{align}
where $\mathcal{N}_k^b$ is a bootstrap draw from $N(0,I_k)$.
We will show that the Gaussian bootstrap is valid for the OSR estimator in the next section.
We can calculate the confidence bands for $\theta_0(v)$ as
\begin{align}
    [\underline{i}_N(v),\overline{i}_N(v)]:=[p(v)'\hat{\beta}-c_N(1-\delta)\hat{\sigma}(v)/\sqrt{N},p(v)'\hat{\beta}+c_N(1-\delta)\hat{\sigma}(v)/\sqrt{N}],\label{eq:confband}
\end{align}
where the critical value $c_N(1-\delta)$ is the $(1-\delta)$-quantile of $N(0,1)$ for the pointwise bands, and the $(1-\delta)$-quantile of $\sup_{v\in\mathcal{V}}|\hat{\tau}_N^b(v)|$ for the uniform bands.

\section{Main Theoretical Results}\label{sec:asymtheory}
Here, we present the main theoretical results on the asymptotic properties of the proposed OSR estimator.
These results heavily rely on the theoretical analyses in \cite{belloni2015some} and \cite{semenova2021debiased}.

First, we set up some additional notations.
Define a random variable that takes the value of the stochastic errors with the larger absolute value:
\begin{align*}
\varepsilon:=\underset{\{\varepsilon_{U},\varepsilon_{T}\}}{\mathrm{argmax}}\{|\varepsilon_{U}|,|\varepsilon_{T}|\},
\end{align*}
and the lower and upper bounds on its second moments:
\begin{align*}
    \underline{\varepsilon}^2(v):=\inf_{v\in\mathcal{V}}E[\varepsilon^2|V=v],\hspace{5mm} \overline{\varepsilon}^2(v):=\sup_{v\in\mathcal{V}}E[\varepsilon^2|V=v].
\end{align*}
We denote the best linear approximation to the target function $\theta_0(v)$ by $\theta_k(v):=p(v)'\beta_k$, where
\begin{align*}
    \beta_k:=\underset{b\in\mathbb{R}^k}{\mathrm{argmin}}E[(\theta_0(V)-p(V)'b)^2],
\end{align*}
and the approximation error by $r(v):=\theta_0(v)-\theta_k(v)$.
We use the same notation $\|\cdot\|$ for the $\ell_2$-norm of a vector and the operator norm of a matrix.
The notation $a\lesssim b$ is used when $a\leq cb$ for some positive constant $c$ which does not depend on $N$, and $a\lesssim_P b$ is used when $a=O_P(b)$.
$a\land b$ and $a\lor b$ mean $\min\{a,b\}$ and $\max\{a,b\}$, respectively.
The scaled and demeaned sample average for some function $f$ is denoted by
\begin{align*}
    \mathbb{G}_N[f(v_i)]:=\sqrt{N}E_N[f(v_i)-E[f(V)]].
\end{align*}

\subsection{Pointwise Limit Theory}
The following assumptions are the collection of regularity conditions on the covariates distribution, basis functions, error terms, nuisance estimators, and the target function.
We use these assumptions to establish the pointwise asymptotic theory for the orthogonal estimator.
\begin{assumption}[Identification]\label{as:basis_eigen}
All eigenvalues of $E[p(V)p(V)']$ are bounded above and away from zero uniformly over $k$.
\end{assumption}
\begin{assumption}[Norm of Basis]\label{as:basis_rate}
The sup-norm of the basis functions $\xi_k:=\sup_{v\in\mathcal{V}}\|p(v)\|$ grows sufficiently slow:
\begin{align*}
    \sqrt{\frac{\xi_k^2\log N}{N}}=o(1).
\end{align*}
\end{assumption}
\begin{assumption}[Approximation Error]\label{as:approxerror}
There exists a sequence of finite constants $l_m,r_m$ such that the $L^2$ and sup norms of the approximation error are bounded as follows:
\begin{align*}
    \|r\|_{P,2}:=\sqrt{\int r(v)^2dP(v)}\lesssim r_k,\quad  \|r\|_{P,\infty}:=\sup_{v\in\mathcal{V}}|r(v)|\lesssim l_kr_k.
\end{align*}
\end{assumption}
\begin{assumption}[Stochastic Errors]\label{as:samplingerror}
The second moment of the sampling error conditional on $V$ is bounded from above: $\overline{\varepsilon}^2=O(1)$.
\end{assumption}

\begin{remark}[Plausibility of the Regularity Conditions]
Assumption \ref{as:basis_eigen} through \ref{as:samplingerror} are the regularity conditions widely used in the series estimation literature.
They are plausible enough to hold in many practical situations and for various basis functions.
Indeed, the bounds on $\xi_k$, $r_k$ and $l_k$ have been intensively investigated, and the results show that Assumption \ref{as:basis_rate} and \ref{as:approxerror} are not too restrictive.
\end{remark}

\begin{assumption}[Numerator and Denominator Functions]\label{as:boundedfunc}
The numerator function $\nu_0(v)$ and denominator function $\zeta_0(v)$ are bounded uniformly over $v\in\mathcal{V}$.
Moreover, $\zeta_0(v)\neq0$ for all $v\in\mathcal{V}$.
\end{assumption}
The uniform boundedness of the functions $\nu_0$ and $\zeta_0$ is anyway included in the low-level conditions stated in Section \ref{sec:app_dml}.
Assumption \ref{as:boundedfunc} therefore imposes virtually no extra restriction on these functions.

\begin{assumption}[Small Bias in Nuisance Estimators]\label{as:smallbias}
For all $g\in[G]$, the nuisance estimate $\hat{\eta}_g$, obtained by cross-fitting, belongs to a shrinking neighborhood of $\eta_0$, denoted by $\mathcal{S}_N$.
Uniformly over $\mathcal{S}_N$, the following mean square convergence holds:
\begin{align*}
    B_N&:=\sqrt{N}\left(\sup_{\eta\in\mathcal{S}_N}\|E[p(V)(U-U_0)]\|\lor\sup_{\eta\in\mathcal{S}_N}\|E[p(V)(T-T_0)]\|\right)=o(1),\\
    \Lambda_N&:=\sup_{\eta\in\mathcal{S}_N}E\left[\|p(V)(U-U_0)\|^2\right]^{1/2}\lor\sup_{\eta\in\mathcal{S}_N}E\left[\|p(V)(T-T_0)\|^2\right]^{1/2}=o(1).
\end{align*}
\end{assumption}

\begin{remark}[Sufficient Conditions for the Small Bias]
The low-level sufficient conditions to satisfy Assumption \ref{as:smallbias} in the specific examples are presented in Section \ref{sec:app_dml}.
We can use, for example, deep neural nets for regression \citep{schmidt2020nonparametric,farrell2021deep,kohler2021on} and for classification \citep{kim2021fast,bos2022convergence}, and random forest for regression \citep{wager2015adaptive,syrgkanis2020estimation} and for classification \citep{gao2022towards,peng2022rates}.
\end{remark}

The following is the result on the pointwise convergence rate and linearization, which extends the results obtained in the standard regression setting \citep{belloni2015some,semenova2021debiased}.
\begin{lemma}[Pointwise Convergence Rate and Linearization]\label{le:prate}
Under Assumption \ref{as:basis_eigen}-\ref{as:boundedfunc}, the following statements hold:

\noindent(a) The $\ell_2$-norm of the estimation error is bounded as:
\begin{align*}
    \|\hat{\beta}-\beta_k\|\lesssim_P\sqrt{\frac{k}{N}}+\left(\sqrt{\frac{k}{N}}l_kr_k\land\frac{\xi_kr_k}{\sqrt{N}}\right),
\end{align*}
which implies the same bound on MSE of the estimate $\hat{\theta}$ against the pseudo-target function $\theta_k$:
\begin{align*}
    E_N\left[\left(\hat{\theta}(v_i)-\theta_k(v_i)\right)^2\right]^{1/2}\lesssim_P\sqrt{\frac{k}{N}}+\left(\sqrt{\frac{k}{N}}l_kr_k\land\frac{\xi_kr_k}{\sqrt{N}}\right).
\end{align*}

\noindent(b) For any $\alpha\in\mathcal{A}^{k-1}:=\{\alpha\in\mathbb{R}^k:\|\alpha\|=1\}$, the estimator $\hat{\beta}$ is approximately linear:
\begin{align*}
    \sqrt{N}\alpha'(\hat{\beta}-\beta_k)=\alpha'Q^{-1}\mathbb{G}_N[p_i(\varepsilon_{Ui}+\theta_{0i}\varepsilon_{Ti})]+R_{N}(\alpha),
\end{align*}
where the remainder term $R_{N}(\alpha)$ is bounded as:
\begin{align*}
    R_{N}(\alpha)\lesssim_P B_N+\Lambda_N+l_kr_k+\sqrt{\frac{\xi_k^2\log N}{N}}\left(1+\left(l_kr_k\sqrt{k}\land\xi_kr_k\right)\right).
\end{align*}
\end{lemma}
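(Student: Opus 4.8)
The plan is to follow the least-squares series analysis of \cite{belloni2015some} and its orthogonalized counterpart in \cite{semenova2021debiased}, with two modifications forced by the ratio structure: the design matrix is the \emph{weighted} Gram matrix $\hat Q=E_N[p_ip_i'\hat T_i]$ rather than $E_N[p_ip_i']$, and the influence function carries two error components, $p_i(\varepsilon_{Ui}+\theta_{0i}\varepsilon_{Ti})$. First I would record the reduction $\hat Q(\hat\beta-\beta_k)=E_N[p_i(\hat U_i-\hat T_i\theta_k(v_i))]$, obtained from $\hat Q\hat\beta=E_N[p_i\hat U_i]$ and $p_i'\beta_k=\theta_k(v_i)$. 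Using $\nu_0=\zeta_0\theta_0$ and $\theta_k=\theta_0-r$, I would expand the integrand into a leading orthogonal score $p_i(\varepsilon_{Ui}+\theta_{0i}\varepsilon_{Ti})$, approximation-error terms carrying the factor $r(v_i)$ (the terms $p_i\zeta_0(v_i)r(v_i)$, $p_i\varepsilon_{Ti}r(v_i)$, and $p_i(\hat T_i-T_{0i})r(v_i)$), and nuisance-error terms $p_i(\hat U_i-U_{0i})$ and $p_i\theta_{0i}(\hat T_i-T_{0i})$, and then bound the $E_N$-average of each.

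The next step is to control the weighted design. Since $Q=E[pp'T_0]=E[pp'\zeta_0]$, Assumptions \ref{as:basis_eigen} and \ref{as:boundedfunc} make $Q$ well-conditioned, $\|Q^{-1}\|\lesssim1$. Splitting $\hat Q-Q$ into a sampling part $E_N[p_ip_i'T_{0i}]-E[pp'T_0]$, which is $\lesssim_P\sqrt{\xi_k^2\log N/N}=o(1)$ by a matrix concentration inequality and Assumption \ref{as:basis_rate}, and a nuisance part $E_N[p_ip_i'(\hat T_i-T_{0i})]$, which is $o_P(1)$ because on $\mathcal{S}_N$ the reweighting is uniformly small and, by cross-fitting, the corresponding summand is a frozen function given $J_g^c$, I get $\|\hat Q^{-1}\|\lesssim_P1$. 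For the term-by-term bounds: the leading score has mean zero and, using $\theta_0=\nu_0/\zeta_0$ bounded (Assumption \ref{as:boundedfunc}) and $\overline{\varepsilon}^2=O(1)$ (Assumption \ref{as:samplingerror}), its second moment is $\lesssim N^{-1}\mathrm{tr}(E[pp'])\lesssim k/N$, so it is $\lesssim_P\sqrt{k/N}$; the approximation-error terms, bounded by Cauchy--Schwarz run twice---once through $\|p\|\le\xi_k$ with $\|r\|_{P,2}\lesssim r_k$, once through $E_N[\|p_i\|^2]\lesssim_Pk$ with $\|r\|_{P,\infty}\lesssim l_kr_k$---and using $E[pr]=0$ (Assumption \ref{as:approxerror}), yield the $\sqrt{k/N}\,l_kr_k\land\xi_kr_k/\sqrt N$ bound; the nuisance-error terms, handled as below, contribute $\lesssim_P(B_N+\Lambda_N)/\sqrt N$ under Assumption \ref{as:smallbias}, which is $o_P(\sqrt{k/N})$. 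Multiplying through by $\|\hat Q^{-1}\|\lesssim_P1$ gives (a), and the MSE bound is immediate from $E_N[(\hat\theta(v_i)-\theta_k(v_i))^2]=(\hat\beta-\beta_k)'E_N[p_ip_i'](\hat\beta-\beta_k)\le\|E_N[p_ip_i']\|\,\|\hat\beta-\beta_k\|^2$.

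For part (b) I would fix $\alpha\in\mathcal{A}^{k-1}$, start from the same reduction, and replace $\hat Q^{-1}$ by $Q^{-1}$ via $\hat Q^{-1}=Q^{-1}-\hat Q^{-1}(\hat Q-Q)Q^{-1}$, extracting $\alpha'Q^{-1}\mathbb{G}_N[p_i(\varepsilon_{Ui}+\theta_{0i}\varepsilon_{Ti})]$ (already centered) as the linear term. Everything else is collected into $R_N(\alpha)$: the nuisance-error terms give $B_N+\Lambda_N$; the approximation-error terms, after the $\sqrt N$ rescaling, give $l_kr_k$ together with lower-order $\land$-type pieces; and the $\hat Q$-substitution error, of the form $\alpha'\hat Q^{-1}(\hat Q-Q)Q^{-1}\times\{\text{centered averages}\}$, is bounded by $\|\hat Q-Q\|\lesssim_P\sqrt{\xi_k^2\log N/N}$ multiplied by $1$ along the score direction (whose projected variance is $O(1)$) and by $l_kr_k\sqrt k\land\xi_kr_k$ along the approximation directions, which reproduces the remainder $B_N+\Lambda_N+l_kr_k+\sqrt{\xi_k^2\log N/N}(1+(l_kr_k\sqrt k\land\xi_kr_k))$.

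The main obstacle I anticipate is the joint treatment of the nuisance-error terms in the design step and in the integrand: $\hat\eta_g$ enters both the weights of $\hat Q$ and the signals $\hat U_i,\hat T_i$, so I would condition on $J_g^c$ to freeze $\hat\eta_g$ and then invoke the Neyman orthogonality of the signals, which kills the first-order effect of $\hat\eta_g-\eta_0$ on the conditional means of $E_N[p_i(\hat U_i-U_{0i})]$ and $E_N[p_i\theta_{0i}(\hat T_i-T_{0i})]$; only the second-order and stochastic remainders survive, and these are exactly what $B_N$ and $\Lambda_N$ in Assumption \ref{as:smallbias} bound. The second delicate point is extracting the sharp $\land$-form of the rates, which, as in \cite{belloni2015some}, requires carrying each approximation-error bound both through sup-norms and through the trace of the Gram matrix and retaining the smaller of the two.
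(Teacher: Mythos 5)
Your proposal follows essentially the same route as the paper's proof: the same reduction $\hat Q(\hat\beta-\beta_k)=E_N[p_i(\hat U_i-\hat T_i\theta_{ki})]$, the same decomposition into the orthogonal score, approximation-error terms and first-stage terms, the same matrix-LLN/eigenvalue control of $\hat Q$, the small-bias assumption (via the cross-fitting lemma) for the nuisance pieces, and the same Belloni--Semenova bounds assembled into $R_N(\alpha)$. The one point to watch is that the centering you invoke holds for $E_N[p_ir_i]$ but not for the actual term $E_N[p_i\zeta_{0i}r_i]$ (since $E[p\zeta_0r]\neq0$ in general); the paper handles this by reducing the $\zeta_0$-weighted term to the unweighted one via boundedness, which is precisely why the extra $l_kr_k$ term survives in the remainder of part (b).
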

Lemma \ref{le:prate} states that the OSR converges to the pseudo-target function at the same rate as in the standard regression setting under mild conditions.
However, the linearization result is slightly different from one obtained in \cite{semenova2021debiased}, where the linearization is possible even when the term $p_ir_i$ is included because $E[pr]=0$.
In the CEFR problems, we have the term $p_i\zeta_{0i}r_i$ instead of $p_ir_i$, and $E[p\zeta_0r]\neq0$ in general.

The following theorem establishes the pointwise normality of the OSR estimator with additional conditions to satisfy Lindeberg's condition for the central limit theorem.
\begin{theorem}[Pointwise Normality of the OSR Estimator]\label{th:pnormal}
Suppose Assumption \ref{as:basis_eigen}-\ref{as:boundedfunc} hold.
In addition, suppose (i) $R_N(\alpha)=o(1)$, (ii) $1\lesssim\underline{\varepsilon}^2$ and (iii) $\sup_{v\in\mathcal{V}}E[\varepsilon^21_{|\varepsilon|>M}|V=v]\rightarrow0$ as $M\rightarrow\infty$.
Then, for any $\alpha\in\mathcal{A}^{k-1}$, OSR estimator is asymptotically normal:
\begin{align*}
    \lim_{N\rightarrow\infty}\sup_{e\in\mathbb{R}}\left|P\left(\frac{\alpha'(\hat{\beta}-\beta_k)}{\sqrt{\alpha'\Omega\alpha/N}}<e\right)-\Phi(e)\right|=0.
\end{align*}
Moreover, for any $v_0=v_{0,N}\in\mathcal{V}$, the estimator $\hat{\theta}(v_0)$ against the pseudo-target value $\theta_k(v_0)$ is asymptotically normal:
\begin{align*}
    \lim_{N\rightarrow\infty}\sup_{e\in\mathbb{R}}\left|P\left(\frac{\hat{\theta}(v_0)-\theta_k(v_0)}{\sigma(v_0)/\sqrt{N}}<e\right)-\Phi(e)\right|=0,
\end{align*}
and if the approximation error is negligible relative to the estimation error, namely $r(v_0)=o(\sigma(v_0)/\sqrt{N})$, then $\hat{\theta}(v_0)$ is also asymptotically normal around the true value $\theta_0(v_0)$:
\begin{align*}
    \lim_{N\rightarrow\infty}\sup_{e\in\mathbb{R}}\left|P\left(\frac{\hat{\theta}(v_0)-\theta_0(v_0)}{\sigma(v_0)/\sqrt{N}}<e\right)-\Phi(e)\right|=0.
\end{align*}
\end{theorem}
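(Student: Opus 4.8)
The plan is to obtain all three displays from the linearization in Lemma~\ref{le:prate}(b) together with a triangular-array central limit theorem. Under condition~(i) the remainder is asymptotically negligible, so that $\sqrt{N}\alpha'(\hat{\beta}-\beta_k)=\alpha'Q^{-1}\mathbb{G}_N[p_i(\varepsilon_{Ui}+\theta_{0i}\varepsilon_{Ti})]+o_P(1)$. Writing $w_i:=\alpha'Q^{-1}p_i(\varepsilon_{Ui}+\theta_{0i}\varepsilon_{Ti})$, the leading term equals $N^{-1/2}\sum_{i=1}^N w_i$, because the centering inside $\mathbb{G}_N$ vanishes: $E[p(V)(\varepsilon_U+\theta_0\varepsilon_T)]=E[p(V)E[\varepsilon_U+\theta_0\varepsilon_T|V]]=0$ since $E[\varepsilon_U|V]=E[\varepsilon_T|V]=0$ and $\theta_0$ is a function of $V$. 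The $w_i$ are i.i.d.\ within each sample size, mean zero, and $\mathrm{Var}(N^{-1/2}\sum_i w_i)=\alpha'\Omega\alpha$ by the definition of $\Omega$. It therefore suffices to show that $N^{-1/2}\sum_i w_i/\sqrt{\alpha'\Omega\alpha}$ is asymptotically $N(0,1)$, uniformly over $\alpha\in\mathcal{A}^{k-1}$; the first display then follows by Slutsky's lemma, and the convergence of distribution functions is uniform because the limit law $\Phi$ is continuous.

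For the CLT I would first record that the normalizer is non-degenerate, $\alpha'\Omega\alpha\gtrsim1$ uniformly over $\alpha\in\mathcal{A}^{k-1}$ and $N$. This combines the eigenvalue bounds on $E[p(V)p(V)']$ (Assumption~\ref{as:basis_eigen}) and on $Q=E[p(V)p(V)'T_0]=E[p(V)p(V)'\zeta_0(V)]$ (Assumptions~\ref{as:basis_eigen} and \ref{as:boundedfunc}) with a lower bound $\inf_v E[(\varepsilon_U+\theta_0\varepsilon_T)^2|V=v]\gtrsim\underline{\varepsilon}^2\gtrsim1$ obtained from condition~(ii) and the definition of $\varepsilon$. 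The CLT itself I would prove by truncation rather than by verifying Lindeberg's condition directly, since $\xi_k$ is allowed to grow and $\varepsilon$ is only assumed uniformly square-integrable. Fix a constant $M$ and split $w_i=w_i^{(M)}+w_i^{(M,c)}$ according to $\{|\varepsilon|\le M\}$ and $\{|\varepsilon|>M\}$. The recentered bounded part has summands of magnitude $\lesssim\xi_kM=o(\sqrt{N})$ by Assumption~\ref{as:basis_rate}, so Lindeberg's condition is satisfied vacuously for all large $N$ and this part is asymptotically normal with variance within $O(\delta_M)$ of $\alpha'\Omega\alpha$, where $\delta_M:=\sup_v E[\varepsilon^21_{|\varepsilon|>M}|V=v]$. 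The recentered tail part has variance $\lesssim\delta_M$ uniformly in $\alpha$ and $N$, hence is $O_P(\sqrt{\delta_M})$ by Chebyshev's inequality. Sending $N\to\infty$ and then $M\to\infty$, with $\delta_M\to0$ by condition~(iii), gives $N^{-1/2}\sum_i w_i/\sqrt{\alpha'\Omega\alpha}\to N(0,1)$ in distribution, uniformly over $\alpha\in\mathcal{A}^{k-1}$.

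The remaining displays follow quickly. Taking $\alpha=\alpha_N:=p(v_{0,N})/\|p(v_{0,N})\|\in\mathcal{A}^{k-1}$, one has the identity $\alpha_N'(\hat{\beta}-\beta_k)/\sqrt{\alpha_N'\Omega\alpha_N/N}=(\hat{\theta}(v_{0,N})-\theta_k(v_{0,N}))/(\sigma(v_{0,N})/\sqrt{N})$, and since the previous step is uniform over $\alpha$ it applies along this drifting sequence, yielding the second display. For the third, write $\hat{\theta}(v_0)-\theta_0(v_0)=(\hat{\theta}(v_0)-\theta_k(v_0))-r(v_0)$ and divide by $\sigma(v_0)/\sqrt{N}$; the first term is asymptotically $N(0,1)$ by the second display, while the second term equals $\sqrt{N}\,r(v_0)/\sigma(v_0)=o(1)$ under the hypothesis $r(v_0)=o(\sigma(v_0)/\sqrt{N})$, so Slutsky's lemma delivers the claim.

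The main obstacle is the non-degeneracy claim $\inf_{\alpha,N}\alpha'\Omega\alpha>0$: one must turn the lower bound on the second moment of the \emph{larger} error $\varepsilon$ provided by condition~(ii) into a lower bound on the variance of the combination $\varepsilon_U+\theta_0\varepsilon_T$, exploiting the independence or sign structure of the error terms in the example at hand, and then propagate it through the $Q^{-1}$-sandwich using the eigenvalue conditions. A secondary point is that every estimate in the truncation argument must be made uniform over $\alpha\in\mathcal{A}^{k-1}$, which is precisely what licenses the drifting evaluation point $v_{0,N}$; the truncation device is also what allows the mere uniform integrability of condition~(iii) to substitute for a $(2+\delta)$-th moment bound even though $\xi_k\to\infty$.
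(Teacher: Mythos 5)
Your proposal takes essentially the same route as the paper: the paper's proof of Theorem \ref{th:pnormal} simply invokes the linearization of Lemma \ref{le:prate}(b) and then transplants the argument of Theorem 4.2 in \cite{belloni2015some} to the leading term $\alpha'Q^{-1}\mathbb{G}_N[p_i(\varepsilon_{Ui}+\theta_{0i}\varepsilon_{Ti})]$, which is exactly your decomposition; your truncation at level $M$ is only a cosmetic variant of the Lindeberg verification carried out there (both consume condition (iii) in the same way, and your bound $\xi_k M=o(\sqrt{N})$ from Assumption \ref{as:basis_rate} is the same estimate that kills the Lindeberg term), and your treatment of the drifting point $v_{0,N}$ via uniformity over $\alpha\in\mathcal{A}^{k-1}$ and of the bias term via $\sqrt{N}\,r(v_0)/\sigma(v_0)=o(1)$ plus Slutsky matches the cited proof. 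The one substantive issue you flag is real but is not a gap relative to the paper: condition (ii) lower-bounds $\inf_v E[\varepsilon^2|V=v]$ for $\varepsilon$ the larger of the two errors, which indeed does not by itself exclude cancellation in $\varepsilon_U+\theta_0\varepsilon_T$, yet the paper's proof-by-citation silently treats condition (ii) as delivering exactly the nondegeneracy $\inf_v E[(\varepsilon_U+\theta_0\varepsilon_T)^2|V=v]\gtrsim 1$ (in \cite{belloni2015some} the analogous condition bounds the conditional variance of the single regression error directly), so your argument is not missing anything the paper supplies; to close it one must either read condition (ii) as a bound on the combined error's conditional second moment or impose additional structure on $(\varepsilon_U,\varepsilon_T)$, as you suggest.
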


\subsection{Uniform Limit Theory}
Stronger conditions than needed for the pointwise results are required to establish the uniform asymptotic theory for the OSR estimator.
The following conditions control the behaviour of the stochastic errors, basis functions and nuisance errors more strictly.

\begin{assumption}[Tail Bounds]\label{as:tailbound}
There exists a constant $m>2$ such that the upper bound of the $m$-th moment of $|\varepsilon_U|$ and $|\varepsilon_T|$ is bounded conditional on $V$:
\begin{align*}
    \sup_{v\in\mathcal{V}}E[|\varepsilon|^m|V=v]\lesssim1.
\end{align*}
\end{assumption}

Denote by $\alpha(v):=p(v)/\|p(v)\|$ the normalized value of the basis $p(v)$.
Define the Lipschitz constant for $\alpha(v)$ as:
\begin{align*}
    \xi_k^L=\sup_{v,\tilde{v}\in\mathcal{V},v\neq\tilde{v}}\frac{\|\alpha(v)-\alpha(\tilde{v})\|}{\|v-\tilde{v}\|}.
\end{align*}
\begin{assumption}[Well-Behaved Basis]\label{as:wellbehaved}
Basis functions are well-behaved, namely
(i) $(\xi_k^L)^{2m/(m-2)}\allowbreak\log N/N\lesssim1$ and
(ii) $\log\xi_k^L\lesssim\log k$
for the same $m$ as in Assumption \ref{as:tailbound}.
\end{assumption}

\begin{assumption}[Condition for Matrix Estimation]\label{as:matrixest}
Uniformly over $\mathcal{S}_N$, the following convergence holds:
\begin{align*}
    \kappa_N^1&:=\sup_{\eta\in\mathcal{S}_N}E\left[\max_{1\leq i\leq N}|U_i-U_{0i}|\right]\lor\sup_{\eta\in\mathcal{S}_N}E\left[\max_{1\leq i\leq N}|T_i-T_{0i}|\right]=o(1),\\
    \kappa_N&:=\sup_{\eta\in\mathcal{S}_N}E\left[\max_{1\leq i\leq N}(U_i-U_{0i})^2\right]^{1/2}\lor\sup_{\eta\in\mathcal{S}_N}E\left[\max_{1\leq i\leq N}(T_i-T_{0i})^2\right]^{1/2}=o(1).
\end{align*}
\end{assumption}

The following lemma is about the uniform convergence rate and linearization of the OSR estimator.
\begin{lemma}[Uniform Rate and Uniform Linearization]\label{le:urate}
Suppose Assumption \ref{as:basis_eigen}-\ref{as:wellbehaved} hold.
Then, the following statements hold.

\noindent(a) The OSR estimator is approximately linear uniformly over $\mathcal{V}$:
\begin{align*}
    \sqrt{N}\alpha(v)'(\hat{\beta}-\beta_k)=\alpha(v)'Q^{-1}\mathbb{G}_N[p_i(\varepsilon_{Ui}+\theta_{0i}\varepsilon_{Ti})]+R_N(\alpha(v)),
\end{align*}
where the remainder term $R_N(\alpha(v))$ obeys
\small\begin{align*}
    \sup_{v\in\mathcal{V}}R_N(\alpha(v))&\lesssim_P B_N+\Lambda_N+l_kr_k\sqrt{\log N}+\sqrt{\frac{\xi_k^2\log N}{N}}\left(N^{1/m}\sqrt{\log N}+\left[l_kr_k\sqrt{k}\land\xi_kr_k\right]\right)=:\overline{R}_N.
\end{align*}\normalsize

\noindent(b) The OSR estimator $\hat{\theta}$ of the pseudo-target function $\theta_k$ converges uniformly over $\mathcal{V}$ at the following rate:
\begin{align*}
    \sup_{v\in\mathcal{V}}\left|\hat{\theta}(v)-\theta_k(v)\right|\lesssim_P\frac{\xi_k}{\sqrt{N}}\left(\sqrt{\log N}+\overline{R}_N\right).
\end{align*}
\end{lemma}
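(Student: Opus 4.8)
The plan is to derive part (b) as a direct consequence of the uniform linearization in part (a) together with the regularity conditions already assembled, following the same template as the corresponding result in \cite{belloni2015some}. First I would write, for any $v\in\mathcal{V}$,
\begin{align*}
    \hat{\theta}(v)-\theta_k(v)=p(v)'(\hat{\beta}-\beta_k)=\|p(v)\|\cdot\alpha(v)'(\hat{\beta}-\beta_k),
\end{align*}
so that $\sup_{v\in\mathcal{V}}|\hat{\theta}(v)-\theta_k(v)|\leq\xi_k\sup_{v\in\mathcal{V}}|\alpha(v)'(\hat{\beta}-\beta_k)|$, and it remains to bound $\sup_{v}|\alpha(v)'(\hat{\beta}-\beta_k)|$ by $N^{-1/2}(\sqrt{\log N}+\overline{R}_N)$ up to a constant. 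Invoking part (a), the supremum splits as
\begin{align*}
    \sqrt{N}\sup_{v\in\mathcal{V}}\left|\alpha(v)'(\hat{\beta}-\beta_k)\right|\leq\sup_{v\in\mathcal{V}}\left|\alpha(v)'Q^{-1}\mathbb{G}_N[p_i(\varepsilon_{Ui}+\theta_{0i}\varepsilon_{Ti})]\right|+\sup_{v\in\mathcal{V}}\left|R_N(\alpha(v))\right|,
\end{align*}
and the second term is $\lesssim_P\overline{R}_N$ by part (a), so the work concentrates on the leading empirical-process term.

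For the leading term I would show $\sup_{v\in\mathcal{V}}|\alpha(v)'Q^{-1}\mathbb{G}_N[p_i(\varepsilon_{Ui}+\theta_{0i}\varepsilon_{Ti})]|\lesssim_P\sqrt{\log N}$. The strategy is the standard one for linear sieve processes: since $Q=E[pp'T_0]$ has eigenvalues bounded above and away from zero under Assumption \ref{as:basis_eigen} combined with Assumption \ref{as:boundedfunc} (so $\|Q^{-1}\|\lesssim1$), the vector $a(v):=Q^{-1}\alpha(v)$ has $\|a(v)\|\lesssim1$ uniformly, and I would bound the process by first controlling its value on a finite $\epsilon$-net of $\mathcal{V}$ via a maximal inequality (Bernstein-type, using Assumption \ref{as:tailbound} to handle the heavier tails of $\varepsilon_U,\varepsilon_T$ and the bound $\sup_v E[\varepsilon^2\mid V=v]=\overline{\varepsilon}^2=O(1)$ for the variance), then extending to all of $\mathcal{V}$ by a Lipschitz/stochastic-equicontinuity argument that uses the Lipschitz constant $\xi_k^L$ and the growth restrictions in Assumption \ref{as:wellbehaved} (parts (i) and (ii)) to ensure the oscillation between net points is negligible. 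The polynomial log-cardinality of the net, controlled by $\log\xi_k^L\lesssim\log k\lesssim\log N$, together with the union bound, produces the $\sqrt{\log N}$ factor; the moment condition $m>2$ and condition (i) of Assumption \ref{as:wellbehaved} guarantee the truncation error from the heavy tails does not dominate. This is essentially Lemma 4.1 / the uniform-rate argument in \cite{belloni2015some}, adapted by replacing $E[pp']$ with $Q$ and $\varepsilon$ with the combined error $\varepsilon_U+\theta_0\varepsilon_T$.

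Combining the two bounds gives $\sqrt{N}\sup_{v}|\alpha(v)'(\hat{\beta}-\beta_k)|\lesssim_P\sqrt{\log N}+\overline{R}_N$, and multiplying through by $\xi_k/\sqrt{N}$ yields the claimed rate. The main obstacle I anticipate is the stochastic-equicontinuity step: one must verify that the heavier-than-Gaussian tails of $\varepsilon_U$ and $\varepsilon_T$ (only $m$ finite moments) do not spoil the chaining between net points, which is exactly why Assumption \ref{as:wellbehaved}(i) is phrased with the exponent $2m/(m-2)$ — this is the precise trade-off needed so that the truncation level $N^{1/m}$ appearing in $\overline{R}_N$ and the Lipschitz modulus $\xi_k^L$ are jointly compatible. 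A secondary technical point is that the error in $\hat{\theta}$ is measured against the population pseudo-target $\theta_k$ rather than a sample projection, so I must be careful that the $p_i\zeta_{0i}r_i$ term (which, unlike $p_ir_i$ in the pure-regression case, is not mean-zero) has already been absorbed into $R_N$ in part (a) and need not be revisited here.
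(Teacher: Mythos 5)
Your argument for part (b) is essentially the paper's: given the uniform linearization in (a), bound the leading process $\sup_{v}|\alpha(v)'Q^{-1}\mathbb{G}_N[p_i(\varepsilon_{Ui}+\theta_{0i}\varepsilon_{Ti})]|$ by $\sqrt{\log N}$ via a maximal inequality plus chaining with the Lipschitz modulus $\xi_k^L$ under Assumptions \ref{as:tailbound}--\ref{as:wellbehaved}, then multiply by $\xi_k/\sqrt{N}$; the paper does exactly this by deferring to the proof of Theorem 4.3 in \cite{belloni2015some}, with $E[pp']$ replaced by $Q$ and $\varepsilon$ by the combined error, which is legitimate because Lemma \ref{le:regularQ} gives $\|Q^{-1}\|\lesssim 1$ and Assumptions \ref{as:samplingerror} and \ref{as:boundedfunc} bound the conditional second moment of $\varepsilon_U+\theta_0\varepsilon_T$.

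The genuine gap is that you never prove part (a), which is the substantive content of the lemma and the place where this paper departs from the pure-regression setting. You simply ``invoke part (a),'' but the statement you were asked to prove includes it. The paper's proof decomposes $E_N[p_i(\hat U_i-\hat T_i\theta_{ki})]$ into six pieces: two first-stage error terms involving $\hat U_i-U_{0i}$ and $\hat T_i-T_{0i}$, the stochastic-error terms $E_N[p_i\varepsilon_{Ui}]$ and $E_N[p_i\theta_{0i}\varepsilon_{Ti}]$, and two approximation-error terms $E_N[p_i\zeta_{0i}r_i]$ and $E_N[p_i\varepsilon_{Ti}r_i]$. The resulting remainder splits as $L_1(v)+\cdots+L_5(v)$, where $L_1(v)$ (the $(\hat Q^{-1}-Q^{-1})\mathbb{G}_N[p_i\varepsilon_i]$ term) is bounded by $N^{1/m}\sqrt{\xi_k^2\log^2N/N}$ via Lemma 4.2 of \cite{belloni2015some}; the first-stage terms $L_2(v),L_3(v)$ produce the $B_N+\Lambda_N$ contribution through Lemma \ref{le:analograte} (i.e., Assumption \ref{as:smallbias} together with the cross-fitting argument of \cite{semenova2021debiased}), combined with the matrix LLN (Lemma \ref{le:matlln}) and the invertibility of $Q$ (Lemma \ref{le:regularQ}); and the approximation-error terms $L_4(v),L_5(v)$ give $l_kr_k\sqrt{\log k}$ and $\sqrt{\xi_k^2\log N/N}\,(l_kr_k\sqrt{k}\land\xi_kr_k)$, the former requiring a uniform (not pointwise Chebyshev) bound precisely because $E[p\zeta_0 r]\neq 0$ here, unlike $E[pr]=0$ in the standard regression case. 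You correctly flag in passing that this non-centered term must be ``absorbed into $R_N$ in part (a),'' but acknowledging the issue is not the same as carrying out the decomposition and bounding each piece; without that work, neither the specific form of $\overline{R}_N$ (in particular the $B_N+\Lambda_N$ and $l_kr_k\sqrt{\log N}$ terms) nor the linearization identity itself is established.
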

Likewise in Lemma \ref{le:prate}, we cannot include the term $p_i\zeta_{0i}r_i$ in the linearization result.
However, its impact is negligible when the basis is sufficiently rich so that $l_kr_k=o(\sqrt{\log N})$.

The following theorem establishes a strong approximation of the OSR estimator's series process.
\begin{theorem}[Strong Approximation by a Gaussian Process]\label{th:strongaprx}
Suppose Assumption \ref{as:basis_eigen}-\ref{as:wellbehaved} hold with $m\geq3$, and let $\overline{a}_N$ be a sequence of positive numbers such that $\overline{a}_N^{-1}=o(1)$.
In addition, suppose (i) $\overline{R}_N=o(\overline{a}_N^{-1})$,
(ii) $1\lesssim\underline{\sigma}^2$ and
(iii) $m^4\overline{a}_N^6\xi_k^2(1+l_k^3r_k^3)^2\log^2N/N=o(1)$.
Then, for some $\mathcal{N}_k\sim N(0,I_k)$, 
\begin{align*}
    \frac{\hat{\theta}(v)-\theta_k(v)}{\sigma(v)/\sqrt{N}}=_d\frac{p(v)'\Omega^{1/2}}{\sigma(v)}\mathcal{N}_k+o_P(\overline{a}_N^{-1})\ in\ \ \ell^\infty(\mathcal{V}).
\end{align*}
In addition, if $\sqrt{N}\sup_{v\in\mathcal{V}}|r(v)|/\sigma(v)=o(\overline{a}_N^{-1})$,
\begin{align*}
    \frac{\hat{\theta}(v)-\theta_0(v)}{\sigma(v)/\sqrt{N}}=_d\frac{p(v)'\Omega^{1/2}}{\sigma(v)}\mathcal{N}_k+o_P(\overline{a}_N^{-1})\ in\ \ \ell^\infty(\mathcal{V}).
\end{align*}
\end{theorem}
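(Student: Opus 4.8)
The plan is to combine the uniform linearization from Lemma \ref{le:urate}(a) with a Yurinskii-type coupling for the leading linear term, in the spirit of the strong approximation arguments in \cite{belloni2015some} and \cite{semenova2021debiased}. Write $\hat{\theta}(v)-\theta_k(v) = p(v)'(\hat{\beta}-\beta_k)$ and divide by $\sigma(v)/\sqrt{N} = \sqrt{p(v)'\Omega p(v)}$. By Lemma \ref{le:urate}(a), after dividing by $\|p(v)\|$ and re-scaling, we have uniformly over $\mathcal{V}$
\begin{align*}
    \frac{\hat{\theta}(v)-\theta_k(v)}{\sigma(v)/\sqrt{N}} = \frac{p(v)'Q^{-1}\mathbb{G}_N[p_i(\varepsilon_{Ui}+\theta_{0i}\varepsilon_{Ti})]}{\sigma(v)} + \frac{\|p(v)\|}{\sigma(v)}R_N(\alpha(v)),
\end{align*}
and the second term is $O_P(\overline{R}_N)=o_P(\overline{a}_N^{-1})$ by hypothesis (i) together with Assumption \ref{as:basis_eigen} (which gives $\|p(v)\|/\sigma(v)\asymp 1$ using $1\lesssim\underline{\sigma}^2$ from hypothesis (ii) and $\overline{\varepsilon}^2=O(1)$). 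So it remains to couple the leading empirical process term with the stated Gaussian process.

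First I would identify the leading term as a normalized sum of independent mean-zero vectors: $p(v)'Q^{-1}\mathbb{G}_N[p_i(\varepsilon_{Ui}+\theta_{0i}\varepsilon_{Ti})] = N^{-1/2}\sum_{i=1}^N p(v)'Q^{-1}p_i(\varepsilon_{Ui}+\theta_{0i}\varepsilon_{Ti})$, whose covariance function (over $v$) is exactly $p(v)'\Omega p(\tilde v)$ by the definition $\Omega = Q^{-1}E[pp'(\varepsilon_U+\theta_0\varepsilon_T)^2]Q^{-1}$. I would then apply a Yurinskii coupling (as in Lemma 39 / Section 5 of \cite{belloni2015some}) to the $k$-dimensional sum $N^{-1/2}\sum_i Q^{-1}p_i(\varepsilon_{Ui}+\theta_{0i}\varepsilon_{Ti})$ to obtain a Gaussian vector with the same covariance $E[Q^{-1}pp'(\varepsilon_U+\theta_0\varepsilon_T)^2Q^{-1}] = \Omega$; writing that Gaussian vector as $\Omega^{1/2}\mathcal{N}_k$ and multiplying by $p(v)'/\sigma(v)$ gives the claimed process. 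The coupling error is controlled by a rate of the form $(\text{third-moment bound})^{1/3}$ times logarithmic/dimensional factors, which is where condition (iii), $m^4\overline{a}_N^6\xi_k^2(1+l_k^3r_k^3)^2\log^2 N/N=o(1)$, enters — the $\xi_k$ bounds $\|p_i\|$, the $m$-th moment bound from Assumption \ref{as:tailbound} controls the size of $\varepsilon_{Ui}+\theta_{0i}\varepsilon_{Ti}$ (using boundedness of $\theta_0$ from Assumption \ref{as:boundedfunc}), and the $(1+l_k^3r_k^3)$ factor appears because replacing $\varepsilon_U+\theta_0\varepsilon_T$ by its sample-estimated counterpart (recall $\hat\Omega$ uses $\hat U_i-\hat T_i\hat\theta_i$) picks up the sup-norm approximation error $\|r\|_{P,\infty}\lesssim l_k r_k$. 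I would also need to replace $\Omega$ by a version built from truncated errors (splitting off the event $\{|\varepsilon|>M\}$) to get a uniformly bounded summand before applying the coupling, absorbing the truncation remainder into $o_P(\overline{a}_N^{-1})$ using the $m$-th moment tail bound — this is where $m\geq 3$ is used.

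The second display follows immediately: $\hat{\theta}(v)-\theta_0(v) = (\hat{\theta}(v)-\theta_k(v)) - r(v)$, and dividing by $\sigma(v)/\sqrt{N}$ adds the term $-\sqrt{N}r(v)/\sigma(v)$, which is $o(\overline{a}_N^{-1})$ uniformly over $\mathcal{V}$ by the stated hypothesis $\sqrt{N}\sup_v|r(v)|/\sigma(v)=o(\overline{a}_N^{-1})$, so it is absorbed into the $o_P(\overline{a}_N^{-1})$ term.

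The main obstacle will be controlling the matrix-estimation and error-substitution step cleanly: one has to show that the Gaussian process constructed from the infeasible summand $Q^{-1}p_i(\varepsilon_{Ui}+\theta_{0i}\varepsilon_{Ti})$ can be kept throughout — i.e., that swapping $Q\to\hat Q$, $\varepsilon\to\hat U-\hat T\hat\theta$ in the variance does not change the process beyond $o_P(\overline{a}_N^{-1})$ — which requires the matrix concentration bounds governed by Assumption \ref{as:matrixest} ($\kappa_N^1,\kappa_N=o(1)$) and the well-behaved-basis conditions (Assumption \ref{as:wellbehaved}) feeding into Lemma \ref{le:urate}, and careful bookkeeping of how $\overline{a}_N$, $\xi_k$, $l_kr_k$ and $N^{1/m}$ interact so that every error genuinely sits below $\overline{a}_N^{-1}$. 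The coupling rate verification under condition (iii) is delicate but mechanical once the summands are truncated and the covariance is pinned down; the conceptual work is making the feasible/infeasible reduction tight uniformly in $v$.
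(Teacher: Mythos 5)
Your proposal matches the paper's own route: the paper proves this theorem exactly by invoking the uniform linearization in Lemma \ref{le:urate} and then following the Yurinskii-coupling strong-approximation argument of Theorem 4.4 in \cite{belloni2015some}, which is what you lay out. One small bookkeeping note: the $(1+l_k^3r_k^3)$ factor in condition (iii) enters through the third-moment bound on the summands in the coupling step (inherited from the Belloni et al.\ argument), not through swapping in the feasible covariance $\hat{\Omega}$ --- that substitution is handled separately in Theorem \ref{th:matest} and plays no role in this infeasible-process result.
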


Theorem \ref{th:matest} derives the convergence rate of the covariance matrix estimator $\hat{\Omega}$.
\begin{theorem}[Matrices Estimation]\label{th:matest}
Suppose Assumption \ref{as:basis_eigen}-\ref{as:matrixest} hold.
In addition, suppose (i) $\overline{R}_N\lesssim\sqrt{\log N}$ and
(ii) $(N^{1/m}+l_kr_k)(\sqrt{\xi_k^2\log N/N}+\kappa_N^1)=o(1)$.
Then, the covariance matrix estimator $\hat{\Omega}$ converges at the following rate:
\begin{align*}
    \|\hat{\Omega}-\Omega\|\lesssim_P\left(N^{1/m}+l_kr_k\right)\left(\sqrt{\frac{\xi_k^2\log N}{N}}+\kappa_N^1\right)+\kappa_N^2=:a_N.
\end{align*}
Moreover, the following bound holds:
\begin{align*}
    \sup_{v\in\mathcal{V}}\left|\frac{\hat{\sigma}(v)}{\sigma(v)}-1\right|\lesssim_P\|\hat{\Omega}-\Omega\|\lesssim_P a_N.
\end{align*}
\end{theorem}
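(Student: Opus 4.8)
The plan is to follow the covariance--matrix template of \cite{belloni2015some}, layered with the cross-fitting arguments of \cite{semenova2021debiased} for the orthogonal plug-in signals, while carrying along the extra $\zeta_0$/$T_0$ weights and the non-vanishing approximation bias that are specific to the ratio problem. Write $\Sigma:=E[pp'(\varepsilon_U+\theta_0\varepsilon_T)^2]$ and $\hat{\Sigma}:=E_N[p_ip_i'(\hat{U}_i-\hat{T}_i\hat{\theta}_i)^2]$, so that $\Omega=Q^{-1}\Sigma Q^{-1}$ and $\hat{\Omega}=\hat{Q}^{-1}\hat{\Sigma}\hat{Q}^{-1}$. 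Since $Q=E[pp'\zeta_0]$, Assumptions \ref{as:basis_eigen} and \ref{as:boundedfunc} give $\|Q^{-1}\|\lesssim1$, while Assumptions \ref{as:basis_eigen}, \ref{as:samplingerror} and \ref{as:boundedfunc} give $\|\Sigma\|\lesssim1$. From the identity
\begin{align*}
\hat{\Omega}-\Omega=\hat{Q}^{-1}(\hat{\Sigma}-\Sigma)\hat{Q}^{-1}+(\hat{Q}^{-1}-Q^{-1})\Sigma\hat{Q}^{-1}+Q^{-1}\Sigma(\hat{Q}^{-1}-Q^{-1})
\end{align*}
together with $\hat{Q}^{-1}-Q^{-1}=\hat{Q}^{-1}(Q-\hat{Q})Q^{-1}$, it suffices to prove $\|\hat{Q}-Q\|\lesssim_P a_N$ (which also yields $\|\hat{Q}^{-1}\|=O_P(1)$, since condition (ii) forces $a_N=o(1)$) and $\|\hat{\Sigma}-\Sigma\|\lesssim_P a_N$.

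For the first bound, split $\hat{Q}-Q=(E_N[p_ip_i'T_{0i}]-E[pp'T_0])+E_N[p_ip_i'(\hat{T}_i-T_{0i})]$. The first summand is $O_P(\sqrt{\xi_k^2\log N/N})$ by a matrix Bernstein/Rudelson inequality using Assumption \ref{as:basis_rate} and boundedness of $T_0$; the second obeys $\|E_N[p_ip_i'(\hat{T}_i-T_{0i})]\|\le\|E_N[p_ip_i']\|\max_i|\hat{T}_i-T_{0i}|$, where $\|E_N[p_ip_i']\|=O_P(1)$ (Assumption \ref{as:basis_eigen} plus concentration) and $\max_i|\hat{T}_i-T_{0i}|\lesssim_P\kappa_N^1$ by Assumption \ref{as:matrixest} and Markov after conditioning on the auxiliary fold (cross-fitting freezes $\hat\eta_g$). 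Both pieces are $\lesssim a_N$ since $N^{1/m}\ge1$.

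The core is $\|\hat{\Sigma}-\Sigma\|\lesssim_P a_N$. Put $\tilde{e}_i:=U_{0i}-T_{0i}\theta_{0i}=\varepsilon_{Ui}-\theta_{0i}\varepsilon_{Ti}$ (with $E_N[p_ip_i'\tilde{e}_i^2]\to_P\Sigma$) and $\Delta_i:=\hat{e}_i-\tilde{e}_i$ where $\hat{e}_i:=\hat{U}_i-\hat{T}_i\hat{\theta}_i$, so that
\begin{align*}
\hat{\Sigma}-\Sigma=\big(E_N[p_ip_i'\tilde{e}_i^2]-\Sigma\big)+2E_N[p_ip_i'\tilde{e}_i\Delta_i]+E_N[p_ip_i'\Delta_i^2],\qquad
\Delta_i=(\hat{U}_i-U_{0i})-(\hat{T}_i-T_{0i})\hat{\theta}_i-T_{0i}\,p_i'(\hat{\beta}-\beta_k)+T_{0i}r_i.
\end{align*}
For the first (``sampling'') term, Assumption \ref{as:tailbound} lets one truncate $\tilde{e}_i$ at level $\asymp N^{1/m}$ and apply matrix concentration, giving $O_P(N^{1/m}\sqrt{\xi_k^2\log N/N})$ exactly as in \cite{belloni2015some}. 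The plug-in terms are bounded by $\|E_N[p_ip_i'w_i]\|\le\|E_N[p_ip_i']\|\max_i|w_i|$ --- or, for the pieces carrying $p_i'(\hat{\beta}-\beta_k)$ or $r_i$, by Cauchy--Schwarz against the $O_P(1)$ quantities $\|E_N[p_ip_i'\tilde{e}_i^2]\|$ and $\|E_N[p_ip_i'T_{0i}^2]\|$ so as not to lose a power of $N^{1/m}$ --- after feeding in: $\max_i|\hat{U}_i-U_{0i}|,\max_i|\hat{T}_i-T_{0i}|\lesssim_P\kappa_N^1$ (Assumption \ref{as:matrixest}); $\max_i|T_{0i}|,\max_i|\tilde{e}_i|\lesssim_P N^{1/m}$ (Assumption \ref{as:tailbound}); $\sup_v|\hat{\theta}(v)-\theta_k(v)|\lesssim_P\xi_k\sqrt{\log N}/\sqrt{N}$ from Lemma \ref{le:urate}(b) and condition (i), which also forces $l_kr_k\lesssim1$ and hence $\max_i|\hat{\theta}_i|\lesssim_P1$; and $\sup_v|r(v)|\lesssim l_kr_k$ (Assumption \ref{as:approxerror}). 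Collecting the estimates gives $\|\hat{\Sigma}-\Sigma\|\lesssim_P a_N$, and condition (ii) makes $a_N=o(1)$. This is the step I expect to fight hardest: beyond the heavy-tailed matrix concentration (the source of the $N^{1/m}$), the term $2E_N[p_ip_i'\tilde{e}_iT_{0i}r_i]$ is genuinely new relative to least-squares series --- its population mean is $2E[pp'\,r\,E[\tilde{e}T_0\mid V]]$, and because $E[p\zeta_0 r]\ne0$ in the CEFR problem (the obstruction flagged after Lemmas \ref{le:prate} and \ref{le:urate}) it does not average to zero, so one must instead bound it through $\|E[pp'r]\|\lesssim l_kr_k\land\xi_kr_k$ and show this is negligible next to $a_N$ under the richness of the basis; the sampling part of the same term is $O_P(N^{1/m}l_kr_k\sqrt{\xi_k^2\log N/N})\lesssim_P N^{1/m}\sqrt{\xi_k^2\log N/N}$ once $l_kr_k\lesssim1$.

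Finally, since $|\sqrt{x}-1|\le|x-1|$ for $x\ge0$,
\begin{align*}
\sup_{v\in\mathcal{V}}\left|\frac{\hat{\sigma}(v)}{\sigma(v)}-1\right|\le\sup_{v\in\mathcal{V}}\left|\frac{p(v)'\hat{\Omega}p(v)}{p(v)'\Omega p(v)}-1\right|=\sup_{v\in\mathcal{V}}\frac{|p(v)'(\hat{\Omega}-\Omega)p(v)|}{p(v)'\Omega p(v)}\le\frac{\|\hat{\Omega}-\Omega\|}{\lambda_{\min}(\Omega)}\lesssim_P\|\hat{\Omega}-\Omega\|,
\end{align*}
using $\lambda_{\min}(\Omega)\gtrsim1$, which follows from $\|Q\|\lesssim1$, Assumption \ref{as:basis_eigen}, and a lower bound on $\inf_{v\in\mathcal{V}}E[(\varepsilon_U+\theta_0\varepsilon_T)^2\mid V=v]$ of the kind used in Theorem \ref{th:pnormal}(ii). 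Combined with $\|\hat{\Omega}-\Omega\|\lesssim_P a_N$ this gives both displays in the statement.
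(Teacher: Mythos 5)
Your proposal is correct in structure and takes essentially the same route as the paper: the paper's entire proof of this theorem is a one-line deferral to the argument for Theorem 3.3 of \cite{semenova2021debiased} (itself building on \cite{belloni2015some}), and your decomposition of $\hat{\Omega}-\Omega$ through $\hat{Q}$ and $\hat{\Sigma}$, the truncation of the residuals at level $N^{1/m}$, the use of $\kappa_N^1,\kappa_N$ for the cross-fitted signals, and the final $\sup_v|\hat{\sigma}(v)/\sigma(v)-1|$ step via $\lambda_{\min}(\Omega)\gtrsim1$ are exactly that adaptation spelled out. The only substantive point where you go beyond the paper is in explicitly flagging (without fully resolving) the non-centered cross term $2E_N[p_ip_i'\tilde{e}_iT_{0i}r_i]$ stemming from $E[p\zeta_0 r]\neq0$, a CEFR-specific wrinkle that the paper's citation-only proof silently inherits from the regression setting rather than addresses.
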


Theorem \ref{th:bootstrap} establishes the validity of Gaussian bootstrap.
\begin{theorem}[Validity of Gaussian Bootstrap]\label{th:bootstrap}
Suppose the assumptions of Theorem \ref{th:strongaprx} hold with $\overline{a}_N=\log N$ and the assumptions of Theorem \ref{th:matest} hold with $a_N=O(N^{-c})$ for some $c>0$.
In addition, suppose (i) $1\lesssim\underline{\sigma}^2$ and
(ii) there exists a sequence $\xi_N'$ obeying $1\lesssim\xi_N'\lesssim\|p(v)\|$ uniformly for all $v\in\mathcal{V}$ so that $\|p(v)-p(v')\|/\xi_N'\leq L_N\|v-v'\|$, where $\log L_N\lesssim\log N$.
Let $\mathcal{N}_k^b$ be a bootstrap draw from $N(0,I_k)$ and $P^*$ be a probability conditional on data $\{V_i\}_{i=1}^N$.
Then, the following approximation holds uniformly in $\ell^\infty(\mathcal{V})$:
\begin{align*}
    \frac{p(v)'\hat{\Omega}^{1/2}}{\hat{\sigma}(v)}\mathcal{N}_k^b=_d\frac{p(v)'\Omega^{1/2}}{\sigma(v)}\mathcal{N}_k^b+o_{P^*}(\log^{-1}N).
\end{align*}
\end{theorem}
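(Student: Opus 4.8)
The plan is to argue conditionally on the data $\{V_i\}_{i=1}^N$ and to bound directly the sup-norm distance between the two conditionally Gaussian processes, which are driven by the \emph{same} draw $\mathcal{N}_k^b$. Using symmetry of $\hat{\Omega}^{1/2}$ and $\Omega^{1/2}$, put $D(v):=\hat{\Omega}^{1/2}p(v)/\hat{\sigma}(v)-\Omega^{1/2}p(v)/\sigma(v)\in\mathbb{R}^k$, so that the claim reduces to showing $\sup_{v\in\mathcal{V}}|D(v)'\mathcal{N}_k^b|=o_{P^*}(\log^{-1}N)$. Since $\hat{\sigma}(v)=\|\hat{\Omega}^{1/2}p(v)\|$ and $\sigma(v)=\|\Omega^{1/2}p(v)\|$, both $v\mapsto\hat{\Omega}^{1/2}p(v)/\hat{\sigma}(v)$ and $v\mapsto\Omega^{1/2}p(v)/\sigma(v)$ take values on the unit sphere of $\mathbb{R}^k$, and $D(v)$ is a difference of unit vectors that quantifies how far the data-dependent normalization has moved from its population counterpart.

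First I would establish the pointwise bound $\sup_{v\in\mathcal{V}}\|D(v)\|\lesssim_P a_N$ with $a_N=O(N^{-c})$. A triangle-inequality split gives $\|D(v)\|\le\hat{\sigma}(v)^{-1}\|\hat{\Omega}^{1/2}-\Omega^{1/2}\|\,\|p(v)\|+|\sigma(v)/\hat{\sigma}(v)-1|$; the second term is $\lesssim_P a_N$ directly by Theorem \ref{th:matest}. For the first term I would invoke the operator-norm Lipschitz continuity of the matrix square root on symmetric matrices with eigenvalues bounded away from zero: the eigenvalues of $\Omega$ are bounded above and away from zero uniformly in $k$ (from Assumption \ref{as:basis_eigen}, boundedness of $\zeta_0$, and $1\lesssim\underline{\varepsilon}^2$, as already used in Theorem \ref{th:matest}), and since $\|\hat{\Omega}-\Omega\|\lesssim_P a_N=o(1)$ the same bounds hold for $\hat{\Omega}$ with probability tending to one, so $\|\hat{\Omega}^{1/2}-\Omega^{1/2}\|\lesssim\|\hat{\Omega}-\Omega\|\lesssim_P a_N$. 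Combined with $\|p(v)\|/\hat{\sigma}(v)\lesssim_P\|p(v)\|/\sigma(v)\lesssim1$ (again from the eigenvalue lower bound on $\Omega$), this yields $\sup_{v\in\mathcal{V}}\|D(v)\|\lesssim_P a_N$.

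Next I would upgrade this pointwise bound to a uniform one via a Gaussian maximal inequality applied conditionally. The process $Z(v):=D(v)'\mathcal{N}_k^b$ is centered Gaussian with $\mathrm{sd}(Z(v))=\|D(v)\|\le\bar{d}_N:=\sup_v\|D(v)\|$ and increment pseudometric $\rho(v,\tilde{v})=\|D(v)-D(\tilde{v})\|$. The key point is that $D$ is Lipschitz with a polynomial-in-$N$ constant: differentiating $u\mapsto u/\|u\|$ and using $\hat{\sigma}(v)\wedge\sigma(v)\gtrsim_P\|p(v)\|\gtrsim\xi_N'$ and $\|\hat{\Omega}^{1/2}\|\lesssim_P1$, together with condition (ii), $\|p(v)-p(\tilde{v})\|\le L_N\xi_N'\|v-\tilde{v}\|$, gives $\rho(v,\tilde{v})\lesssim_P L_N\|v-\tilde{v}\|$ with $\log L_N\lesssim\log N$. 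Since $\mathcal{V}$ has fixed dimension and bounded diameter, the covering numbers satisfy $\log N(\mathcal{V},\rho,\epsilon)\lesssim\log N+\log(1/\epsilon)$, and Dudley's entropy bound gives $E^*[\sup_{v\in\mathcal{V}}|Z(v)|]\lesssim\bar{d}_N\sqrt{\log N}\lesssim_P a_N\sqrt{\log N}$. Conditional Markov's inequality then gives, for any $\epsilon>0$, $P^*(\sup_v|Z(v)|>\epsilon\log^{-1}N)\lesssim_P\epsilon^{-1}a_N(\log N)^{3/2}=O(N^{-c}(\log N)^{3/2})\to0$, which is exactly $\sup_v|Z(v)|=o_{P^*}(\log^{-1}N)$. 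Throughout, the finitely many $\lesssim_P$ events controlling $\|\hat{\Omega}-\Omega\|$, $\hat{\sigma}/\sigma$, the eigenvalues of $\hat{\Omega}$, and $\mathrm{Lip}(D)$ are intersected into a single data event of probability tending to one, on which the conditional argument runs deterministically.

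The main obstacle I expect is the uniform step, not the pointwise one. The crude bound $\sup_v|D(v)'\mathcal{N}_k^b|\le\|\mathcal{N}_k^b\|\sup_v\|D(v)\|$ is of order $\sqrt{k}\,a_N$ conditionally, which is inadequate because $k$ may grow like a power of $N$ while $a_N$ is only $O(N^{-c})$; one must exploit that the curve $v\mapsto D(v)$ lies in a space of fixed dimension $\dim\mathcal{V}$ and is Lipschitz with only polynomially large constant — precisely the role of condition (ii) — so that the effective complexity of the process contributes only $\sqrt{\log N}$ rather than $\sqrt{k}$. Establishing that Lipschitz bound on $D$ (which requires differentiating the data-dependent normalization $1/\hat{\sigma}(v)=(p(v)'\hat{\Omega}p(v))^{-1/2}$ and bounding $\|\hat{\Omega}^{1/2}\|$ from above and $\hat{\sigma}(v)$ from below uniformly in $v$), and the bookkeeping that carries the various $\lesssim_P$ statements through the conditional-on-data argument, are the technical core; the rest is assembly of Theorem \ref{th:matest} with standard Gaussian-process facts.
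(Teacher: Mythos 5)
Your proposal is correct and follows essentially the same route as the paper's proof, which is simply deferred to Theorem 3.4 of \cite{semenova2021debiased} (and \cite{chernozhukov2013intersection}): conditionally on the data, compare the two Gaussian processes driven by the same draw $\mathcal{N}_k^b$, bound $\sup_v\|\hat{\Omega}^{1/2}p(v)/\hat{\sigma}(v)-\Omega^{1/2}p(v)/\sigma(v)\|\lesssim_P a_N$ via the eigenvalue bounds, Theorem \ref{th:matest} and the Lipschitz property of the matrix square root, and then use condition (ii) to get a polynomial-in-$N$ Lipschitz constant so that a Gaussian maximal (entropy) inequality yields an $a_N\sqrt{\log N}=o(\log^{-1}N)$ bound on the supremum. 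Your handling of the $\sqrt{k}$-versus-$\sqrt{\log N}$ issue and the intersection of the $\lesssim_P$ events into a single good event is exactly the bookkeeping the cited proof performs.
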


Theorem \ref{eq:confband} is on the validity of the uniform confidence bands and their width.
\begin{theorem}[Validity of Uniform Confidence Bands]\label{th:confband}
Let Assumption \ref{as:basis_eigen}-\ref{as:matrixest} hold with $m\geq4$.
In addition, suppose (i) $\overline{R}_N\lesssim\log^{-1/2}N$,
(ii) $\xi_k\log^2N/N^{1/2-1/m}=o(1)$,
(iii) $1\lesssim\underline{\sigma}^2$,
(iv) $\sup_{v\in\mathcal{V}}\allowbreak\sqrt{N}|r(v)|/\|p(v)\|=o(\log^{-1/2}N)$, and
(v) $k^4\xi_k^2(1+l_k^3r_k^3)^2\log^5N/N=o(1)$.
Then,
\begin{align*}
    P\left(\sup_{v\in\mathcal{V}}|\tau_N(v)|\leq c_N(1-\delta)\right)=1-\delta+o(1)
\end{align*}
for $\tau_N$ defined in (\ref{eq:tstat}).
As a consequence, the confidence bands defined in (\ref{eq:confband}) satisfy
\begin{align*}
   P(\theta_0(v)\in[\underline{i}_N(v),\overline{i}_N(v)]\ \forall v\in\mathcal{V})=1-\delta+o(1).
\end{align*}
The width of the confidence bands obeys the following rate:
\begin{align*}
    \sup_{v\in\mathcal{V}}\left(2c_N(1-\delta)\hat{\sigma}(v)/\sqrt{N}\right)\lesssim_P\sigma(v)\sqrt{\frac{\log N}{N}}\lesssim\sqrt{\frac{\xi_k^2\log N}{N}}.
\end{align*}
\end{theorem}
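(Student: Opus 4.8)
The plan is to chain together the strong-approximation result of Theorem \ref{th:strongaprx}, the matrix-estimation rate of Theorem \ref{th:matest}, and the bootstrap validity of Theorem \ref{th:bootstrap}, and then apply an anti-concentration argument for the supremum of a Gaussian process to transfer the distributional statement to the critical value. First I would verify that the stated high-level conditions of the current theorem imply the hypotheses of those three earlier results with the specific rate choices $\overline{a}_N=\log N$ and $a_N=O(N^{-c})$. Condition (v), $k^4\xi_k^2(1+l_k^3r_k^3)^2\log^5 N/N=o(1)$, is the strengthened version of condition (iii) of Theorem \ref{th:strongaprx} needed when $\overline{a}_N=\log N$ (note $\xi_k\le \sqrt{k}$ up to constants under Assumption \ref{as:basis_eigen}, so the $k^4$ here dominates the $\overline{a}_N^6\xi_k^2$ term); condition (ii), $\xi_k\log^2 N/N^{1/2-1/m}=o(1)$, together with $m\ge4$ and Assumption \ref{as:matrixest}, delivers $a_N=O(N^{-c})$; condition (i) gives $\overline{R}_N=o(\overline{a}_N^{-1})=o(\log^{-1/2}N)\subset o(\log^{-1}N)$; and condition (iv) is exactly the negligible-approximation-error hypothesis $\sqrt{N}\sup_v|r(v)|/\sigma(v)=o(\log^{-1}N)$ once we use $1\lesssim\underline\sigma^2$ from (iii) so that $\sigma(v)\asymp\|p(v)\|$.

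Next I would assemble the coupling. By Theorem \ref{th:strongaprx} (with the true value, using condition (iv)), $\tau_N(v)=\big(\hat\theta(v)-\theta_0(v)\big)\sqrt{N}/\hat\sigma(v)$ equals $\big(\sigma(v)/\hat\sigma(v)\big)\cdot\big(p(v)'\Omega^{1/2}\mathcal N_k/\sigma(v)\big)+o_P(\log^{-1}N)$ in $\ell^\infty(\mathcal V)$; by Theorem \ref{th:matest}, $\sup_v|\hat\sigma(v)/\sigma(v)-1|\lesssim_P a_N=O(N^{-c})=o(\log^{-1}N)$, and since $p(v)'\Omega^{1/2}\mathcal N_k/\sigma(v)$ is a Gaussian process with $\sup_v$ of order $\sqrt{\log k}\lesssim\sqrt{\log N}$, multiplying by $\hat\sigma(v)/\sigma(v)=1+o_P(\log^{-1}N)$ perturbs it by $o_P(\log^{-1/2}N)$. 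Hence $\sup_v|\tau_N(v)|=\sup_v\big|p(v)'\Omega^{1/2}\mathcal N_k/\sigma(v)\big|+o_P(\log^{-1/2}N)$. On the bootstrap side, $\hat\tau_N^b(v)=p(v)'\hat\Omega^{1/2}\mathcal N_k^b/\hat\sigma(v)$, and Theorem \ref{th:bootstrap} gives $\hat\tau_N^b(v)=_d p(v)'\Omega^{1/2}\mathcal N_k^b/\sigma(v)+o_{P^*}(\log^{-1}N)$ uniformly, so $c_N(1-\delta)$, the conditional $(1-\delta)$-quantile of $\sup_v|\hat\tau_N^b(v)|$, is within $o_P(\log^{-1/2}N)$ of the $(1-\delta)$-quantile of $\sup_v\big|p(v)'\Omega^{1/2}\mathcal N_k/\sigma(v)\big|$, using that this quantile is of order $\sqrt{\log N}$ and bounded away from the boundary.

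The main obstacle — and the step I would treat most carefully — is the anti-concentration / Gaussian comparison argument that converts these two couplings into the coverage statement $P(\sup_v|\tau_N(v)|\le c_N(1-\delta))=1-\delta+o(1)$. The issue is that $\sup_v|\tau_N(v)|$ and $c_N(1-\delta)$ are both only approximated by functionals of the \emph{same} Gaussian process up to $o_P(\log^{-1/2}N)$ errors, so one needs an anti-concentration bound of the form $\sup_{t}P\big(|\sup_v|p(v)'\Omega^{1/2}\mathcal N_k/\sigma(v)|-t|\le\varepsilon_N\big)\lesssim\varepsilon_N\sqrt{\log N}\to0$ with $\varepsilon_N=o(\log^{-1/2}N)$; this is the Chernozhukov–Chetverikov–Kato anti-concentration inequality for suprema of Gaussian processes, and it is exactly why the $\log^{-1/2}N$ (rather than merely $o(1)$) rates are imposed in (i) and (iv). Once the coverage identity is established, the confidence-band statement $P(\theta_0(v)\in[\underline i_N(v),\overline i_N(v)]\ \forall v)=1-\delta+o(1)$ is immediate from the definition (\ref{eq:confband}), since $\{\theta_0(v)\in[\underline i_N(v),\overline i_N(v)]\ \forall v\}=\{\sup_v|\tau_N(v)|\le c_N(1-\delta)\}$ up to the negligible replacement of $\theta_0$ by $\theta_k$ justified by (iv). Finally, the width bound follows by combining $c_N(1-\delta)\lesssim_P\sqrt{\log N}$ with $\sup_v|\hat\sigma(v)/\sigma(v)-1|=o_P(1)$ from Theorem \ref{th:matest} and the trivial bound $\sigma(v)^2=p(v)'\Omega p(v)\lesssim\|p(v)\|^2/N\le\xi_k^2/N$ coming from Assumptions \ref{as:basis_eigen} and \ref{as:samplingerror}.
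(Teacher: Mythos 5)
Your overall architecture is exactly the one behind the paper's proof: the paper simply defers to the proof of Theorem 3.5 in \cite{semenova2021debiased} (with \cite{belloni2015some} for details), and that argument is precisely the chaining you describe — strong approximation (Theorem \ref{th:strongaprx}), variance-matrix consistency (Theorem \ref{th:matest}), Gaussian bootstrap validity (Theorem \ref{th:bootstrap}), and a Chernozhukov--Chetverikov--Kato anti-concentration step to convert the couplings into coverage, followed by the elementary width calculation. So in spirit you have reconstructed the intended proof rather than a different one.

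That said, several of your condition-verification claims are wrong as written and would not survive scrutiny. First, Assumption \ref{as:basis_eigen} does \emph{not} give $\xi_k\lesssim\sqrt{k}$; bounded eigenvalues of $E[p(V)p(V)']$ give $E[\|p(V)\|^2]\asymp k$ and hence $\xi_k\gtrsim\sqrt{k}$ (with $\xi_k\asymp k$ for polynomial bases), so your justification that condition (v) dominates condition (iii) of Theorem \ref{th:strongaprx} does not go through on that basis. Second, your deduction that condition (i), $\overline{R}_N\lesssim\log^{-1/2}N$, ``gives $\overline{R}_N=o(\log^{-1}N)$'' is backwards: $\lesssim\log^{-1/2}N$ is \emph{weaker} than $o(\log^{-1}N)$, so the hypotheses of Theorem \ref{th:strongaprx} with $\overline{a}_N=\log N$ are not implied the way you assert; the cited proofs handle the remainder at the $\log^{-1/2}N$ scale through the anti-concentration bookkeeping rather than by upgrading it to $o(\log^{-1}N)$. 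Third, in the width bound your intermediate claim $\sigma(v)^2=p(v)'\Omega p(v)\lesssim\|p(v)\|^2/N$ has a spurious $N^{-1}$: bounded eigenvalues of $\Omega$ give $\sigma(v)\lesssim\|p(v)\|\leq\xi_k$, and the $\sqrt{N}$ in the stated rate comes from the explicit $\hat{\sigma}(v)/\sqrt{N}$ in (\ref{eq:confband}) together with $c_N(1-\delta)\lesssim_P\sqrt{\log N}$ and $\sup_{v}|\hat{\sigma}(v)/\sigma(v)-1|=o_P(1)$. These are repairable slips rather than a wrong strategy, but as stated the verification that the theorem's conditions deliver the hypotheses of Theorems \ref{th:strongaprx}--\ref{th:bootstrap} is not correct.
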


\section{Applications}\label{sec:app_dml}
We apply the general theoretical results presented in the previous section to the examples in Section \ref{sec:framework}.

\subsection{Local Average Treatment Effect}
Consider the setting of Example \ref{ex:late}, and recall that the parameter of interest is $\theta_0(v)=E[Y_1-Y_0|D_1>D_0,V=v]$.
Below, we provide the identification assumptions for LATE, and the orthogonal signals in this setting.
\begin{assumption}[Identification Assumptions for LATE]\label{as:lateid}
\!
\begin{enumerate}
    \item (Instrument Unconfoundedness). $Y_1,Y_0,D_1,D_0\independent Z|X$.
    \item (Positivity). $0<P(Z=1|X=x)<1$ for all $x\in\mathcal{X}$.
    \item (Instrument Relevance). $P(D_1|V=v)\neq P(D_0|V=v)$ for all $v\in\mathcal{V}$.
    \item (Monotonicity). $P(D_1\geq D_0)=1$.
    \item (Consistency). $D=ZD_1+(1-Z)D_0$ and $Y=DY_1+(1-D)Y_0$.
\end{enumerate}
\end{assumption}
Assumption \ref{as:lateid}.1 states that $Z$ is randomly assigned within a subpopulation sharing the same level of the covariates.
Assumption \ref{as:lateid}.2 restricts the treatment assignment probability from taking extreme values.
Assumption \ref{as:lateid}.3 states that the treatment assignment has non-zero effects on the actual treatment status.
Assumption \ref{as:lateid}.4 excludes defiers, those who never follow the given treatment assignment, from our analysis.
Assumption \ref{as:lateid}.5 relates the potential variables to their realized counterparts.

Consider the following doubly robust signals:
\begin{align*}
    U(O,\eta_0)&=\mu_0(1,X)-\mu_0(0,X)+\frac{Z(Y-\mu_0(1,X))}{\rho_0(X)}-\frac{(1-Z)(Y-\mu_0(0,X))}{1-\rho_0(X)},\\
    T(O,\eta_0)&=\pi_0(1,X)-\pi_0(0,X)+\frac{Z(D-\pi_0(1,X))}{\rho_0(X)}-\frac{(1-Z)(D-\pi_0(0,X))}{1-\rho_0(X)},
\end{align*}
where $\rho_0(x)=P(Z=1|X=x)=E[Z|X=x]$ is a propensity score.
The following theorem establishes the identification of LATE and the Neyman orthogonality of the signals.

\begin{theorem}\label{th:mclate}
Under Assumption \ref{as:lateid}, the following statements hold.

\noindent(a) $\theta_0(v)=\nu_0(v)/\zeta_0(v)$, where $\nu_0(v)$ and $\zeta_0(v)$ are defined in Example \ref{ex:late}.

\noindent(b) $E[U(O,\eta_0)-\nu_0(V)|V=v]=0$ and $E[T(O,\eta_0)-\zeta_0(V)|V=v]=0$ for all $v\in\mathcal{V}$.

\noindent(c) The moment equations in (b) satisfy the Neyman orthogonality condition:
\begin{align*}
    \partial_s E[U(O,\eta_0+s(\eta-\eta_0))-\nu_0(V)|V=v]|_{s=0}&=0,\\
    \partial_s E[T(O,\eta_0+s(\eta-\eta_0))-\zeta_0(V)|V=v]|_{s=0}&=0.
\end{align*}
\end{theorem}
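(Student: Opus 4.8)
The plan is to handle the three parts in order, since (b) follows immediately once the signals are shown to have the claimed conditional mean, and (c) is a direct differentiation exercise once the structure of the signals is understood. For part (a), the first step is to recall the classical Imbens--Angrist identification argument conditional on $X$. Under Assumption \ref{as:lateid}.1 (instrument unconfoundedness) and \ref{as:lateid}.5 (consistency), one writes $E[Y|Z=1,X]-E[Y|Z=0,X]$ in terms of potential outcomes and uses the fact that, given monotonicity (Assumption \ref{as:lateid}.4), the population partitions into compliers, always-takers, and never-takers; the contributions of always-takers and never-takers cancel in the numerator, leaving $E[(Y_1-Y_0)\mathbf{1}\{D_1>D_0\}|X]$. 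Similarly $E[D|Z=1,X]-E[D|Z=0,X]=P(D_1>D_0|X)$. Taking the conditional expectation given $V=v$ of both numerator and denominator (using the tower property and that $V$ is a subvector of $X$) and forming the ratio yields $\theta_0(v)=\nu_0(v)/\zeta_0(v)$, with well-definedness guaranteed by Assumption \ref{as:lateid}.3 combined with monotonicity, which forces $\zeta_0(v)=P(D_1>D_0|V=v)\neq 0$.

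For part (b), the key step is to verify that the inverse-probability-weighting correction terms in $U(O,\eta_0)$ and $T(O,\eta_0)$ have conditional mean zero given $X$ (hence given $V$ by the tower property). Taking $E[\,\cdot\,|X]$ of the term $Z(Y-\mu_0(1,X))/\rho_0(X)$ and using $E[Z|X]=\rho_0(X)$ together with $E[ZY|X]=E[Z|X]\,E[Y|Z=1,X]=\rho_0(X)\mu_0(1,X)$ (which uses Assumption \ref{as:lateid}.2 so that $\rho_0(X)\in(0,1)$ and the weights are well-defined), this term vanishes in expectation; the same holds for the $(1-Z)$ term and for the two analogous terms in $T$. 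What remains is $E[\mu_0(1,X)-\mu_0(0,X)|V=v]=\nu_0(v)$ and $E[\pi_0(1,X)-\pi_0(0,X)|V=v]=\zeta_0(v)$, which is exactly the claim.

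For part (c), I would compute the Gateaux derivative directly. Writing $\eta=(\mu(1,\cdot),\mu(0,\cdot),\pi(1,\cdot),\pi(0,\cdot),\rho)$ and perturbing along $\eta_0+s(\eta-\eta_0)$, one differentiates $E[U(O,\eta_0+s(\eta-\eta_0))-\nu_0(V)|V=v]$ at $s=0$. The derivative splits into a sum of terms, one for each component of $\eta$. The terms from perturbing $\mu(1,\cdot)$ read, schematically, $E[(\mu(1,X)-\mu_0(1,X))(1 - Z/\rho_0(X))|V=v]$, and after taking $E[\,\cdot\,|X]$ inside and using $E[Z|X]=\rho_0(X)$ this vanishes; symmetrically for $\mu(0,\cdot)$. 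The term from perturbing $\rho$ is $E[-Z(Y-\mu_0(1,X))(\rho(X)-\rho_0(X))/\rho_0(X)^2 + \cdots |V=v]$, and taking $E[\,\cdot\,|X]$ and using $E[Z(Y-\mu_0(1,X))|X]=0$ (established in part (b)) kills it. The same computation applies verbatim to $T$ with $\pi_0$ in place of $\mu_0$ and $D$ in place of $Y$. The main obstacle, such as it is, will be bookkeeping: making sure the perturbation is applied to every nuisance component, that the interchange of differentiation and (conditional) expectation is justified (finite moments and positivity of $\rho_0$ bounded away from $0$ and $1$ suffice), and that the cancellations are traced carefully rather than asserted. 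No single step is deep; the content is entirely in the algebra of the doubly robust correction, which is designed precisely so these first-order terms vanish.
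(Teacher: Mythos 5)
Your proposal is correct and follows essentially the same route as the paper: part (a) is the standard complier argument, which is algebraically identical to the paper's computation $\nu_0(v)=E[(D_1-D_0)(Y_1-Y_0)|V=v]$ with monotonicity forcing $D_1-D_0\in\{0,1\}$ and relevance giving $\zeta_0(v)\neq0$; part (b) is the same tower-property check the paper treats as immediate; and part (c) is the same Gateaux-derivative calculation that the paper packages once in Lemma \ref{le:orthosignal} and applies to each doubly robust component, relying on exactly the cancellations $E[Z|X]=\rho_0(X)$ and $E[Z(Y-\mu_0(1,X))|X]=0$ you identify.
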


\begin{remark}[One-Sided Noncompliance]
Some experimental designs do not give individuals with $Z=0$ access to the treatment.
This setting is called \emph{one-sided noncompliance}, and formally expressed as $P(D_0=0|X=x)=1$ for all $x\in\mathcal{X}$ \citep{Frolich2013-sk,Donald2014-ca,Kennedy2020-wz}.
Under Assumption \ref{as:lateid} and one-sided noncompliance, LATE is identified as:
\begin{align*}
    \theta_0(v)=\frac{E[E[Y|Z=1,X]-E[Y|Z=0,X]|V=v]}{E[E[D|Z=1,X]|V=v]}.
\end{align*}
Obviously, $\zeta_0(v)>0$ for $v\in\mathcal{V}$ in this case, and thus CV based on the criterion (\ref{eq:cvcriterion}) is possible.
\end{remark}

Then, we give a set of sufficient conditions the nuisance estimators must satisfy so that the general results in Section \ref{sec:asymtheory} hold.
Given the true nuisance functions $\eta_0=\{\mu_0,\pi_0,\rho_0\}$ and sequences of shrinking neighborhoods $\mathcal{S}_N^\mu$ of $\mu_0$, $\mathcal{S}_N^\pi$ of $\pi_0$, $\mathcal{S}_N^\rho$ of $\rho_0$, define the following rates:
\begin{align*}
    \mathbf{m}_{N,q}&:=\sup_{\mu\in \mathcal{S}_N^\mu}\sup_{z\in\{0,1\}}E[(\mu(z,X)-\mu_0(z,X))^q]^{1/q},\\
    \mathbf{p}_{N,q}&:=\sup_{\pi\in \mathcal{S}_N^\pi}\sup_{z\in\{0,1\}}E[(\pi(z,X)-\pi_0(z,X))^q]^{1/q},\\
    \mathbf{r}_{N,q}&:=\sup_{\rho\in \mathcal{S}_N^\rho}E[(\rho(X)-\rho_0(X))^q]^{1/q},
\end{align*}
where $q\geq2$ or $q=\infty$.

\begin{assumption}[First-Stage Rate for LATE]\label{as:fslate}
Assume that there exists a sequence of numbers $\epsilon_N=o(1)$ and sequences of neighborhoods $\mathcal{S}_N^\mu$ of $\mu_0$, $\mathcal{S}_N^\pi$ of $\pi_0$ and $\mathcal{S}_N^\rho$ of $\rho_0$ such that the first-stage estimate $\{\hat{\mu},\hat{\pi},\hat{\rho}\}$ belongs to the set $\mathcal{S}_N^\mu\times \mathcal{S}_N^\pi\times \mathcal{S}_N^\rho$ with probability at least $1-\epsilon_N$.
Assume that mean square rates $\mathbf{m}_{N,2},\mathbf{p}_{N,2},\mathbf{r}_{N,2}$ decay sufficiently fast:
\begin{align*}
    \xi_k(\mathbf{m}_{N,2}\lor\mathbf{p}_{N,2}\lor\mathbf{r}_{N,2})=o(1),
\end{align*}
and one of two alternative conditions holds.
(i) Bounded basis. There exists $C_p<\infty$ so that $\sup_{v\in\mathcal{V}}\allowbreak\|p(v)\|_\infty\leq C_p$, $\sqrt{kN}\mathbf{m}_{N,2}\mathbf{r}_{N,2}=o(1)$ and $\sqrt{kN}\mathbf{p}_{N,2}\mathbf{r}_{N,2}=o(1)$.
(ii) Unbounded basis. There exist $\omega,\psi\in[1,\infty]$, $1/\omega+1/\psi=1$ so that $\sqrt{kN}\mathbf{m}_{N,2\omega}\mathbf{r}_{N,2\psi}=o(1)$, and there exist $\omega',\psi'\in[1,\infty]$, $1/\omega'+1/\psi'=1$ so that $\sqrt{kN}\mathbf{p}_{N,2\omega'}\mathbf{r}_{N,2\psi'}=o(1)$.
Finally, the functions in $\mathcal{S}_N^\mu$, $\mathcal{S}_N^\pi$ and $\mathcal{S}_N^\rho$ are bounded uniformly over their domain:
\begin{align*}
    \sup_{\mu\in \mathcal{S}_N^\mu}\sup_{z\in\{0,1\}}\sup_{x\in\mathcal{X}}|\mu(z,x)|\lor\sup_{\pi\in \mathcal{S}_N^\pi}\sup_{z\in\{0,1\}}\sup_{x\in\mathcal{X}}|\pi(z,x)|\lor\sup_{\rho\in\mathcal{S}_N^\rho}\sup_{x\in\mathcal{X}}|\rho^{-1}(x)|<\overline{C}<\infty.
\end{align*}
\end{assumption}

\begin{corollary}\label{cr:late}
Suppose that Assumption \ref{as:lateid} and \ref{as:fslate} hold.
Then, the orthogonal signals $U$ and $T$ satisfy Assumption \ref{as:smallbias}, and consequently, Theorem \ref{th:pnormal}-\ref{th:confband} hold for LATE if the other assumptions stated in Section \ref{sec:asymtheory} are also satisfied.
\end{corollary}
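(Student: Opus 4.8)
The plan is to show that Assumption \ref{as:fslate} forces the two rate conditions of Assumption \ref{as:smallbias}, namely $B_N=o(1)$ and $\Lambda_N=o(1)$ for the LATE signals; Theorems \ref{th:pnormal}--\ref{th:confband} then follow directly from the general results of Section \ref{sec:asymtheory} once its remaining conditions (on the basis, the errors, and the target function) are in force. Since the first-stage estimate lands in $\mathcal{S}_N^\mu\times\mathcal{S}_N^\pi\times\mathcal{S}_N^\rho$ with probability at least $1-G\epsilon_N=1-o(1)$, I would work on that event and bound everything uniformly over deterministic $\eta\in\mathcal{S}_N:=\mathcal{S}_N^\mu\times\mathcal{S}_N^\pi\times\mathcal{S}_N^\rho$.

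The key step is to compute the conditional bias of the signal. Using Assumption \ref{as:lateid}.1--\ref{as:lateid}.2, the correction terms of $U(O,\eta_0)$ are mean-zero given $X$, so $E[U_0\mid X]=\mu_0(1,X)-\mu_0(0,X)$; and the identity $E[Z(Y-\mu(1,X))\mid X]=\rho_0(X)(\mu_0(1,X)-\mu(1,X))$, together with its $Z=0$ analogue, yields the double-robustness product structure
\begin{align*}
  E[U(O,\eta)-U_0\mid X]&=(\mu(1,X)-\mu_0(1,X))\,\frac{\rho(X)-\rho_0(X)}{\rho(X)}\\
  &\quad-(\mu(0,X)-\mu_0(0,X))\,\frac{\rho_0(X)-\rho(X)}{1-\rho(X)}=:g^U_\eta(X),
\end{align*}
and likewise $g^T_\eta(X)$ with $(\pi,D)$ in place of $(\mu,Y)$. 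Because $p(V)$ is $\sigma(X)$-measurable, $E[p(V)(U-U_0)]=E[p(V)g^U_\eta(X)]$, so the first-order term vanishes and only this mixed product remains --- this is precisely where the Neyman orthogonality of Theorem \ref{th:mclate}(c) is used.

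For $B_N$, in the bounded-basis case (i) I would bound each coordinate $|E[p_j(V)g^U_\eta(X)]|\le C_p\,E[|g^U_\eta(X)|]\le C_p\overline{C}\,\mathbf{m}_{N,2}\mathbf{r}_{N,2}$ by Cauchy--Schwarz and $\rho^{-1}\le\overline{C}$, sum over the $k$ coordinates to get $\|E[p(V)g^U_\eta(X)]\|\lesssim\sqrt{k}\,\mathbf{m}_{N,2}\mathbf{r}_{N,2}$, hence $\sqrt{N}\|E[p(V)(U-U_0)]\|\lesssim\sqrt{kN}\,\mathbf{m}_{N,2}\mathbf{r}_{N,2}=o(1)$; in the unbounded-basis case (ii) I would instead use $\|E[p(V)g^U_\eta(X)]\|=\sup_{\|\alpha\|=1}|E[(\alpha'p(V))g^U_\eta(X)]|\lesssim\|g^U_\eta\|_{P,2}$ via Assumption \ref{as:basis_eigen}, then Hölder with exponents $(\omega,\psi)$ to get $\|g^U_\eta\|_{P,2}\lesssim\overline{C}\,\mathbf{m}_{N,2\omega}\mathbf{r}_{N,2\psi}$, so again $\sqrt{N}\|E[p(V)(U-U_0)]\|=o(1)$. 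The same bounds with $(\mathbf{p},\mathbf{r})$ handle the $T$-signal, giving $B_N=o(1)$. For $\Lambda_N$, I would bound $\|p(V)(U-U_0)\|\le\xi_k|U-U_0|$ pointwise, expand $U-U_0$ into its terms linear in $\mu-\mu_0$ and $\rho-\rho_0$ (plus a smaller product term), and use $|\mu|,\rho^{-1}\le\overline{C}$ together with $\overline{\varepsilon}^2=O(1)$ to control the $Z(Y-\mu_0(1,X))(\rho^{-1}-\rho_0^{-1})$-type piece, obtaining $E[(U-U_0)^2]^{1/2}\lesssim\mathbf{m}_{N,2}+\mathbf{r}_{N,2}$, hence $\Lambda_N\lesssim\xi_k(\mathbf{m}_{N,2}\lor\mathbf{p}_{N,2}\lor\mathbf{r}_{N,2})=o(1)$ by Assumption \ref{as:fslate}; the bounds on $\kappa_N^1,\kappa_N$ in Assumption \ref{as:matrixest}, when needed for Theorems \ref{th:matest}--\ref{th:confband}, follow from the same expansion plus the uniform boundedness in Assumption \ref{as:fslate} and the tail bound of Assumption \ref{as:tailbound}.

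I expect the main obstacle to be the conditional-bias computation itself: correctly cancelling the first-order contributions so that $E[U(O,\eta)-U_0\mid X]$ reduces to the mixed product $g^U_\eta$, and then the bookkeeping that routes the two regimes of Assumption \ref{as:fslate} through either the coordinatewise $\sqrt{k}$ bound or the operator-norm/Hölder bound. A secondary nuisance is the $\Lambda_N$ step when $Y$ is unbounded, where boundedness of the nuisance functions alone is not enough and one must invoke the second-moment control $\overline{\varepsilon}^2=O(1)$ (and, for the uniform results, Assumption \ref{as:tailbound}) explicitly.
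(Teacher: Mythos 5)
Your proposal is correct and takes essentially the same route as the paper: the paper's proof decomposes $U-U_0$ into pieces linear in $\mu-\mu_0$ and $\rho-\rho_0$, which vanish under $E[p(V)\,\cdot\,]$ by iterated expectations (exactly your conditional-bias cancellation), plus the mixed product term, bounded via Cauchy--Schwarz (bounded basis) or H\"older (unbounded basis) to give $B_N\lesssim\sqrt{kN}\,\mathbf{r}_{N,2}(\mathbf{m}_{N,2}+\mathbf{p}_{N,2})=o(1)$, and it bounds $\Lambda_N\lesssim\xi_k(\mathbf{m}_{N,2}\lor\mathbf{p}_{N,2}\lor\mathbf{r}_{N,2})=o(1)$ from the same term-by-term expansion using the uniform bounds in Assumption \ref{as:fslate}. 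Your operator-norm shortcut via Assumption \ref{as:basis_eigen} in the unbounded-basis case is only a cosmetic variant of the paper's coordinatewise H\"older bound, and your remarks on $\kappa_N^1,\kappa_N$ go beyond what the corollary requires but are harmless.
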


\subsection{Ratio-Based Treatment Effects}
Consider the setting of Example \ref{ex:rbte}, and recall that the parameter of interest is $\theta_0(v)=E[Y_1|V=v]/E[Y_0|V=v]$.
Below, we provide the identification assumptions for ratio-based CATE, and the orthogonal signals in this setting.
\begin{assumption}[Identification Assumptions for ratio-based CATE]\label{as:rbcate}
\!
\begin{enumerate}
    \item (Unconfoundedness). $Y_1,Y_0\independent D|X$.
    \item (Positivity). $0<P(D=1|X=x)<1$ for all $x\in\mathcal{X}$.
    \item (Non-Zero Outcome). $E[Y_0|V=v]\neq0$ for all $v\in\mathcal{V}$.
    \item (Consistency). $Y=DY_1+(1-D)Y_0$.
\end{enumerate}
\end{assumption}
Assumption \ref{as:rbcate}.1 suppose there are no unobserved confounders.
Assumption \ref{as:rbcate}.2 restricts the treatment probability from taking extreme values.
Assumption \ref{as:rbcate}.3 is a necessary condition to ensure that ratio-based CATE is well-defined uniformly over $\mathcal{V}$.
Assumption \ref{as:rbcate}.4 relates the potential outcomes to the observed outcome.
Note that when we are interested in the odds ratio or hazard ratio, CV based on the criterion (\ref{eq:cvcriterion}) is possible because the outcome $Y$ is positive in these cases.

Consider the following doubly robust signals:
\begin{align*}
    U(O,\eta_0)&=\mu_0(1,X)+\frac{D(Y-\mu_0(1,X))}{\pi_0(X)},\\
    T(O,\eta_0)&=\mu_0(0,X)+\frac{(1-D)(Y-\mu_0(0,X))}{1-\pi_0(X)},
\end{align*}
where $\pi_0(x)=E[D|X=x]$ is a propensity score.
The following theorem establishes the identification of ratio-based CATE and the Neyman orthogonality of the signals.

\begin{theorem}\label{th:mcrb}
Under Assumption \ref{as:rbcate}, the following statements hold.

\noindent(a) $\theta_0(v)=\nu_0(v)/\zeta_0(v)$, where $\nu_0(v)$ and $\zeta_0(v)$ are defined in Example \ref{ex:rbte}.

\noindent(b) $E[U(O,\eta_0)-\nu_0(V)|V=v]=0$ and $E[T(O,\eta_0)-\zeta_0(V)|V=v]=0$ for all $v\in\mathcal{V}$.

\noindent(c) The moment equations in (b) satisfy the Neyman orthogonality condition:
\begin{align*}
    \partial_s E[U(O,\eta_0+s(\eta-\eta_0))-\nu_0(V)|V=v]|_{s=0}&=0,\\
    \partial_s E[T(O,\eta_0+s(\eta-\eta_0))-\zeta_0(V)|V=v]|_{s=0}&=0.
\end{align*}
\end{theorem}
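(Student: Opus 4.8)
The plan is to verify the three claims in sequence, as parts (a) and (b) are straightforward consequences of unconfoundedness and consistency, while part (c) requires the actual Gateaux-derivative computation. For part (a), I would start from the identification of $E[Y_d \mid V=v]$ under Assumption \ref{as:rbcate}. Using consistency ($Y = DY_1 + (1-D)Y_0$) and unconfoundedness ($Y_1, Y_0 \independent D \mid X$), the standard argument gives $E[Y \mid D=1, X] = E[Y_1 \mid D=1, X] = E[Y_1 \mid X]$, and by the law of iterated expectations $E[Y_1 \mid V=v] = E[E[Y \mid D=1, X] \mid V=v] = E[\mu_0(1,X) \mid V=v] = \nu_0(v)$; symmetrically $E[Y_0 \mid V=v] = \zeta_0(v)$. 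Positivity (Assumption \ref{as:rbcate}.2) guarantees $\mu_0(d,x)$ is well-defined, and Assumption \ref{as:rbcate}.3 gives $\zeta_0(v)\neq 0$, so $\theta_0(v) = \nu_0(v)/\zeta_0(v)$ follows.

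For part (b), I would take expectations of the signal $U(O,\eta_0)$ conditional on $V=v$. Writing $E[U(O,\eta_0) \mid V=v] = E[\mu_0(1,X) \mid V=v] + E\left[\frac{D(Y-\mu_0(1,X))}{\pi_0(X)} \,\middle|\, V=v\right]$, the key is that the second term vanishes: conditioning further on $X$ inside, $E[D(Y-\mu_0(1,X)) \mid X] = E[D(Y - E[Y\mid D=1,X]) \mid X]$, and since $E[DY \mid X] = P(D=1\mid X)E[Y\mid D=1,X] = \pi_0(X)\mu_0(1,X)$ while $E[D\mu_0(1,X)\mid X] = \pi_0(X)\mu_0(1,X)$, the bracket is zero. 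Dividing by $\pi_0(X) > 0$ and taking the outer expectation over $X$ given $V=v$ preserves the zero. Hence $E[U(O,\eta_0)\mid V=v] = \nu_0(v)$, and the analogous computation with $1-D$ and $1-\pi_0(X)$ handles $T$.

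For part (c), the main work is differentiating along the path $\eta_0 + s(\eta - \eta_0)$ where $\eta = (\mu, \pi)$ and $\eta_0 = (\mu_0, \pi_0)$. I would write $\mu_s(1,X) = \mu_0(1,X) + s(\mu(1,X)-\mu_0(1,X))$ and $\pi_s(X) = \pi_0(X) + s(\pi(X)-\pi_0(X))$, substitute into $U$, take $E[\cdot \mid V=v]$, and compute $\partial_s$ at $s=0$. The derivative splits into a $\mu$-perturbation piece and a $\pi$-perturbation piece. The $\mu$-piece contributes $E\left[(\mu(1,X)-\mu_0(1,X))\left(1 - \frac{D}{\pi_0(X)}\right)\,\middle|\, V=v\right]$, which vanishes because $E[D\mid X] = \pi_0(X)$ so $E[1 - D/\pi_0(X)\mid X] = 0$. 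The $\pi$-piece contributes $E\left[-\frac{D(Y-\mu_0(1,X))(\pi(X)-\pi_0(X))}{\pi_0(X)^2}\,\middle|\, V=v\right]$, which vanishes because $E[D(Y-\mu_0(1,X))\mid X] = 0$ as established in part (b). So both orthogonality derivatives are zero; the $T$ computation is identical with the obvious substitutions. The only mild subtlety—the step I would be most careful about—is justifying the interchange of $\partial_s$ with the conditional expectation and the uniform-over-$v$ validity, which relies on the boundedness conditions (overlap bounding $\pi_0$ away from $0$ and $1$, bounded outcome regressions) that are anyway imposed in Assumption \ref{as:fslate}-type conditions; given those, dominated convergence applies and the interchange is routine.
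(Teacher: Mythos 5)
Your proposal is correct and follows essentially the same route as the paper: (a) and (b) via unconfoundedness, consistency, and the conditional-mean-zero property $E[D(Y-\mu_0(1,X))\mid X]=0$, and (c) via the standard doubly robust Gateaux-derivative calculation conditioning on $X$, which the paper packages as a reusable lemma (Lemma \ref{le:orthosignal}) applied to each signal, differentiating after first taking the conditional expectation given $X$ rather than before, but relying on exactly the same two vanishing conditional moments. The only nitpick is that the interchange-of-derivative justification should cite the first-stage/boundedness conditions for this setting (Assumption \ref{as:fsrb}) rather than the LATE analogue (Assumption \ref{as:fslate}); the paper itself simply assumes the interchange, so this does not affect correctness.
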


Then, we give a set of sufficient conditions the nuisance estimators must satisfy so that the general results in Section \ref{sec:asymtheory} hold.
Given the true nuisance functions $\eta_0=\{\mu_0,\pi_0\}$ and sequences of shrinking neighborhoods $\mathcal{S}_N^\mu$ of $\mu_0$, $\mathcal{S}_N^\pi$ of $\pi_0$, define the following rates:
\begin{align*}
    \mathbf{m}_{N,q}&:=\sup_{\mu\in \mathcal{S}_N^\mu}\sup_{d\in\{0,1\}}E[(\mu(d,X)-\mu_0(d,X))^q]^{1/q},\\
    \mathbf{p}_{N,q}&:=\sup_{\pi\in \mathcal{S}_N^\pi}E[(\pi(X)-\pi_0(X))^q]^{1/q},
\end{align*}
where $q\geq2$ or $q=\infty$.

\begin{assumption}[First-Stage Rate for Ratio-Based CATE]\label{as:fsrb}
Assume that there exists a sequence of numbers $\epsilon_N=o(1)$ and sequences of neighborhoods $\mathcal{S}_N^\mu$ of $\mu_0$ and $\mathcal{S}_N^\pi$ of $\pi_0$ such that the first-stage estimate $\{\hat{\mu},\hat{\pi}\}$ belongs to the set $\mathcal{S}_N^\mu\times \mathcal{S}_N^\pi$ with probability at least $1-\epsilon_N$.
Assume that mean square rates $\mathbf{m}_{N,2},\mathbf{p}_{N,2}$ decay sufficiently fast:
\begin{align*}
    \xi_k(\mathbf{m}_{N,2}\lor\mathbf{p}_{N,2})=o(1),
\end{align*}
and one of two alternative conditions holds.
(i) Bounded basis. There exists $C_p<\infty$ so that $\sup_{v\in\mathcal{V}}\allowbreak\|p(v)\|_\infty\leq C_p$, $\sqrt{kN}\mathbf{m}_{N,2}\mathbf{p}_{N,2}=o(1)$.
(ii) Unbounded basis. There exists $\omega,\psi\in[1,\infty]$, $1/\omega+1/\psi=1$ so that $\sqrt{kN}\mathbf{m}_{N,2\omega}\mathbf{p}_{N,2\psi}=o(1)$.
Finally, the functions in $\mathcal{S}_N^\mu$ and $\mathcal{S}_N^\pi$ are bounded uniformly over their domain:
\begin{align*}
    \sup_{\mu\in \mathcal{S}_N^\mu}\sup_{d\in\{0,1\}}\sup_{x\in\mathcal{X}}|\mu(d,x)|\lor\sup_{\pi\in \mathcal{S}_N^\pi}\sup_{x\in\mathcal{X}}|\pi(x)^{-1}|<\overline{C}<\infty.
\end{align*}
\end{assumption}

\begin{corollary}\label{cr:rbcate}
Suppose that Assumption \ref{as:rbcate} and \ref{as:fsrb} hold.
Then, the orthogonal signals $U$ and $T$ satisfy Assumption \ref{as:smallbias}, and consequently, Theorem \ref{th:pnormal}-\ref{th:confband} hold for ratio-based CATE if the other assumptions stated in Section \ref{sec:asymtheory} are also satisfied.
\end{corollary}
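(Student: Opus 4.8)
The plan is to verify that the doubly robust signals $U$ and $T$ of Theorem \ref{th:mcrb} satisfy Assumption \ref{as:smallbias} with $\mathcal{S}_N=\mathcal{S}_N^\mu\times\mathcal{S}_N^\pi$; Theorems \ref{th:pnormal}--\ref{th:confband} then follow by direct application, exactly as in the proof of Corollary \ref{cr:late}. By Assumption \ref{as:fsrb} the first-stage estimate lies in $\mathcal{S}_N$ with probability $1-\epsilon_N=1-o(1)$, and all functions in $\mathcal{S}_N$ (including the truth), together with $\pi^{-1}$, are bounded by $\overline C$. The structural input is the \emph{mixed-bias identity}: for any $\eta=\{\mu,\pi\}\in\mathcal{S}_N$, conditioning on $X$ and using consistency, $DY=DY_1$, $E[D\mid X]=\pi_0(X)$, and unconfoundedness (so $\mu_0(1,X)=E[Y_1\mid X]$),
\begin{align*}
E[U(O,\eta)-U_0\mid X]=\frac{\bigl(\mu(1,X)-\mu_0(1,X)\bigr)\bigl(\pi(X)-\pi_0(X)\bigr)}{\pi(X)},
\end{align*}
and symmetrically $E[T(O,\eta)-T_0\mid X]=\bigl(\mu(0,X)-\mu_0(0,X)\bigr)\bigl(\pi_0(X)-\pi(X)\bigr)/(1-\pi(X))$. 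This is the quantitative counterpart of the Neyman orthogonality already established in Theorem \ref{th:mcrb}(c): the plug-in bias collapses to a product of two first-stage errors.

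To control $B_N$, note that $V$ is a subvector of $X$, so $p(V)$ is $X$-measurable and $E[p(V)(U-U_0)]=E\bigl[p(V)\,E[U-U_0\mid X]\bigr]$; by the identity and $|\pi^{-1}|\le\overline C$, $\bigl|E[U-U_0\mid X]\bigr|\le\overline C\,|\mu(1,X)-\mu_0(1,X)|\,|\pi(X)-\pi_0(X)|$. In the bounded-basis branch (i), bounding each coordinate of $p(V)$ by $C_p$ and applying Cauchy--Schwarz gives $\|E[p(V)(U-U_0)]\|\le\sqrt k\,C_p\overline C\,\mathbf{m}_{N,2}\mathbf{p}_{N,2}$, so $\sqrt N\,\|E[p(V)(U-U_0)]\|\lesssim\sqrt{kN}\,\mathbf{m}_{N,2}\mathbf{p}_{N,2}=o(1)$. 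In the unbounded-basis branch (ii), set $a=E[p(V)E[U-U_0\mid X]]$ and use $\|a\|^2=E\bigl[(a'p(V))\,E[U-U_0\mid X]\bigr]\le\{a'E[pp']a\}^{1/2}E\bigl[E[U-U_0\mid X]^2\bigr]^{1/2}$ together with Assumption \ref{as:basis_eigen} and H\"older's inequality with conjugate exponents $(\omega,\psi)$ to obtain $\|a\|\lesssim\mathbf{m}_{N,2\omega}\mathbf{p}_{N,2\psi}$, whence $\sqrt N\,\|a\|\lesssim\sqrt{kN}\,\mathbf{m}_{N,2\omega}\mathbf{p}_{N,2\psi}=o(1)$. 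The $T$-part is bounded identically with the two arms interchanged, so $B_N=o(1)$.

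For $\Lambda_N$, since $\|p(v)\|\le\xi_k$ uniformly it suffices to show $E[(U-U_0)^2]^{1/2}\lesssim\mathbf{m}_{N,2}\lor\mathbf{p}_{N,2}$. Writing
\begin{align*}
U-U_0=\bigl(\mu(1,X)-\mu_0(1,X)\bigr)\Bigl(1-\tfrac{D}{\pi(X)}\Bigr)+\frac{D\bigl(Y-\mu_0(1,X)\bigr)\bigl(\pi_0(X)-\pi(X)\bigr)}{\pi(X)\pi_0(X)},
\end{align*}
the first term is $\lesssim|\mu(1,X)-\mu_0(1,X)|$ by boundedness of $\pi^{-1}$; for the second, conditioning on $X$ gives $E\bigl[D(Y-\mu_0(1,X))^2(\pi(X)-\pi_0(X))^2\bigr]=E\bigl[(\pi(X)-\pi_0(X))^2\,E[D(Y-\mu_0(1,X))^2\mid X]\bigr]\lesssim\mathbf{p}_{N,2}^2$, using boundedness of the conditional second moment of the outcome (subsumed by the maintained stochastic-error conditions of Section \ref{sec:asymtheory}) and that $\pi,\pi_0$ are bounded away from zero. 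Hence $E[(U-U_0)^2]^{1/2}\lesssim\mathbf{m}_{N,2}+\mathbf{p}_{N,2}$, the same for $T$, so $\Lambda_N\le\xi_k\cdot O(\mathbf{m}_{N,2}\lor\mathbf{p}_{N,2})=o(1)$ by the first display of Assumption \ref{as:fsrb}. With $B_N=\Lambda_N=o(1)$, Assumption \ref{as:smallbias} holds and the conclusion follows.

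The only genuinely non-mechanical step is the derivation of the mixed-bias identity, which requires carrying out the conditional-expectation algebra carefully under the true data-generating process; everything else is bookkeeping to route the bounded- versus unbounded-basis cases through Cauchy--Schwarz or H\"older and to match $B_N,\Lambda_N$ to the low-level rates of Assumption \ref{as:fsrb}. In particular no regularity beyond Assumptions \ref{as:rbcate}--\ref{as:fsrb} and the maintained conditions of Section \ref{sec:asymtheory} is needed, and the argument is a direct specialization of the one used for Corollary \ref{cr:late}.
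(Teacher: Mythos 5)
Your argument is correct and verifies exactly what the paper's proof establishes: $B_N\lesssim\sqrt{kN}\,\mathbf{m}_{N,2}\mathbf{p}_{N,2}$ (bounded basis), $B_N\lesssim\sqrt{kN}\,\mathbf{m}_{N,2\omega}\mathbf{p}_{N,2\psi}$ (unbounded basis), and $\Lambda_N\lesssim\xi_k(\mathbf{m}_{N,2}\lor\mathbf{p}_{N,2})$, after which the conclusion is immediate. The organization differs mildly from the paper's. The paper proves this corollary by invoking the proof of Corollary \ref{cr:late}, which decomposes $U(O,\eta)-U_0$ into three terms per arm ($S_1$: regression error times $1-D/\pi_0$; $S_2$: residual times score error; $S_3$: product of errors), kills $E[p(V)S_1]=E[p(V)S_2]=0$ by iterated expectations, and bounds only the cross term; you instead compute the exact mixed-bias identity $E[U(O,\eta)-U_0\mid X]=(\mu(1,X)-\mu_0(1,X))(\pi(X)-\pi_0(X))/\pi(X)$, which accomplishes the same cancellation in one step and is arguably cleaner, and for $\Lambda_N$ you use an equivalent two-term split. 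In the unbounded-basis case you also bound $\|E[p(V)(U-U_0)]\|$ through the eigenvalue condition (Assumption \ref{as:basis_eigen}) rather than the paper's coordinate-wise sum, which saves the $\sqrt{k}$ factor and still dominates the required rate, so the same conclusion follows. Two shared caveats, which are features of the paper's argument as much as yours: the bound on the second term of $U-U_0$ uses boundedness of $E[(Y-\mu_0(1,X))^2\mid D=1,X]$ (the paper's proof of Corollary \ref{cr:late} makes the same implicit use for $E[S_2^2]$; Assumption \ref{as:samplingerror} only bounds the error conditional on $V$), and the $T$-arm needs $(1-\pi(x))^{-1}$ bounded, while Assumption \ref{as:fsrb} literally states only $|\pi(x)^{-1}|\le\overline{C}$; neither point is a gap relative to the paper's own proof.
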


\subsection{Instrumented Difference-in-Differences}
Consider the setting of Example \ref{ex:idid}, and recall that the parameter of interest is $\theta_0(v)=E[Y_1-Y_0|V=v]$.
Below, we provide the identification assumptions for CATE in the IDID setting, and the orthogonal signals in this setting.
\begin{assumption}[Identification Assumptions for IDID]\label{as:idid}
\!
\begin{enumerate}
    \item (Consistency). $D=D_{zw}$ if $Z=z$ and $W=w$. $Y=Y_{dw}$ if $D=d$ and $W=w$. 
    \item (Positivity). $0<P(Z=z,W=w|X=x)<1$ for all $z,w\in\{0,1\}$ and $x\in\mathcal{X}$.
    \item (Random Sampling). $Y_{dw},D_{zw}\independent W|Z,X$ for $d,z,w\in\{0,1\}$.
    \item (Trend Relevance). $E[D_{11}-D_{10}|X]\neq E[D_{01}-D_{00}|X]$.
    \item (Trend Unconfoundedness). $D_{zw},Y_{01}-Y_{00},Y_{1w}-Y_{0w}\independent Z|X$ for $z,w\in\{0,1\}$.
    \item (No Unmeasured Common Effect Modifier). $Cov(D_{1w}-D_{0w},Y_{1w}-Y_{0w}|X)=0$ for $w\in\{0,1\}$.
    \item (Stable Treatment Effect Over Time). $E[Y_1-Y_0|X]:=E[Y_{11}-Y_{01}|X]=E[Y_{10}-Y_{00}|X]$.
\end{enumerate}
\end{assumption}
Assumption \ref{as:idid}.1 relates the potential variables to their realized counterparts.
Assumption \ref{as:idid}.2 restricts the propensity score from taking extreme values.
Assumption \ref{as:idid}.3 states that the distributions of the potential variables remain the same over time when conditioned on $Z$ and $X$.
Assumption \ref{as:idid}.4, 5 and 6 are weaker than the usual IV assumptions, where $Z$ is correlated with $D$ but independent of every potential variable.
Instead, IDID only assumes $Z$ is a valid IV for the difference of the potential variables.
Assumption \ref{as:idid} states that the magnitude of a treatment effect does not change over time.

Consider the following signals
\begin{align*}
    U(O,\eta_0)&=\sum_{(w,z)\in\{0,1\}^2}(-1)^{w+z}\left(\mu_0(w,z,X)+\frac{1_{W=w}1_{Z=z}(Y-\mu_0(w,z,X))}{\rho_0(w,z,X)}\right),\\
    T(O,\eta_0)&=\sum_{(w,z)\in\{0,1\}^2}(-1)^{w+z}\left(\pi_0(w,z,X)+\frac{1_{W=w}1_{Z=z}(D-\pi_0(w,z,X))}{\rho_0(w,z,X)}\right),
\end{align*}
where $\rho_0(w,z,x)=P(W=w,Z=z|X=x)$ and $1_A$ is an indicator function that takes 1 if $A$ is true.
Note that estimating $\rho$ can be done by constructing a four-class classifier and then obtaining posterior probabilities.
The following theorem establishes the identification of CATE in IDID and the Neyman orthogonality of the signals.

\begin{theorem}\label{th:mcidid}
Under Assumption \ref{as:idid}, the following statements hold.

\noindent(a) $\theta_0(v)=\nu_0(v)/\zeta_0(v)$, where $\nu_0(v)$ and $\zeta_0(v)$ are defined in Example \ref{ex:idid}.

\noindent(b) $E[U(O,\eta_0)-\nu_0(V)|V=v]=0$ and $E[T(O,\eta_0)-\zeta_0(V)|V=v]=0$ for all $v\in\mathcal{V}$.

\noindent(c) The moment equations in (b) satisfy the Neyman orthogonality condition:
\begin{align*}
    \partial_s E[U(O,\eta_0+s(\eta-\eta_0))-\nu_0(V)|V=v]|_{s=0}&=0,\\
    \partial_s E[T(O,\eta_0+s(\eta-\eta_0))-\zeta_0(V)|V=v]|_{s=0}&=0.
\end{align*}
\end{theorem}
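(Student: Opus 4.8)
The plan is to verify the three claims in sequence, closely mirroring the structure used for LATE (Theorem \ref{th:mclate}) and ratio-based CATE (Theorem \ref{th:mcrb}), since the IDID signals have the same doubly robust ``regression adjustment plus inverse-propensity correction'' form, only now indexed by the four cells $(w,z)\in\{0,1\}^2$.

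For part (a), I would start from the known IDID identification result (stated in Example \ref{ex:idid}), which under Assumption \ref{as:idid} gives $\theta_0(v)$ as the ratio whose numerator is $E[\sum_{(w,z)}(-1)^{w+z}\mu_0(w,z,X)\mid V=v]$ and whose denominator is the analogous sum with $\pi_0$. This is precisely $\nu_0(v)/\zeta_0(v)$ as defined there, so (a) is essentially a restatement; the only work is to cite the derivation of the IDID identification formula from \cite{Ye2020InstrumentedD} (trend unconfoundedness, no unmeasured common effect modifier, stable treatment effect, and trend relevance for the denominator being nonzero). For part (b), I would take the conditional expectation of $U(O,\eta_0)$ given $V=v$ by first conditioning on $X$: for each cell, $E[1_{W=w}1_{Z=z}(Y-\mu_0(w,z,X))/\rho_0(w,z,X)\mid X] = E[1_{W=w}1_{Z=z}\mid X]\,E[Y-\mu_0(w,z,X)\mid W=w,Z=z,X]/\rho_0(w,z,X)$, and by the definition $\mu_0(w,z,x)=E[Y\mid W=w,Z=z,X=x]$ the inner conditional expectation of the residual vanishes, while $E[1_{W=w}1_{Z=z}\mid X]=\rho_0(w,z,X)$ cancels the denominator; hence $E[U(O,\eta_0)\mid X]=\sum_{(w,z)}(-1)^{w+z}\mu_0(w,z,X)$, and taking a further conditional expectation given $V=v$ yields $\nu_0(v)$. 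The identical argument with $D$ and $\pi_0$ gives the $T$ statement. Positivity (Assumption \ref{as:idid}.2) is what guarantees $\rho_0(w,z,x)$ is bounded away from zero so these manipulations are valid.

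For part (c), the Neyman orthogonality, I would perturb each component of $\eta_0=\{\mu_0,\pi_0,\rho_0\}$ along the path $\eta_0+s(\eta-\eta_0)$ and differentiate at $s=0$. Writing $U(O,\eta_0+s(\eta-\eta_0))$ for a single cell $(w,z)$, the $s$-derivative of the $\mu$-part is $(-1)^{w+z}(\mu-\mu_0)(w,z,X)(1 - 1_{W=w}1_{Z=z}/\rho_0(w,z,X))$, whose conditional expectation given $X$ is zero because $E[1_{W=w}1_{Z=z}\mid X]=\rho_0(w,z,X)$; the $s$-derivative of the $\rho$-part is $-(-1)^{w+z}1_{W=w}1_{Z=z}(Y-\mu_0(w,z,X))(\rho-\rho_0)(w,z,X)/\rho_0(w,z,X)^2$, whose conditional expectation given $X$ is zero because $E[1_{W=w}1_{Z=z}(Y-\mu_0(w,z,X))\mid X]=\rho_0(w,z,X)E[Y-\mu_0(w,z,X)\mid W=w,Z=z,X]=0$. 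Summing over the four cells and taking the further conditional expectation given $V=v$ leaves the Gateaux derivative equal to zero; the $T$ case is identical with $Y,\mu_0$ replaced by $D,\pi_0$. I do not expect a genuine obstacle here — the only mild subtlety is bookkeeping the four-cell sum and confirming that differentiation under the expectation is justified, which follows from the uniform boundedness conditions that will be imposed on the neighborhoods $\mathcal{S}_N$ (analogous to Assumption \ref{as:fslate}), so that dominated convergence applies on $s\in[0,1]$. The main conceptual content is simply recognizing that the IDID signal is a sum of four ordinary augmented-IPW signals, each of which is individually Neyman orthogonal by the standard doubly robust cancellation, and orthogonality is preserved under finite linear combinations.
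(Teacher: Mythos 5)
Your plan for parts (b) and (c) matches the paper's proof in substance: the paper disposes of (b) by exactly the tower-property observation you make, and it proves (c) by a generic lemma (Lemma \ref{le:orthosignal}) establishing Neyman orthogonality of a single augmented-IPW signal $\mu_0(1,X)+Z(Y-\mu_0(1,X))/\rho_0(X)$, applied cell by cell with $1_{W=w}1_{Z=z}$ playing the role of $Z$ and summed by linearity --- precisely your ``four AIPW components, orthogonality preserved under finite linear combinations'' argument, with the same interchange-of-differentiation-and-integration caveat. Where you genuinely diverge is part (a): you treat it as a restatement to be discharged by citing \cite{Ye2020InstrumentedD}, whereas in the paper this is where essentially all of the work in the proof lives. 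The paper derives the identification from Assumption \ref{as:idid} directly: using consistency, random sampling and trend unconfoundedness it writes, for each $z$, $\mu_0(1,z,X)-\mu_0(0,z,X)=E[D_{z1}(Y_{11}-Y_{01})-D_{z0}(Y_{10}+Y_{00})\mid X]+E[Y_{01}-Y_{00}\mid X]$, so that the double difference $\sum_{(w,z)}(-1)^{w+z}\mu_0(w,z,X)$ becomes $E[(D_{11}-D_{01})(Y_{11}-Y_{01})-(D_{10}-D_{00})(Y_{10}+Y_{00})\mid X]$, and then invokes the no-unmeasured-common-effect-modifier (zero conditional covariance) and stable-treatment-effect assumptions to factor this as $E[D_{11}-D_{01}-D_{10}+D_{00}\mid X]\,E[Y_1-Y_0\mid X]$, with the analogous $X$-level computation for the $\pi_0$ double difference and trend relevance ensuring the denominator is nonzero; the $V$-conditional statement follows by averaging these $X$-level identities. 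Your citation route is legitimate and shorter, but it buys less: you would still need to check that the cited identification is established at the conditional-on-$X$ level under exactly Assumption \ref{as:idid} (so that it can be integrated up to $V$ as in the displayed definitions of $\nu_0$ and $\zeta_0$), which is exactly the content the paper chooses to verify explicitly rather than import. The paper's self-contained derivation also makes transparent which of the seven sub-assumptions is used at which step, which your outline only gestures at.
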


Then, we give a set of sufficient conditions the nuisance estimators must satisfy so that the general results in Section \ref{sec:asymtheory} hold.
Given the true nuisance functions $\eta_0=\{\mu_0,\pi_0,\rho_0\}$ and sequences of shrinking neighborhoods $\mathcal{S}_N^\mu$ of $\mu_0$, $\mathcal{S}_N^\pi$ of $\pi_0$, $\mathcal{S}_N^\rho$ of $\rho_0$, define the following rates:
\begin{align*}
    \mathbf{m}_{N,q}&:=\sup_{\mu\in \mathcal{S}_N^\mu}\sup_{(w,z)\in\{0,1\}^2}E[(\mu(w,z,X)-\mu_0(w,z,X))^q]^{1/q},\\
    \mathbf{p}_{N,q}&:=\sup_{\pi\in \mathcal{S}_N^\pi}\sup_{(w,z)\in\{0,1\}^2}E[(\pi(w,z,X)-\pi_0(w,z,X))^q]^{1/q},\\
    \mathbf{r}_{N,q}&:=\sup_{\rho\in \mathcal{S}_N^\rho}\sup_{(w,z)\in\{0,1\}^2}E[(\rho(w,z,X)-\rho_0(w,z,X))^q]^{1/q},
\end{align*}
where $q\geq2$ or $q=\infty$.

\begin{assumption}[First-Stage Rate for IDID]\label{as:fsidid}
Assume that there exists a sequence of numbers $\epsilon_N=o(1)$ and sequences of neighborhoods $\mathcal{S}_N^\mu$ of $\mu_0$, $\mathcal{S}_N^\pi$ of $\pi_0$ and $\mathcal{S}_N^\rho$ of $\rho_0$ such that the first-stage estimate $\{\hat{\mu},\hat{\pi},\hat{\rho}\}$ belongs to the set $\mathcal{S}_N^\mu\times \mathcal{S}_N^\pi\times \mathcal{S}_N^\rho$ with probability at least $1-\epsilon_N$.
Assume that mean square rates $\mathbf{m}_{N,2},\mathbf{p}_{N,2},\mathbf{r}_{N,2}$ decay sufficiently fast:
\begin{align*}
    \xi_k(\mathbf{m}_{N,2}\lor\mathbf{p}_{N,2}\lor\mathbf{r}_{N,2})=o(1),
\end{align*}
and one of two alternative conditions holds.
(i) Bounded basis. There exists $C_p<\infty$ so that $\sup_{v\in\mathcal{V}}\allowbreak\|p(v)\|_\infty\leq C_p$, $\sqrt{kN}\mathbf{m}_{N,2}\mathbf{r}_{N,2}=o(1)$ and $\sqrt{kN}\mathbf{p}_{N,2}\mathbf{r}_{N,2}=o(1)$.
(ii) Unbounded basis. There exist $\omega,\psi\in[1,\infty]$, $1/\omega+1/\psi=1$ so that $\sqrt{kN}\mathbf{m}_{N,2\omega}\mathbf{r}_{N,2\psi}=o(1)$, and there exist $\omega',\psi'\in[1,\infty]$, $1/\omega'+1/\psi'=1$ so that $\sqrt{kN}\mathbf{p}_{N,2\omega'}\mathbf{r}_{N,2\psi'}=o(1)$.
Finally, the functions in $\mathcal{S}_N^\mu$, $\mathcal{S}_N^\pi$ and $\mathcal{S}_N^\rho$ are bounded uniformly over their domain:
\begin{align*}
    \sup_{\mu\in \mathcal{S}_N^\mu}\sup_{(w,z)\in\{0,1\}^2}\sup_{x\in\mathcal{X}}|\mu(w,z,x)|&\lor\sup_{\pi\in \mathcal{S}_N^\pi}\sup_{(w,z)\in\{0,1\}^2}\sup_{x\in\mathcal{X}}|\pi(w,z,x)|\\
    &\qquad\lor\sup_{\rho\in \mathcal{S}_N^\rho}\sup_{(w,z)\in\{0,1\}^2}\sup_{x\in\mathcal{X}}|\rho^{-1}(w,z,x)|<\overline{C}<\infty.
\end{align*}
\end{assumption}

\begin{corollary}\label{cr:idid}
Suppose that Assumption \ref{as:idid} and \ref{as:fsidid} hold.
Then, the orthogonal signals $U$ and $T$ satisfy Assumption \ref{as:smallbias}, and consequently, Theorem \ref{th:pnormal}-\ref{th:confband} hold for CATE in IDID if the other assumptions stated in Section \ref{sec:asymtheory} are also satisfied.
\end{corollary}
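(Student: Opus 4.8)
The plan is to follow the template of Corollaries \ref{cr:late} and \ref{cr:rbcate}: since the statement already presupposes the remaining hypotheses of Theorems \ref{th:pnormal}--\ref{th:confband}, it suffices to verify that the IDID signals satisfy Assumption \ref{as:smallbias}, i.e.\ to bound the scaled bias term $B_N$ and the second-moment term $\Lambda_N$. Identification, validity of the moment equations, and Neyman orthogonality are already supplied by Theorem \ref{th:mcidid}(a)--(c). By Assumption \ref{as:fsidid} the cross-fitted estimates $\hat\eta_g=\{\hat\mu,\hat\pi,\hat\rho\}$ lie in $\mathcal{S}_N^\mu\times\mathcal{S}_N^\pi\times\mathcal{S}_N^\rho$ on every fold with probability $1-o(1)$ (union bound over the $G$ folds), so it remains to establish the two rate bounds uniformly over that product set. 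The $U$- and $T$-signals are structurally symmetric --- interchanging $\mathbf{m}_{N,q}\leftrightarrow\mathbf{p}_{N,q}$, keeping $\mathbf{r}_{N,q}$, and replacing $(\omega,\psi)$ by $(\omega',\psi')$ --- so I would argue for $U$ only.

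For $\Lambda_N$, I would expand $U(O,\eta)-U_0$: collapsing the indicator weights, it equals $\sum_{(w,z)}(-1)^{w+z}(\mu(w,z,X)-\mu_0(w,z,X))$ plus $(-1)^{W+Z}\big[(Y-\mu_0(W,Z,X))(\rho_0(W,Z,X)^{-1}-\rho(W,Z,X)^{-1})+\rho(W,Z,X)^{-1}(\mu_0(W,Z,X)-\mu(W,Z,X))\big]$. The uniform boundedness of the functions in $\mathcal{S}_N^\mu,\mathcal{S}_N^\pi,\mathcal{S}_N^\rho$ and of $\rho^{-1}$ (Assumption \ref{as:fsidid}), together with the maintained moment conditions of Section \ref{sec:asymtheory} (which in particular force the regression residual $Y-\mu_0(W,Z,X)$ into $L^2$), give $\|U-U_0\|_{P,2}\lesssim\mathbf{m}_{N,2}\lor\mathbf{r}_{N,2}$ uniformly over $\mathcal{S}_N$. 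Since $\|p(v)\|\le\xi_k$ pointwise, $E[\|p(V)(U-U_0)\|^2]^{1/2}\le\xi_k\|U-U_0\|_{P,2}\lesssim\xi_k(\mathbf{m}_{N,2}\lor\mathbf{p}_{N,2}\lor\mathbf{r}_{N,2})=o(1)$, which is precisely the first rate condition in Assumption \ref{as:fsidid}; the same computation for $T$ yields $\Lambda_N=o(1)$.

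For $B_N$, I would use Neyman orthogonality quantitatively. Because $V$ is a subvector of $X$, $E[p(V)(U-U_0)]=E[p(V)\,E[U-U_0\mid X]]$, and a cell-by-cell AIPW computation gives the mixed-bias form $E[U(O,\eta)-U_0\mid X]=\sum_{(w,z)}(-1)^{w+z}\big(\mu(w,z,X)-\mu_0(w,z,X)\big)\big(\rho(w,z,X)-\rho_0(w,z,X)\big)/\rho(w,z,X)$, a sum of products of first-stage errors with no surviving linear term. In the bounded-basis case I would bound componentwise, $|E[p_j(V)(U-U_0)]|\le C_p\,E\big[|E[U-U_0\mid X]|\big]\lesssim\mathbf{m}_{N,2}\mathbf{r}_{N,2}$ by Cauchy--Schwarz, hence $\|E[p(V)(U-U_0)]\|\lesssim\sqrt{k}\,\mathbf{m}_{N,2}\mathbf{r}_{N,2}$; in the unbounded-basis case, Cauchy--Schwarz in the $E[p(V)p(V)']$-inner product (whose operator norm is $O(1)$ by Assumption \ref{as:basis_eigen}) reduces the norm to $\|E[U-U_0\mid X]\|_{P,2}$, which by the generalized H\"older inequality is $\lesssim\mathbf{m}_{N,2\omega}\mathbf{r}_{N,2\psi}$. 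Multiplying by $\sqrt N$ and invoking $\sqrt{kN}\,\mathbf{m}_{N,2}\mathbf{r}_{N,2}=o(1)$ (resp.\ $\sqrt{kN}\,\mathbf{m}_{N,2\omega}\mathbf{r}_{N,2\psi}=o(1)$) gives $\sqrt N\sup_{\eta\in\mathcal{S}_N}\|E[p(V)(U-U_0)]\|=o(1)$; the twin bound for $T$ (with $\mathbf{p}_{N,q},\mathbf{r}_{N,q}$ and $(\omega',\psi')$) completes $B_N=o(1)$.

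Having verified Assumption \ref{as:smallbias}, the conclusions of Theorems \ref{th:pnormal}--\ref{th:confband} follow once the remaining maintained conditions of Section \ref{sec:asymtheory} (on the basis, the approximation error, the stochastic errors, and the various $o(1)$ rate restrictions) are imposed, exactly as the statement presupposes. I expect the only genuinely new work to be the four-cell AIPW algebra that turns the abstract orthogonality of Theorem \ref{th:mcidid}(c) into the explicit product representation of $E[U-U_0\mid X]$ (and its $T$-analogue) for the $(W,Z)$-valued inverse-probability weights, and the bookkeeping needed to keep the bounded- and unbounded-basis cases consistent with the conjugate exponent pairs $(\omega,\psi)$ and $(\omega',\psi')$; everything else is H\"older's inequality plus the uniform boundedness conditions of Assumption \ref{as:fsidid}.
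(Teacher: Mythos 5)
Your proposal is correct and is essentially the paper's own argument: the paper proves Corollary \ref{cr:idid} by invoking verbatim the proof of Corollary \ref{cr:late}, i.e.\ the AIPW decomposition in which the terms linear in a single first-stage error have zero conditional mean given $X$, leaving only the mixed product $(\mu-\mu_0)(\rho-\rho_0)/\rho$ to be bounded via Cauchy--Schwarz (bounded basis) or H\"older (unbounded basis) for $B_N$, and the uniform boundedness in Assumption \ref{as:fsidid} for $\Lambda_N$ --- exactly the computation you carry out cell-by-cell for the $(W,Z)$ indicator weights. The only (immaterial) deviations are a sign slip in your expansion of $U-U_0$ (the factor should be $\rho^{-1}-\rho_0^{-1}$, which affects no bound) and your unbounded-basis step using the operator norm of $E[p(V)p(V)']$, which drops the $\sqrt{k}$ factor the paper retains but still gives $B_N=o(1)$ under the stated rate condition.
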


\subsection{Treatment Effects in Data Combination Setting}
Consider the setting of Example \ref{ex:dcte}, and recall that the parameter of interest is CATE $\theta_0(v)=E[Y_1-Y_0|V=v]$ or LATE $\theta_0(v)=E[Y_1-Y_0|D_1>D_0,V=v]$.
Below, we provide the identification assumptions for CATE and LATE in the data combination setting, and the orthogonal signals in this setting.

\begin{assumption}[Identification Assumptions for the Data Combination Setting]\label{as:dcid}
\!
\begin{enumerate}
    \item (Random Sampling). $Y_1,Y_0,D_1,D_0,Z_1,Z_0\independent H,W|X$.
    \item (Instrument Unconfoundedness). $Y_1,Y_0,D_1,D_0\independent Z_1,Z_0|X$.
    \item (Positivity). $0<P(H=h,W=w|X=x)<1$ for $h,w=0,1$ and all $x\in\mathcal{X}$.
    \item (Different Treatment Regimes and Instrument Relevance). $P(Z_1|V=v)\neq P(Z_0|V=v)$ and $P(D_1|V=v)\neq P(D_0|V=v)$ for all $v\in\mathcal{V}$.
    \item (Consistency). $Z=WZ_1+(1-W)Z_0$, $D=ZD_1+(1-Z)D_0$ and $Y=DY_1+(1-D)Y_0$.
\end{enumerate}
\end{assumption}
Assumption \ref{as:dcid}.1 states that $H$ and $W$ are randomly assigned within the subpopulation sharing the same level of covariates.
Likewise, Assumption \ref{as:dcid}.2 states that $Z$ is randomly assigned within strata defined by $X$.
Assumption \ref{as:dcid}.3 can be automatically satisfied as long as we have four different datasets.
Assumption \ref{as:dcid}.4 is the key assumption in this setting, which states that the two treatment regimes are different, and the treatment assignment has some impact on the individual's decision to receive a treatment or not.
Assumption \ref{as:dcid}.5 relates the potential variables to their realized counterparts.
We need $P(D_1\geq D_0|V=v)=1$ for $v\in\mathcal{V}$ when the parameter of interest is LATE, and $P(D_1>D_0|V=v)=1$ for $v\in\mathcal{V}$ when the parameter of interest is CATE.
\begin{remark}[Facilitating Data Collection]\label{rm:datacollection}
As pointed out in \cite{shinoda2022estimation}, we do not need a positivity condition on $Z_1$ and $Z_0$ as long as Assumption \ref{as:dcid}.4 is satisfied.
We can use a dataset with $P(Z_0=1|V=v)=0$ for all $v\in\mathcal{V}$ if $P(Z_1=1|V=v)\neq0$ for all $v\in\mathcal{V}$, which suggests the use of datasets collected from individuals without any intervention.
Thus, although we need two different treatment regimes in this setting, a single intervention is sufficient in practice.
Datasets without any intervention are available, for example, as government statistics at no cost.
Moreover, we can regard this problem setting as the repeated cross-sectional design, where $W=0,1$ represents the time point before and after the intervention is carried out, respectively.
In this sense, the structure of this setting is similar to the difference-in-differences.
Using a dataset with no intervention also benefits the model selection in CATE estimation and LATE estimation under one-sided noncompliance since CV based on (\ref{eq:cvcriterion}) becomes valid.
We have that
\begin{align*}
    \zeta_0(v)=E[Z_1D_1|V=v]>0,
\end{align*}
by $P(Z_0=1|V=v)=0$, $P(D_0=0|V=v)=1$ and Assumption \ref{as:dcid}.
\end{remark}

Consider the following signals
\begin{align*}
    U(O,\eta_0)&:=\mu_0(1,X)-\mu_0(0,X)+\frac{HW(Y-\mu_0(1,X))}{\rho_0(1,1,X)}-\frac{H(1-W)(Y-\mu_0(0,X))}{\rho_0(1,0,X)},\\
    T(O,\eta_0)&:=\pi_0(1,X)-\pi_0(0,X)+\frac{(1-H)W(D-\pi_0(1,X))}{\rho_0(0,1,X)}-\frac{(1-H)(1-W)(D-\pi_0(0,X))}{\rho_0(0,0,X)},
\end{align*}
where $\rho_0(h,w,x)=P(H=h,W=w|X=x)$.
The same comment for $\rho_0$ in Example \ref{ex:idid} also applies here.
The following theorem establishes the identification of CATE and LATE in the data combination setting and the Neyman orthogonality of the signals.

\begin{theorem}\label{th:mcdc}
Suppose Assumption \ref{as:dcid} holds.
Then, the following statements hold for LATE if $P(D_1\geq D_0|V=v)=1$ holds for all $v\in\mathcal{V}$.
The same statements hold for CATE if $P(D_1>D_0|V=v)=1$ holds for all $v\in\mathcal{V}$.

\noindent(a) $\theta_0(v)=\nu_0(v)/\zeta_0(v)$, where $\nu_0(v)$ and $\zeta_0(v)$ are defined in Example \ref{ex:dcte}.

\noindent(b) $E[U(O,\eta_0)-\nu_0(V)|V=v]=0$ and $E[T(O,\eta_0)-\zeta_0(V)|V=v]=0$ for all $v\in\mathcal{V}$.

\noindent(c) The moment equations in (b) satisfy the Neyman orthogonality condition:
\begin{align*}
    \partial_s E[U(O,\eta_0+s(\eta-\eta_0))-\nu_0(V)|V=v]|_{s=0}&=0,\\
    \partial_s E[T(O,\eta_0+s(\eta-\eta_0))-\zeta_0(V)|V=v]|_{s=0}&=0.
\end{align*}
\end{theorem}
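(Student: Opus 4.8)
The plan is to prove the three assertions in a convenient order, handling (b) and (c) first and (a) last, because (b) and (c) are self-contained identities about the conditional means of the correction terms in $U$ and $T$, whereas (a) is the only place the potential-outcome structure and the monotonicity condition really enter. In all three the basic move is to condition on $X$ first and then pass to $V=v$ by iterated expectations, using that each nuisance $\mu_0(w,\cdot)$, $\pi_0(w,\cdot)$, $\rho_0(h,w,\cdot)$ is by construction a conditional expectation or probability given $X$. For (b) I would write $E[U(O,\eta_0)\mid V=v]=E[\mu_0(1,X)-\mu_0(0,X)\mid V=v]+E[C_1-C_0\mid V=v]$, where $C_1,C_0$ denote the two correction terms of $U$, and show each has mean zero already given $X$: for instance $E[HW(Y-\mu_0(1,X))\mid X]=\rho_0(1,1,X)(E[Y\mid H=1,W=1,X]-\mu_0(1,X))=0$ by the definitions of $\mu_0$ and $\rho_0$ together with $\rho_0(1,1,X)>0$ (Positivity, Assumption \ref{as:dcid}.3), and likewise for the $W=0$ term; the tower property then gives $E[U(O,\eta_0)\mid V=v]=\nu_0(v)$. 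The identity $E[T(O,\eta_0)\mid V=v]=\zeta_0(v)$ follows from the mirror computation under $H\mapsto 1-H$, $Y\mapsto D$, $\mu_0\mapsto\pi_0$, and $\rho_0(1,\cdot,\cdot)\mapsto\rho_0(0,\cdot,\cdot)$.

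For (c) I would differentiate $s\mapsto E[U(O,\eta_0+s(\eta-\eta_0))\mid V=v]$ at $s=0$, interchanging $\partial_s$ with the expectation (legitimate under the standard boundedness conditions on the nuisances and on $1/\rho_0$). The derivative splits into a part linear in $\Delta_\mu:=\mu-\mu_0$ and a part linear in $\Delta_\rho:=\rho-\rho_0$. The $\Delta_\rho$-part is a scalar multiple of $E[HW(Y-\mu_0(1,X))\mid X]$ (resp.\ its $W=0$ analogue), hence vanishes after conditioning on $X$ by the same computation as in (b). The $\Delta_\mu$-part contributes $\Delta_\mu(1,X)-\Delta_\mu(0,X)$ from the plug-in term $\mu(1,X)-\mu(0,X)$ and, after taking $E[\cdot\mid X]$ and using $E[HW\mid X]=\rho_0(1,1,X)$ and $E[H(1-W)\mid X]=\rho_0(1,0,X)$, contributes $-\Delta_\mu(1,X)$ and $+\Delta_\mu(0,X)$ from the two correction terms; the three terms cancel. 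Since $\nu_0(V)$ carries no $s$, this yields $\partial_s E[U(O,\eta_0+s(\eta-\eta_0))-\nu_0(V)\mid V=v]|_{s=0}=0$, and the argument for $T$ is symmetric.

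For (a) I would use Consistency to express the realized outcome and treatment in the outcome ($H=1$) and treatment ($H=0$) datasets under regime $w$ via $D=D_{Z_w}=Z_wD_1+(1-Z_w)D_0$ and $Y=Y_D$, then apply Random Sampling (Assumption \ref{as:dcid}.1) to drop the conditioning on $(H,W)$, obtaining $\mu_0(w,x)=E[D_{Z_w}(Y_1-Y_0)+Y_0\mid X=x]$ and $\pi_0(w,x)=E[D_{Z_w}\mid X=x]$. Differencing over $w$ collapses $D_{Z_1}-D_{Z_0}$ to $(Z_1-Z_0)(D_1-D_0)$, and Instrument Unconfoundedness (Assumption \ref{as:dcid}.2) factors the conditional expectations as $E[Z_1-Z_0\mid X]\cdot E[(D_1-D_0)(Y_1-Y_0)\mid X]$ for the numerator and $E[Z_1-Z_0\mid X]\cdot E[D_1-D_0\mid X]$ for the denominator. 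Monotonicity ($D_1\ge D_0$, resp.\ $D_1>D_0$ a.s.) rewrites $(D_1-D_0)(Y_1-Y_0)$ as $1\{D_1>D_0\}(Y_1-Y_0)$ and identifies the complier subpopulation; a final iterated expectation down to $V=v$ then gives $\nu_0(v)/\zeta_0(v)=E[Y_1-Y_0\mid D_1>D_0,V=v]$ in the LATE case, which collapses to $E[Y_1-Y_0\mid V=v]$ in the CATE case because $P(D_1>D_0\mid V=v)=1$ makes compliers the whole population.

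I expect the crux to be the last step of (a). The common instrument-strength factor $E[Z_1-Z_0\mid X]=P(Z_1=1\mid X)-P(Z_0=1\mid X)$ sits inside both the numerator and the denominator conditional expectations and does not cancel by pure algebra; to recover the conditional LATE/CATE one has to use that the treatment regimes vary with $X$ only through $V$ — consistent with the relevance condition in Assumption \ref{as:dcid}.4 being phrased in terms of $V$ — so that this factor is $V$-measurable and factors out of the ratio. Getting the Random-Sampling and Instrument-Unconfoundedness bookkeeping to line up the four heterogeneous datasets, and verifying this cancellation, is the delicate part; by contrast (b) and (c) are routine once the conditional-mean-zero structure of the corrections is in hand.
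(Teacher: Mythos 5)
Your parts (b) and (c) are correct and are essentially the paper's own argument: the paper dismisses (b) with the same conditional-mean-zero computation for the correction terms, and proves (c) by invoking Lemma \ref{le:orthosignal} with the composite binary indicators $HW$, $H(1-W)$, $(1-H)W$, $(1-H)(1-W)$ and the corresponding $\rho_0(h,w,\cdot)$; the inline derivative you describe (the $\Delta_\rho$ part annihilated by $E[HW(Y-\mu_0(1,X))\mid X]=0$, the $\Delta_\mu$ contributions from the plug-in and correction terms cancelling) is exactly the computation inside that lemma, so nothing is different there beyond packaging.

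For (a) you also track the paper up to the $X$-conditional factorization: consistency plus Random Sampling give $\mu_0(1,X)-\mu_0(0,X)=E[(Z_1-Z_0)(D_1-D_0)(Y_1-Y_0)\mid X]$ and $\pi_0(1,X)-\pi_0(0,X)=E[(Z_1-Z_0)(D_1-D_0)\mid X]$, and Instrument Unconfoundedness together with $D_1-D_0\in\{0,1\}$ factors these into $E[Z_1-Z_0\mid X]\,E[D_1-D_0\mid X]\,E[Y_1-Y_0\mid D_1>D_0,X]$ and $E[Z_1-Z_0\mid X]\,E[D_1-D_0\mid X]$. The step you single out as the crux is genuinely where the difficulty sits, and it is precisely the step the paper elides: $\nu_0(v)/\zeta_0(v)$ is a ratio of $V$-conditional means of \emph{products} of $X$-measurable factors, the weight $E[Z_1-Z_0\mid X]$ does not cancel by algebra, and the paper's proof simply replaces each $E[\cdot\mid X]$ by $E[\cdot\mid V=v]$ in the factorized product, citing Positivity (Assumption \ref{as:dcid}.3), which does not license that interchange. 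Your proposed repair --- that $P(Z_1=1\mid X)-P(Z_0=1\mid X)$ depends on $X$ only through $V$, so that it is $V$-measurable, factors out of both conditional means, and is nonzero by Assumption \ref{as:dcid}.4 --- is indeed sufficient, but it is an additional restriction: Assumption \ref{as:dcid}.4 as stated is only a relevance condition at the level of $V$ and does not imply that the regime contrast is $V$-measurable. So, as written, your proof of (a) does not close under the theorem's stated hypotheses; what you have done is make explicit the unstated condition on which the paper's own proof of (a) implicitly relies (an alternative sufficient condition would be that $E[Y_1-Y_0\mid D_1>D_0,X]$ depends on $X$ only through $V$, in which case it factors out of the numerator and the weighted ratio collapses regardless of the regime contrast). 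Flagging this, rather than asserting the $V$-level factorization as the paper does, is the right instinct, but the gap should be acknowledged as a gap in the theorem's hypotheses rather than attributed to Assumption \ref{as:dcid}.4.
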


Then, we give a set of sufficient conditions the nuisance estimators must satisfy so that the general results in Section \ref{sec:asymtheory} hold.
Given the true nuisance functions $\eta_0=\{\mu_0,\pi_0,\rho_0\}$ and sequences of shrinking neighborhoods $\mathcal{S}_N^\mu$ of $\mu_0$, $\mathcal{S}_N^\pi$ of $\pi_0$, $\mathcal{S}_N^\rho$ of $\rho_0$, define the following rates:
\begin{align*}
    \mathbf{m}_{N,q}&:=\sup_{\mu\in \mathcal{S}_N^\mu}\sup_{w\in\{0,1\}}E[(\mu(w,X)-\mu_0(w,X))^q]^{1/q},\\
    \mathbf{p}_{N,q}&:=\sup_{\pi\in \mathcal{S}_N^\pi}\sup_{w\in\{0,1\}}E[(\pi(w,X)-\pi_0(w,X))^q]^{1/q},\\
    \mathbf{r}_{N,q}&:=\sup_{\rho\in \mathcal{S}_N^\rho}\sup_{(h,w)\in\{0,1\}^2}E[(\rho(h,w,X)-\rho_0(h,w,X))^q]^{1/q},
\end{align*}
where $q\geq2$ or $q=\infty$.

\begin{assumption}[First-Stage Rate for Data Combination Settings]\label{as:fsdc}
Assume that there exists a sequence of numbers $\epsilon_N=o(1)$ and sequences of neighborhoods $\mathcal{S}_N^\mu$ of $\mu_0$, $\mathcal{S}_N^\pi$ of $\pi_0$ and $\mathcal{S}_N^\rho$ of $\rho_0$ such that the first-stage estimate $\{\hat{\mu},\hat{\pi},\hat{\rho}\}$ belongs to the set $\mathcal{S}_N^\mu\times \mathcal{S}_N^\pi\times \mathcal{S}_N^\rho$ with probability at least $1-\epsilon_N$.
Assume that mean square rates $\mathbf{m}_{N,2},\mathbf{p}_{N,2},\mathbf{r}_{N,2}$ decay sufficiently fast:
\begin{align*}
    \xi_k(\mathbf{m}_{N,2}\lor\mathbf{p}_{N,2}\lor\mathbf{r}_{N,2})=o(1),
\end{align*}
and one of two alternative conditions holds.
(i) Bounded basis. There exists $C_p<\infty$ so that $\sup_{v\in\mathcal{V}}\allowbreak\|p(v)\|_\infty\leq C_p$, $\sqrt{kN}\mathbf{m}_{N,2}\mathbf{r}_{N,2}=o(1)$ and $\sqrt{kN}\mathbf{p}_{N,2}\mathbf{r}_{N,2}=o(1)$.
(ii) Unbounded basis. There exist $\omega,\psi\in[1,\infty]$, $1/\omega+1/\psi=1$ so that $\sqrt{kN}\mathbf{m}_{N,2\omega}\mathbf{r}_{N,2\psi}=o(1)$, and there exist $\omega',\psi'\in[1,\infty]$, $1/\omega'+1/\psi'=1$ so that $\sqrt{kN}\mathbf{p}_{N,2\omega'}\mathbf{r}_{N,2\psi'}=o(1)$.
Finally, the functions in $\mathcal{S}_N^\mu$, $\mathcal{S}_N^\pi$ and $\mathcal{S}_N^\rho$ are bounded uniformly over their domain:
\begin{align*}
    \sup_{\mu\in \mathcal{S}_N^\mu}\sup_{w\in\{0,1\}}\sup_{x\in\mathcal{X}}|\mu(w,x)|&\lor\sup_{\pi\in \mathcal{S}_N^\pi}\sup_{w\in\{0,1\}}\sup_{x\in\mathcal{X}}|\pi(w,x)|\\
    &\qquad\lor\sup_{\rho\in \mathcal{S}_N^\rho}\sup_{(h,w)\in\{0,1\}^2}\sup_{x\in\mathcal{X}}|\rho^{-1}(h,w,x)|<\overline{C}<\infty.
\end{align*}
\end{assumption}

\begin{corollary}\label{cr:dc}
Suppose that Assumption \ref{as:dcid} and \ref{as:fsdc} hold in addition to the assumptions in Theorem \ref{th:mcdc}.
Then, the orthogonal signals $U$ and $T$ satisfy Assumption \ref{as:smallbias}, and consequently, Theorem \ref{th:pnormal}-\ref{th:confband} hold for CATE and LATE in the data combination setting if the other assumptions stated in Section \ref{sec:asymtheory} are also satisfied.
\end{corollary}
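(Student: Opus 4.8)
The argument is structurally identical to that for Corollaries \ref{cr:late}, \ref{cr:rbcate} and \ref{cr:idid}: the only nontrivial part is to show that the data-combination signals satisfy Assumption \ref{as:smallbias}, since once that is in hand Theorems \ref{th:pnormal}--\ref{th:confband} follow directly from the general theory of Section \ref{sec:asymtheory} under the maintained hypotheses, and there is nothing extra to say for LATE versus CATE, as the two cases differ only through the monotonicity condition, which enters solely in the identification claim Theorem \ref{th:mcdc}(a) that is already assumed. The plan is therefore to bound the two quantities $B_N$ and $\Lambda_N$ appearing in Assumption \ref{as:smallbias}.

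For $B_N$, I would first make the Neyman orthogonality of Theorem \ref{th:mcdc}(c) quantitative by computing the conditional bias exactly. By iterated expectations and the random-sampling condition (Assumption \ref{as:dcid}.1), together with $\mu_0(w,x)=E[Y\mid H=1,W=w,X=x]$ and $\rho_0(h,w,x)=P(H=h,W=w\mid X=x)$, one has $E[HW(Y-\mu(1,X))\mid X]=\rho_0(1,1,X)(\mu_0(1,X)-\mu(1,X))$ and the analogous identity for the $W=0$ term, whence
\begin{align*}
E[U(O,\eta)-U_0\mid X]=\sum_{w\in\{0,1\}}(-1)^{w+1}(\mu(w,X)-\mu_0(w,X))\,\frac{\rho(1,w,X)-\rho_0(1,w,X)}{\rho(1,w,X)},
\end{align*}
and symmetrically $E[T(O,\eta)-T_0\mid X]$ is a sum over $w$ of a $\pi$-error times a $\rho$-error divided by $\rho(0,w,X)$. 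Because $V$ is a subvector of $X$, $E[p(V)(U-U_0)]=E[p(V)\,E[U-U_0\mid X]]$, and I would bound its Euclidean norm by $\sqrt{k}\,\max_{j\le k}|E[p_j(V)\,E[U-U_0\mid X]]|$. Using $\rho^{-1}<\overline{C}$ on $\mathcal{S}_N^\rho$ and either Cauchy--Schwarz (bounded-basis case (i)) or Hölder with the conjugate exponents $(\omega,\psi)$ (unbounded-basis case (ii)), each such term is $\lesssim\mathbf{m}_{N,2}\mathbf{r}_{N,2}$ (resp.\ $\mathbf{m}_{N,2\omega}\mathbf{r}_{N,2\psi}$), and the corresponding bound for $T$ replaces $\mathbf{m}$ by $\mathbf{p}$. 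Multiplying by $\sqrt{N}$ and invoking the rate conditions $\sqrt{kN}\,\mathbf{m}_{N,2}\mathbf{r}_{N,2}=o(1)$ and $\sqrt{kN}\,\mathbf{p}_{N,2}\mathbf{r}_{N,2}=o(1)$ (or their unbounded-basis analogues) from Assumption \ref{as:fsdc} gives $B_N=o(1)$.

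For $\Lambda_N$, I would use the crude bound $\|p(V)(U-U_0)\|^2=\|p(V)\|^2(U-U_0)^2\le\xi_k^2(U-U_0)^2$, so that $\Lambda_N\le\xi_k\sup_{\eta\in\mathcal{S}_N}E[(U-U_0)^2]^{1/2}$ and likewise for $T$. Expanding $U-U_0$ into a linear combination of the $\mu$-errors and of bounded-weight multiples of $\rho^{-1}-\rho_0^{-1}$, and using $\rho^{-1},\rho_0^{-1}<\overline{C}$, the uniform boundedness of $\mu$, and the maintained moment conditions on the stochastic errors to control the conditional second moment of the observed outcome/treatment, one obtains $E[(U-U_0)^2]^{1/2}\lesssim\mathbf{m}_{N,2}\lor\mathbf{r}_{N,2}$ and $E[(T-T_0)^2]^{1/2}\lesssim\mathbf{p}_{N,2}\lor\mathbf{r}_{N,2}$. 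Hence $\Lambda_N\lesssim\xi_k(\mathbf{m}_{N,2}\lor\mathbf{p}_{N,2}\lor\mathbf{r}_{N,2})=o(1)$ by the first displayed rate condition of Assumption \ref{as:fsdc}. Combined with the bound on $B_N$ this verifies Assumption \ref{as:smallbias}; the remaining conditions of Section \ref{sec:asymtheory} (in particular Assumption \ref{as:matrixest}) are taken as maintained hypotheses, so Theorems \ref{th:pnormal}--\ref{th:confband} apply, which is the claim.

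The step I expect to be the main obstacle is the exact bias computation: in the data-combination design only $HY+(1-H)D$ is observed, so the ``$Y$'' and ``$D$'' that appear in the signals live on the two complementary sub-samples, and one must condition on $X$ carefully and verify that the cross terms cancel so that the remainder is genuinely a product of a regression error and a propensity error, matching Theorem \ref{th:mcdc}(c). A secondary technical point is the Hölder-exponent bookkeeping in the unbounded-basis case (ii), and ensuring that the integrability needed to pass from the pointwise Cauchy--Schwarz bound to the $L^2$ control of $E[(U-U_0)^2]$ is supplied by the maintained moment conditions rather than assumed afresh.
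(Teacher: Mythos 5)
Your proposal is correct and follows essentially the same route as the paper, whose proof of Corollary \ref{cr:dc} simply reruns the argument for Corollary \ref{cr:late}: orthogonality eliminates the terms linear in a single nuisance error, the surviving product of $\mu$- (or $\pi$-) and $\rho$-errors is handled by Cauchy--Schwarz in the bounded-basis case and H\"older in the unbounded case to give $B_N\lesssim\sqrt{kN}\,\mathbf{r}_{N,2}(\mathbf{m}_{N,2}+\mathbf{p}_{N,2})=o(1)$ (with the obvious $(\omega,\psi)$ analogue), and $\Lambda_N\lesssim\xi_k(\mathbf{m}_{N,2}\lor\mathbf{p}_{N,2}\lor\mathbf{r}_{N,2})=o(1)$ via the crude bound $\|p(V)(U-U_0)\|\le\xi_k|U-U_0|$. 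Your exact computation of $E[U(O,\eta)-U_0\mid X]$ is just a cleaner repackaging of the paper's decomposition into the $S_j,S_j'$ terms (whose single-error pieces vanish conditionally on $X$), so there is no substantive difference.
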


\section{Simulations}\label{sec:simdml}
In this section, we conduct numerical simulations to evaluate the finite-sample performance of the proposed method.
We first compare DSR developed in Section \ref{sec:direct} to the existing methods, and then illustrate the performance of OSR and its uniform confidence band.

\subsection{Direct Series Ratio Estimator}\label{sec:dsrsim}
\textbf{Setting.}
In this simulation study, we focus on CATE estimation by data combination explained in Example \ref{ex:dcte}.
We generate random samples of $O=(HY+(1-H)D,W,H,X)$ with a sample size $N=500,1000,2000$ using the following two data generating processes (DGPs):
\begin{align*}
    \mathbf{DGP-L}\hspace{5mm}Y_D&=\varsigma(X_1+X_2)+0.4D(X_1+X_2)+0.2\epsilon,\\
    P(D_1=1|X)&=\varsigma(X_1+X_2),\hspace{5mm}P(D_0=1|X)=0,\hspace{5mm}P(W=1|X)=P(H=1|X)=0.5,\\
    \mathbf{DGP-Q}\hspace{5mm}Y_D&=\varsigma(X_1+X_2)+0.2D(X_1+X_2)^2+0.2\epsilon,\\
    P(D_1=1|X)&=\varsigma(X_1+X_2),\hspace{5mm}P(D_0=1|X)=0,\hspace{5mm}P(W=1|X)=P(H=1|X)=0.5,
\end{align*}
where $X_1,X_2,\epsilon$ are mutually independent random variables drawn from $N(0,1)$, and $\varsigma(x)=(1+e^{-x})^{-1}$ is the logistic function.
Note that values of $H$ and $W$ are determined completely at random independent of $X$ so that DSR is applicable, which is the original setting considered in \cite{yamane}.
Also, we consider the case where no intervention takes place in regime 0 ($P(D_0=1|X)=0$) since it enables the CV method in Remark \ref{rm:cv} by ensuring the denominator is strictly positive.
See Remark \ref{rm:datacollection} for this point.
The number of replication is set $1000$.

We consider the following candidate basis vectors:
\begin{gather*}
    p_3(X)=(1,X_1,X_2)',\hspace{5mm}p_6(X)=(p_3(X)',X_1^2,X_2^2,X_1X_2)',\\
    p_{10}(X)=(p_6(X)',X_1^3,X_2^3,X_1^2X_2,X_1X_2^2)',
\end{gather*}
and regularization parameters $\lambda\in\{0.001,0.01,0.1,1\}$.
These hyperparameters are selected based on 5-fold CV in each replication using the criterion (\ref{eq:cvcriterion}).
For comparison with DSR, we consider a naive separate estimation (SEP), the Direct Least Squares (DLS) proposed in \cite{yamane} and the Directly Weighted Least Squares (DWLS) proposed in \cite{shinoda2022estimation}.
The estimation of $\pi_0$, when necessary, is implemented by the $\ell_2$-regularized logistic regression with a regularization parameter $\lambda$.

\begin{table*}[t]
\centering
\caption{Selected Hyperparameters.}
\begin{tabular}{cccccccccccccccccc}
\hline
     & \multicolumn{8}{c}{DGP-L}                                                                       &  & \multicolumn{8}{c}{DGP-Q}                                                                       \\ \cline{2-9} \cline{11-18} 
     & \multicolumn{2}{c}{$N=500$} &  & \multicolumn{2}{c}{$N=1000$} &  & \multicolumn{2}{c}{$N=2000$} &  & \multicolumn{2}{c}{$N=500$} &  & \multicolumn{2}{c}{$N=1000$} &  & \multicolumn{2}{c}{$N=2000$} \\ \cline{2-3} \cline{5-6} \cline{8-9} \cline{11-12} \cline{14-15} \cline{17-18} 
     & $k$       & $\lambda$       &  & $k$        & $\lambda$       &  & $k$        & $\lambda$       &  & $k$       & $\lambda$       &  & $k$        & $\lambda$       &  & $k$        & $\lambda$       \\ \hline
DSR  & 3         & 0.1             &  & 3          & 0.1             &  & 3          & 0.01            &  & 3         & 0.1             &  & 6          & 0.1             &  & 6          & 0.1             \\
SEP1 & 6         & 0.1             &  & 6          & 0.01            &  & 6          & 0.1             &  & 6         & 0.1             &  & 6          & 0.1             &  & 6          & 0.1             \\
SEP2 & 6         & 1               &  & 6          & 0.01            &  & 6          & 1               &  & 6         & 1               &  & 6          & 0.01            &  & 6          & 1               \\
DLS1 & 3         & 0.01            &  & 3          & 0.01            &  & 3          & 0.01            &  & 10        & 0.01            &  & 6          & 0.1             &  & 6          & 0.1             \\
DLS2 & 10        & 0.001           &  & 10         & 0.001           &  & 10         & 0.001           &  & 10        & 0.001           &  & 10         & 0.001           &  & 10         & 0.01            \\
DWLS & 3         & 0.1             &  & 3          & 0.1             &  & 3          & 0.1             &  & 3         & 0.1             &  & 3          & 0.1             &  & 3          & 0.1             \\ \hline
\end{tabular}
\label{tb:selected_params}
\end{table*}

\begin{table*}[t]
\centering
\caption{Simulation Results for DGP-L.}
\begin{tabular}{cccccccccccc}
\hline
     & \multicolumn{3}{c}{$N=500$} &  & \multicolumn{3}{c}{$N=1000$} &  & \multicolumn{3}{c}{$N=2000$} \\ \cline{2-4} \cline{6-8} \cline{10-12} 
     & Bias    & SD     & MSE    &  & Bias     & SD     & MSE    &  & Bias     & SD     & MSE    \\ \hline
DSR & -0.116  & 0.321  & 0.135  &  & -0.075   & 0.232  & 0.058  &  & -0.044   & 0.164  & 0.029  \\
SEP  & 0.489   & 0.537  & 1.273  &  & 0.579    & 0.374  & 0.877  &  & 0.612    & 0.256  & 0.547  \\
DLS  & -0.239  & 0.363  & 0.200  &  & -0.168   & 0.265  & 0.124  &  & -0.120   & 0.172  & 0.062  \\
DWLS & -0.204  & 0.343  & 0.139  &  & -0.175   & 0.224  & 0.103  &  & -0.181   & 0.176  & 0.099  \\ \hline
\end{tabular}
\label{tb:res_dgpl}
\end{table*}
\begin{table*}[t]
\centering
\caption{Simulation Results for DGP-Q.}
\begin{tabular}{cccccccccccc}
\hline
     & \multicolumn{3}{c}{$N=500$} &  & \multicolumn{3}{c}{$N=1000$} &  & \multicolumn{3}{c}{$N=2000$} \\ \cline{2-4} \cline{6-8} \cline{10-12} 
     & Bias    & SD     & MSE    &  & Bias     & SD     & MSE    &  & Bias     & SD     & MSE    \\ \hline
DSR & -0.155  & 0.403  & 0.426  &  & -0.108   & 0.289  & 0.208  &  & -0.069   & 0.196  & 0.105  \\
SEP  & 0.642   & 0.685  & 2.011  &  & 0.772    & 0.492  & 1.372  &  & 0.857    & 0.330  & 1.019  \\
DLS  & -0.243  & 0.460  & 0.536  &  & -0.179   & 0.319  & 0.377  &  & -0.167   & 0.214  & 0.249  \\
DWLS & -0.110  & 0.446  & 0.366  &  & -0.072   & 0.333  & 0.282  &  & -0.060   & 0.287  & 0.224  \\ \hline
\end{tabular}
\label{tb:res_dgpq}
\end{table*}

\begin{table}[t]
\centering
\caption{Computation Time for One Estimation (Standardized by DSR).}
\begin{tabular}{cccccc}
\hline
     & $N=500$   &  & $N=1000$  &  & $N=2000$  \\ \hline
DSR & 1.000 &  & 1.000 &  & 1.000 \\
SEP  & 5.614 &  & 6.350 &  & 9.044 \\
DLS  & 1.704 &  & 1.616 &  & 1.747 \\
DWLS & 2.198 &  & 2.155 &  & 2.166 \\ \hline
\end{tabular}
\label{tb:computation_time}
\end{table}

\noindent\textbf{Results.}
Table \ref{tb:selected_params} summarises the hyperparameters most frequently selected by 5-fold CV.
SEP1 and SEP2 in the table represent the estimation of the numerator and denominator for the SEP estimator, respectively, while DLS1 and DLS2 represent the estimation of CATE and the auxiliary function, respectively.
It can be seen that CV for DSR is working well as a smaller $\lambda$ is chosen as the sample size grows in DGP-L, and the optimal basis is selected in DGP-Q.
However, for other estimators, CV sometimes fails to select the small regularization and optimal basis even with the large sample size.

Table \ref{tb:res_dgpl} and \ref{tb:res_dgpq} are the simulation results for DGP-L and DGP-Q, respectively.
Bias and SD in the tables represent the bias and standard deviation of the estimates at $(X_1,X_2)=(1,1)$, respectively, and MSE is calculated on the test samples.
In most cases, DSR outperforms the other estimators, while it has a slightly larger bias than DWLS in DGP-Q.
However, DSR has a significantly smaller MSE than DWLS as the sample size grows, which implies that the former is more efficient.
DSR is also found to have an advantage over other estimators in terms of computational efficiency.
As shown in Table \ref{tb:computation_time}, DSR requires the shortest computation time for estimation (without CV), reflecting its simple estimation procedures.
Moreover, the computational advantage of DSR becomes even clearer by taking CV into account.
For example, since there are 3 possible basis vectors and 4 possible regularization parameters for each function in this simulation, we must repeat CV 24 times for SEP and DWLS and 144 times for DLS, while DSR requires only 12 times as it has no nuisance estimation.

Then, we conduct the sensitivity analysis to evaluate the impact of the hyperparameters $k,\lambda$ on the performance of the estimators.
The detailed results for all estimators are summarised in Table \ref{tb:sens_dgpl} and \ref{tb:sens_dgpq} in Section \ref{sec:sensitivity} of Appendix.
Figure \ref{fg:sens_dgpl} and \ref{fg:sens_dgpq} illustrate the change in MSE of DSR with different hyperparameter combinations.
It can be seen that smaller $\lambda$ gives smaller MSE as the number of samples increases in both DGPs.
However, in DGP-L, MSE changes significantly with the choice of $\lambda$ as the sample size grows, whereas in DGP-Q, MSE does not respond to $\lambda$ to the same extent at a large sample size.
Therefore, when the number of samples is more than 1000 in DGP-L, MSE can be small depending on the choice of $\lambda$ even if the model is not optimal ($k=3$), but in DGP-Q, if the model is too simple ($k=3$) or too complex ($k=10$), MSE is considerably larger and the importance of the choice of $k$ is relatively high.
Also, by comparing the results of the sensitivity analysis and Table \ref{tb:selected_params}, we can see that the CV method for DSR successfully selected the optimal hyperparameters in all cases except when $N=500$ in DGP-Q.

\begin{figure*}[!t]
\centering
  \begin{minipage}[b]{0.3\linewidth}
    \centering
    \includegraphics[width=0.9\linewidth]{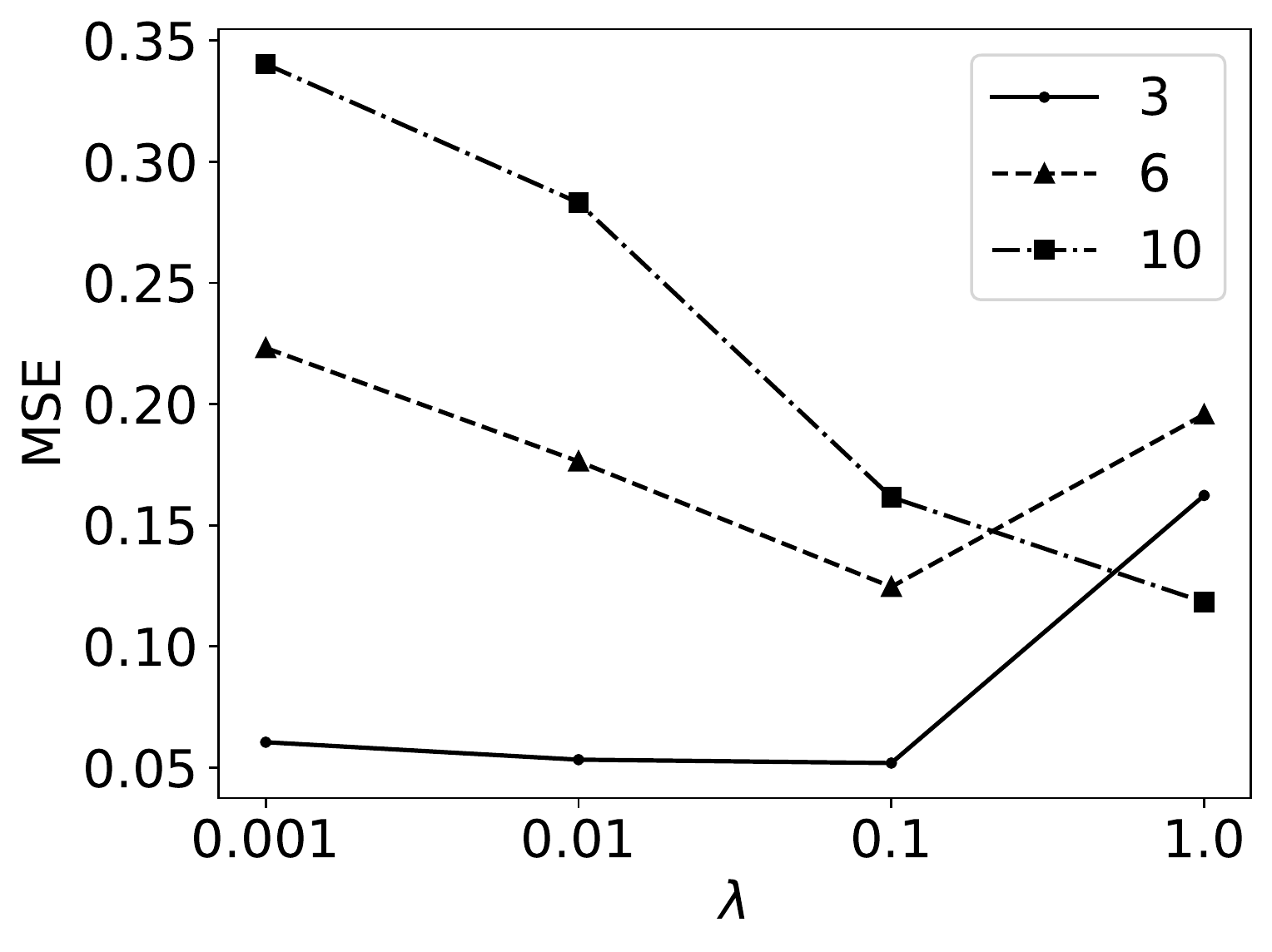}
    \caption*{(a) $N=500$.}
  \end{minipage}
  \begin{minipage}[b]{0.3\linewidth}
    \centering
    \includegraphics[width=0.9\columnwidth]{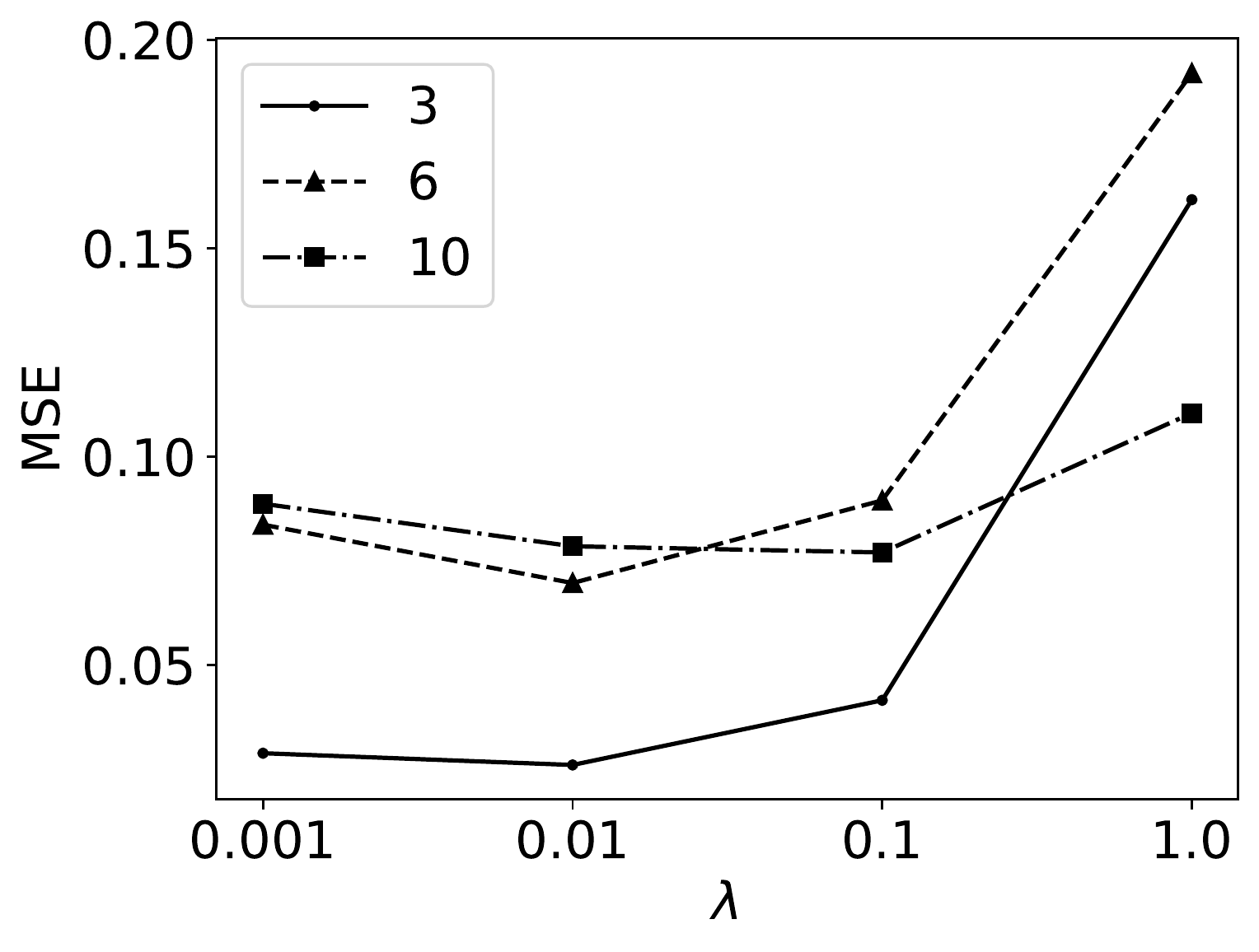}
    \caption*{(b) $N=1000$.}
  \end{minipage}
  \begin{minipage}[b]{0.3\linewidth}
    \centering
    \includegraphics[width=0.9\columnwidth]{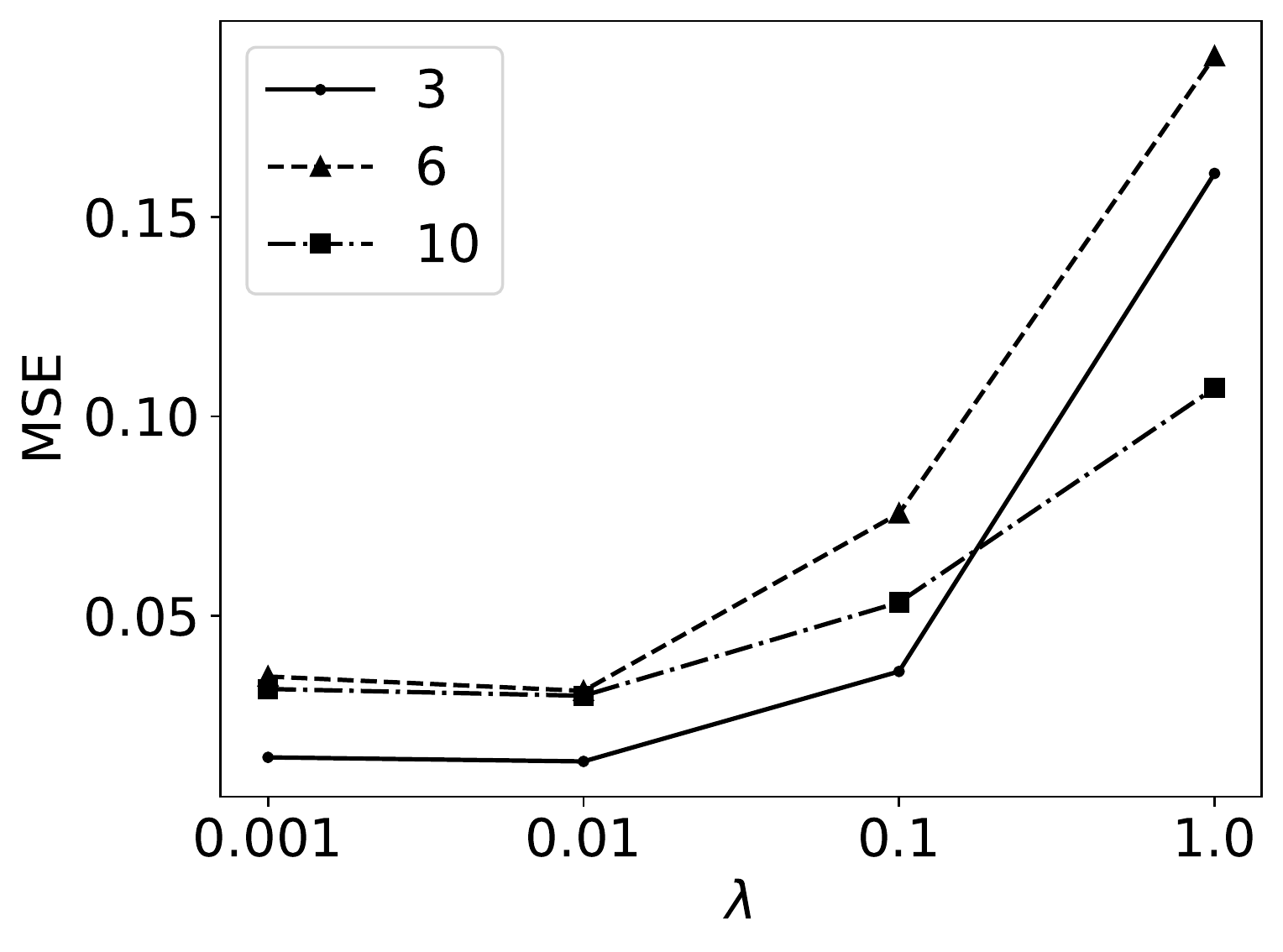}
    \caption*{(c) $N=2000$.}
  \end{minipage}
  \caption{Sensitivity Analysis for DGP-L.}
  \label{fg:sens_dgpl}
\end{figure*}
\begin{figure*}[!t]
\centering
  \begin{minipage}[b]{0.3\linewidth}
    \centering
    \includegraphics[width=0.9\columnwidth]{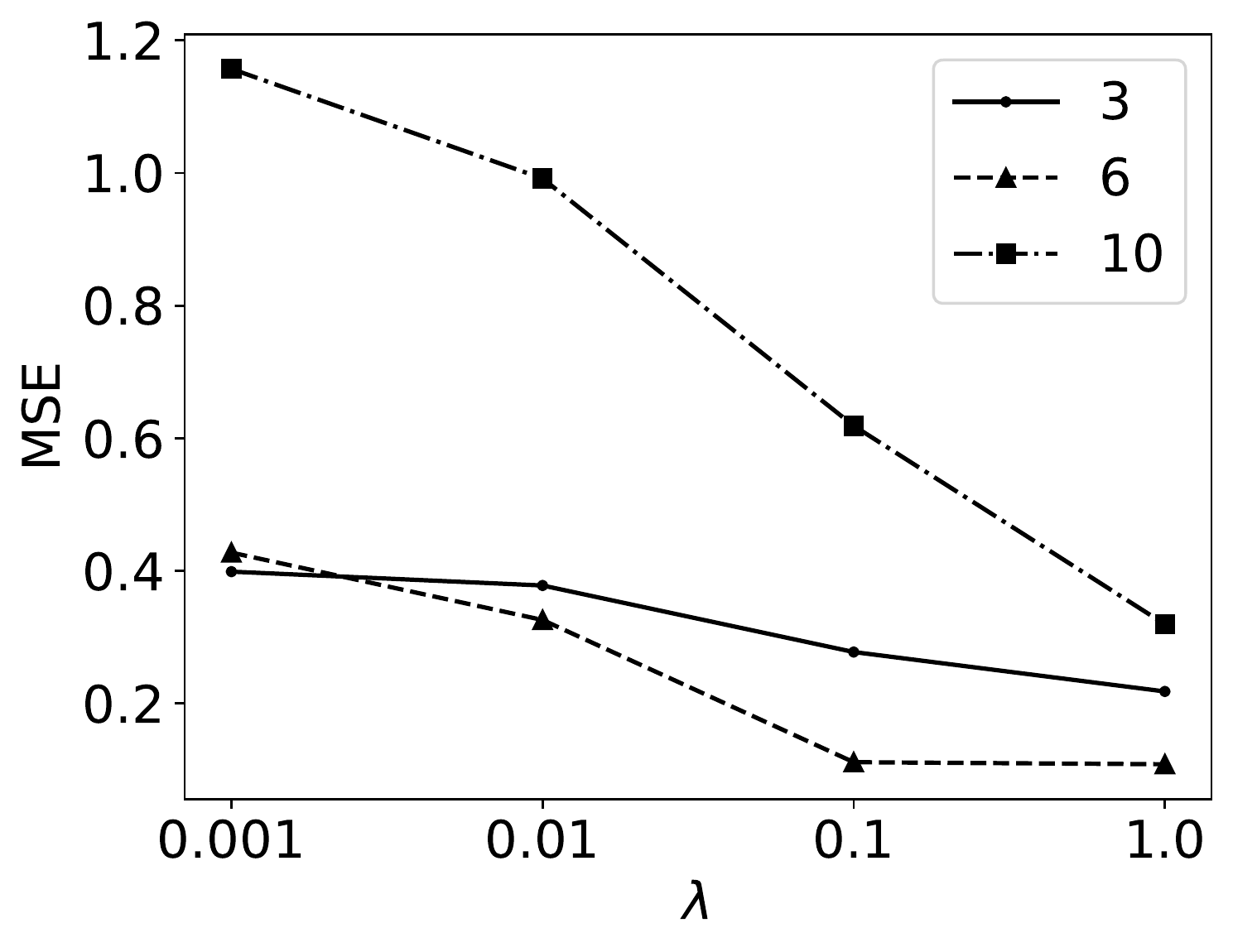}
    \caption*{(a) $N=500$.}
  \end{minipage}
  \begin{minipage}[b]{0.3\linewidth}
    \centering
    \includegraphics[width=0.9\columnwidth]{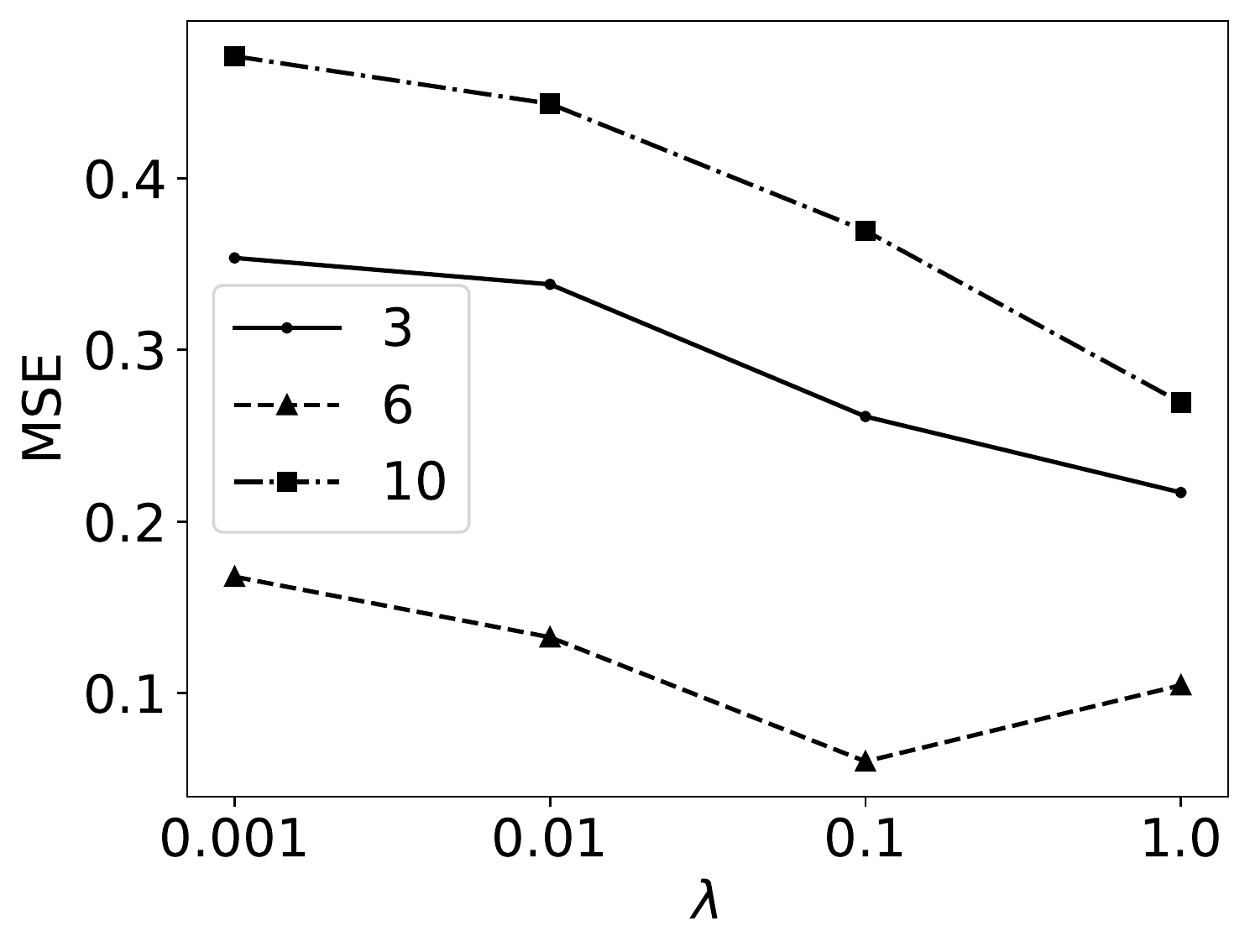}
    \caption*{(b) $N=1000$.}
  \end{minipage}
  \begin{minipage}[b]{0.3\linewidth}
    \centering
    \includegraphics[width=0.9\columnwidth]{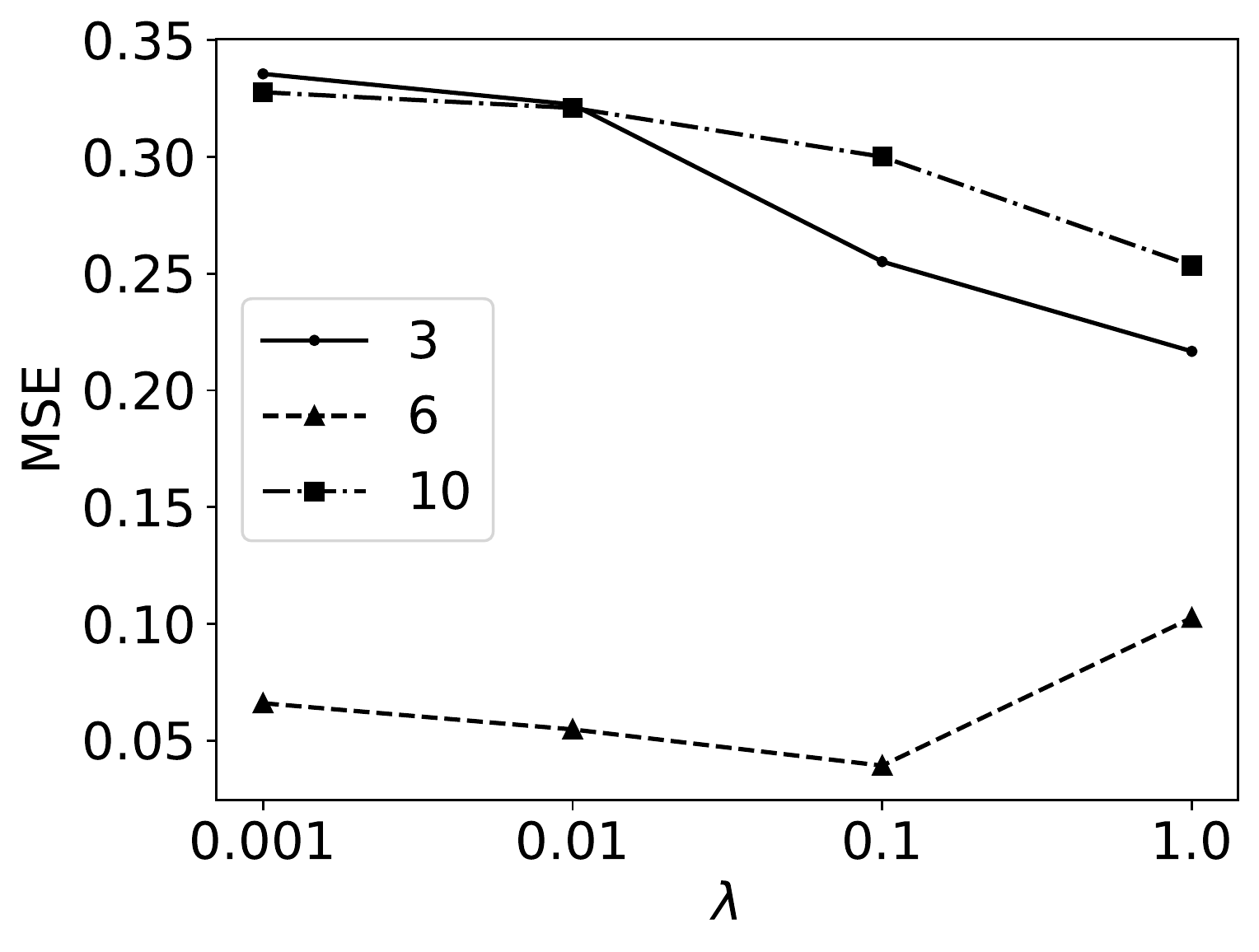}
    \caption*{(c) $N=2000$.}
  \end{minipage}
  \caption{Sensitivity Analysis for DGP-Q.}
  \label{fg:sens_dgpq}
\end{figure*}

\subsection{Orthogonal Series Ratio Estimator}
\textbf{Setting.}
As in the previous simulation, we consider CATE estimation by data combination, but now $H$ and $W$ depend on covariates.
Therefore, simply applying DSR must result in biased estimation.
We generate random samples of $O=(HY+(1-H)D,W,H,X)$ with a sample size $N=1000,2000,3000$ from the following DGP:
\begin{gather*}
    Y_D=\varsigma(\gamma'X)+0.4D\gamma'X+0.2\epsilon,\\
    P(D_1=1|X)=\varsigma(0.2\gamma'X+1),\hspace{5mm}P(D_0=1|X)=0,\\
    P(W=1|X)=\varsigma(0.1\gamma'X),\hspace{5mm}P(H=1|X)=\varsigma(-0.1\gamma'X).
\end{gather*}
where $X=(X_1,\ldots,X_5)'$ is a five-dimensional covariate vector with all elements having zero mean, and $\gamma$ is a five-dimensional vector with one in all elements.
The covariance matrix of $X$ takes one for diagonal elements, and the absolute value of non-diagonal elements is randomly chosen from $[0.1,0.3]$ except that $X_1$ is independent of all the other covariates.
The number of replication is set $1000$.

The estimand is CATE as a function of only the first covariate: $\theta_0(v)=E[Y_1-Y_0|X_1=v]=0.4v$.
We use the polynomial basis up to the third order to construct OSR, and the order is selected based on 5-fold CV in each replication.
ML estimators used for the nuisance estimation are random forest (RF), gradient boosting trees (GBT) and multi-layer perceptron (MLP).
ML estimators are implemented using scikit-learn 1.0.2, and we use the default value for all options and hyperparameters.
We use 5-fold cross-fitting to construct the orthogonal signals $\hat{U}$ and $\hat{T}$.
Unlike the previous one, regularization is not employed in this simulation.

\begin{table}[t]
\centering
\caption{Simulation Results for Orthogonal Series Estimator with Random Forest.}
\begin{tabular}{ccccccccccccc}
\hline
     &  & \multicolumn{3}{c}{$N=1000$} &  & \multicolumn{3}{c}{$N=2000$}  &  & \multicolumn{3}{c}{$N=3000$}  \\ \cline{3-5} \cline{7-9} \cline{11-13} 
$q$  &  & MSE   & Bias   & Width   &  & MSE    & Bias   & Width   &  & MSE     & Bias   & Width  \\ \hline
0.2  &  & 0.005 & -0.083 & 2.089   &  & 0.003  & -0.047 & 1.449   &  & 0.002   & -0.029 & 1.145  \\
0.4  &  & 0.011 & -0.011 & 2.667   &  & 0.005  & 0.000  & 1.725   &  & 0.004   & 0.008  & 1.317  \\
0.6  &  & 0.020 & 0.051  & 3.637   &  & 0.009  & 0.042  & 2.308   &  & 0.006   & 0.041  & 1.590  \\
0.8  &  & 0.038 & 0.130  & 7.665   &  & 0.016  & 0.092  & 4.497   &  & 0.010   & 0.081  & 3.327  \\
Mean &  & 0.096 & 0.023  & 48.238  &  & 0.027  & 0.024  & 9.691   &  & 0.012   & 0.028  & 5.250  \\
STD  &  & 1.153 & 0.140  & 936.180 &  & 0.325  & 0.090  & 110.607 &  & 0.073   & 0.067  & 38.138 \\ \cline{3-5} \cline{7-9} \cline{11-13} 
CVR  &  & \multicolumn{3}{c}{0.970} &  & \multicolumn{3}{c}{0.962} &  & \multicolumn{3}{c}{0.953} \\ \hline
\end{tabular}
\label{tb:osrrf}
\end{table}

\begin{table}[t]
\centering
\caption{Simulation Results for Orthogonal Estimator with Gradient Boosting Trees.}
\begin{tabular}{ccccccccccccc}
\hline
     &  & \multicolumn{3}{c}{$N=1000$}  &  & \multicolumn{3}{c}{$N=2000$}  &  & \multicolumn{3}{c}{$N=3000$}  \\ \cline{3-5} \cline{7-9} \cline{11-13} 
$q$  &  & MSE   & Bias   & Width    &  & MSE    & Bias   & Width   &  & MSE     & Bias   & Width  \\ \hline
0.2  &  & 0.009 & -0.097 & 2.790    &  & 0.003  & -0.026 & 1.476   &  & 0.002   & -0.003 & 1.066  \\
0.4  &  & 0.022 & 0.001  & 3.922    &  & 0.006  & 0.023  & 1.755   &  & 0.005   & 0.028  & 1.240  \\
0.6  &  & 0.040 & 0.077  & 5.911    &  & 0.011  & 0.060  & 2.189   &  & 0.007   & 0.060  & 1.511  \\
0.8  &  & 0.092 & 0.189  & 12.263   &  & 0.020  & 0.117  & 4.393   &  & 0.012   & 0.103  & 3.291  \\
Mean &  & 0.484 & 0.052  & 134.424  &  & 0.042  & 0.048  & 22.211  &  & 0.014   & 0.050  & 6.652  \\
STD  &  & 7.181 & 0.286  & 2497.353 &  & 0.645  & 0.094  & 428.023 &  & 0.071   & 0.066  & 66.374 \\ \cline{3-5} \cline{7-9} \cline{11-13} 
CVR  &  & \multicolumn{3}{c}{0.964} &  & \multicolumn{3}{c}{0.956} &  & \multicolumn{3}{c}{0.918} \\ \hline
\end{tabular}
\label{tb:osrgbt}
\end{table}

\begin{table}[t]
\centering
\caption{Simulation Results for Orthogonal Estimator with Multi-Layer Perceptron.}
\begin{tabular}{ccccccccccccc}
\hline
     &  & \multicolumn{3}{c}{$N=1000$}    &  & \multicolumn{3}{c}{$N=2000$}  &  & \multicolumn{3}{c}{$N=3000$}  \\ \cline{3-5} \cline{7-9} \cline{11-13} 
$q$  &  & MSE    & Bias   & Width     &  & MSE     & Bias  & Width   &  & MSE     & Bias   & Width  \\ \hline
0.2  &  & 0.004  & -0.020 & 1.485     &  & 0.002   & 0.002 & 1.026   &  & 0.002   & 0.010  & 0.819  \\
0.4  &  & 0.008  & 0.037  & 1.821     &  & 0.005   & 0.036 & 1.217   &  & 0.004   & 0.038  & 0.931  \\
0.6  &  & 0.014  & 0.080  & 2.402     &  & 0.007   & 0.070 & 1.481   &  & 0.006   & 0.060  & 1.116  \\
0.8  &  & 0.025  & 0.134  & 4.661     &  & 0.013   & 0.109 & 3.588   &  & 0.009   & 0.095  & 2.466  \\
Mean &  & 0.712  & 0.061  & 843.500   &  & 0.015   & 0.058 & 15.306  &  & 0.010   & 0.054  & 6.776  \\
STD  &  & 21.803 & 0.136  & 26423.507 &  & 0.079   & 0.070 & 209.077 &  & 0.047   & 0.056  & 58.231 \\ \cline{3-5} \cline{7-9} \cline{11-13} 
CVR  &  & \multicolumn{3}{c}{0.922}   &  & \multicolumn{3}{c}{0.916} &  & \multicolumn{3}{c}{0.905} \\ \hline
\end{tabular}
\label{tb:osrmlp}
\end{table}

\noindent\textbf{Results.}
Table \ref{tb:osrrf}, \ref{tb:osrgbt} and \ref{tb:osrmlp} present the simulation results when using RF, GBT and MLP for the nuisance estimation, respectively.
Bias in the tables denotes the estimation bias evaluated at $v=1$, Width is the largest width of the 95\% uniform confidence band, and CVR is the empirical coverage of the 95\% uniform confidence band, namely, the proportion of the times out of 1000 replications that the true target function $\theta_0(v)=0.4v$ is included in the estimated confidence band.
In addition to mean and standard deviation, we report the $q$-quantiles of MSE, Bias and Width for $q=0.2,0.4,0.6,0.8$ because OSR produces rare but extremely inaccurate estimates.
This instability may be due to the structure of the orthogonal signals, where the inverse of the estimated probability is included, and it has been pointed out in the literature \citep{singh2019automatic,chernozhukov2022automatic,chernozhukov2022riesz}.
Further discussion on this point can be found in Section \ref{sec:discussiondml}.

The results diverge for the different ML methods when the sample size is small, but it seems that the methods for nuisance estimates have less impact on the performance of OSR as the sample size grows.
The mean of MSE and the width of the confidence band get smaller as $N$ gets larger for all ML methods.
Although the mean of bias remains almost the same across sample sizes, the standard deviation decreases, which also indicates that an increase in sample size stabilises the estimation.
The empirical coverage is reasonably close to the nominal rate of 95\% for all the cases with RF and when GBT is used with $N=1000,2000$.
However, when using GBT with $N=3000$ and MLP with all the sample sizes, the empirical coverage deviates from the nominal coverage, possibly reflecting the slightly large bias and small width in these cases.
The large bias in GBT and MLP may be due to the fixed hyperparameters.
Although we used fixed hyperparameters to reduce the computational burden in this simulation, we could select hyperparameters based on CV in practice to lower the bias and obtain the confidence band with the correct coverage.

\section{Empirical Example}\label{sec:empirical}
In this section, we apply OSR to estimate a causal effect of participation in 401(k) on household assets.

\noindent\textbf{Setting.}
In the US, 401(k) is an employer-sponsored personal pension program first implemented in 1978.
It is tax-deferred to encourage household savings for their retirement.
Data from the US Census Bureau's 1991 Survey of Income and Program Participation (SIPP) has been used in \cite{poterba1994401,poterba1995401} and many subsequent studies to examine the effect of 401(k) participation on household savings.
The key technical challenge in estimating the causal effect of 401(k) participation is there are not enough covariates available in the SIPP data to explain the self-selected participation among those who are eligible to participate in 401(k).
\cite{poterba1994401,poterba1995401} argue that eligibility for the 401(k) program can be regarded as exogenous after conditioning on some important variables because whether an employer offered 401(k) would not affect people's job selection at least at the time the program just started, but they would instead make a decision based on other aspects of the job such as salary.
Adopting this argument, we can estimate LATE of 401(k) participation in household savings with 401(k) eligibility as IV and other variables related to job choice to control selection bias.

In this empirical example, we use the data analyzed in \cite{chernozhukov2004effects} and \cite{chernozhukov2018}, consisting of samples of 9915 households with the reference person of 25-64 years old, and at least one member is employed but no one is self-employed.
We use net financial assets ---defined as the sum of IRA (Individual Retirement Account) balances, 401(k) balances, checking accounts, US saving bonds, other interest-earning accounts in financial institutions, other interest-earning assets, stocks, and mutual funds minus non-mortgage debt--- as the outcome $Y$, a binary indicator for 401(k) eligibility as IV $Z$, and a binary indicator for 401(k) participation as the treatment $D$.
The covariate vector $X$ used in this analysis includes age, income, family size, years of education, marital status, two-earner status, defined benefit pension status, IRA participation status, and home ownership status.

We conduct analyses based on the usual one-sample LATE estimation as in Example \ref{ex:late} and two-sample LATE estimation explained in Section \ref{sec:2slate}.
The parameter of interest is LATE as a function of income.
For two-sample LATE estimation, we generate a dataset indicator $H$ such that $P(H=1|X=x)=\varsigma(0.1\gamma'\tilde{x})$, where $\tilde{x}$ is a vector of the covariates scaled so that the values fall within $[0,1]$.
We use the polynomial basis and the order is selected based on CV from $k=1,2,3$.
We use a relatively large number of partitions $G=20$ to mitigate the impact of random sample-splitting on the performance of OSR.
GBT is used for nuisance estimation.

\begin{figure*}[!t]
\centering
  \begin{minipage}[b]{0.49\linewidth}
    \centering
    \includegraphics[width=0.9\linewidth]{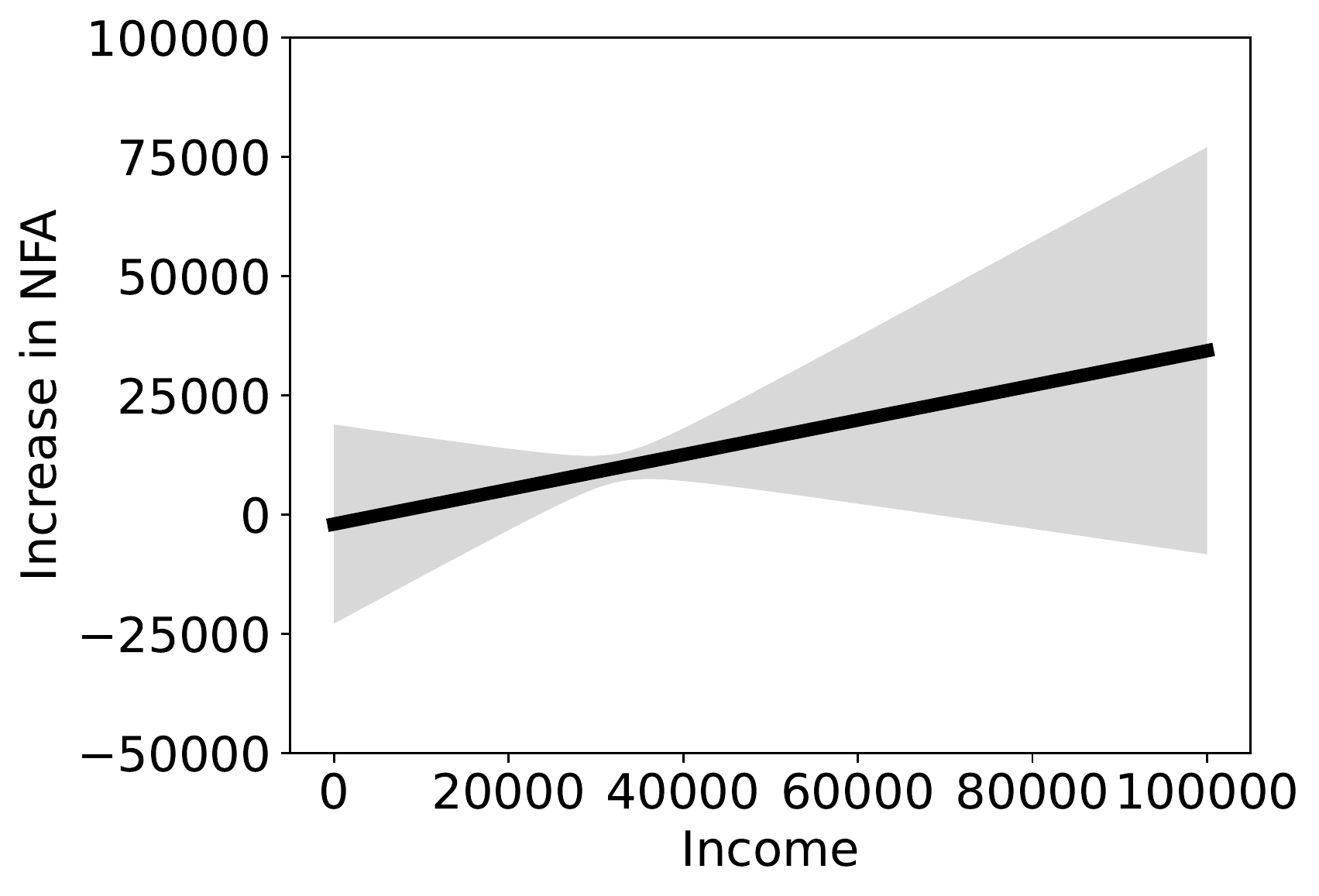}
    \caption*{(a) One-Sample Estimation.}
  \end{minipage}
  \begin{minipage}[b]{0.49\linewidth}
    \centering
    \includegraphics[width=0.9\linewidth]{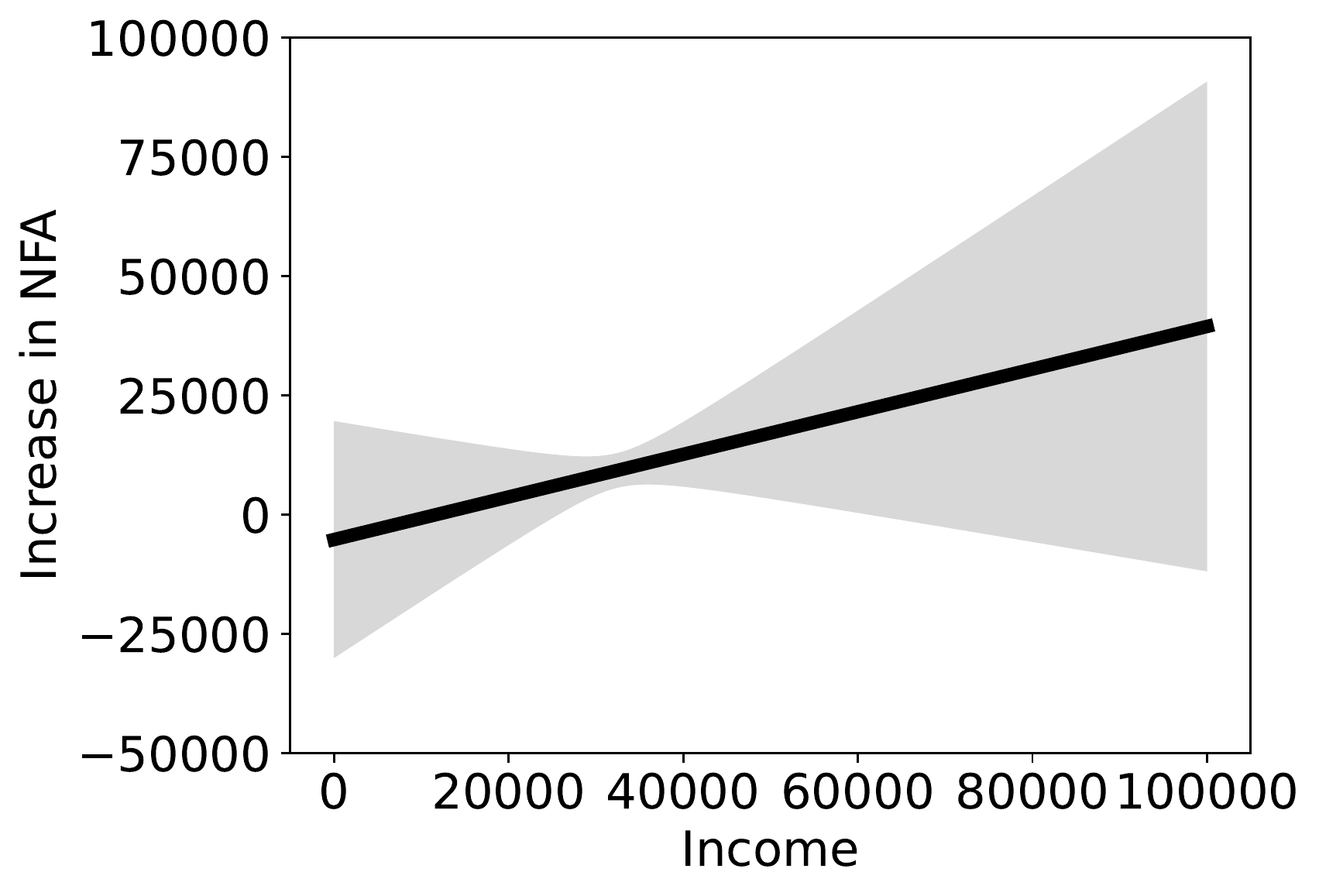}
    \caption*{(b) Two-Sample Estimation.}
  \end{minipage}
  \caption{Estimated LATE of 401(k) on Net Financial Assets Conditioned on Income Level and 95\% Uniform Confidence Band.}
  \label{fg:401k}
\end{figure*}

\noindent\textbf{Results.}
Figure \ref{fg:401k}(a) and (b) illustrate the estimated LATE function of income and its 95\% uniform confidence band constructed by one-sample estimation and two-sample estimation, respectively.
As can be seen, $k=1$ is selected in both one-sample and two-sample estimations.
The estimated LATE is linearly increasing in household income and the slope is approximately 0.36 in one-sample estimation and 0.45 in two-sample estimation.
Furthermore, the point estimate of LATE for households with annual income \$50000 is \$16225 in one-sample estimation and \$17141 in two-sample estimation, which are consistent with the analysis in \cite{ogburn2015} whose estimate is \$14910.
Statistically significant positive LATE is indicated for households with annual income \$24000-\$68000 in one-sample estimation, while LATE is significantly positive for households with income \$26000-61000 in two-sample estimation, reflecting the wider confidence band in two-sample estimation.

\section{Discussions}\label{sec:discussiondml}
This section discusses the limitation and future direction of the present study.
The first topic is model selection in the proposed framework.
As explained in Remark \ref{rm:cv}, we can perform model selection based on CV using the criterion (\ref{eq:cvcriterion}) only when $\zeta_0$ is strictly positive.
We explained the several examples where $\zeta_0$ is necessarily positive, and it is shown in the simulation study in Section \ref{sec:dsrsim} that CV using the criterion (\ref{eq:cvcriterion}) works well.
However, there are also situations where $\zeta_0$ can take both positive and negative values, such as LATE estimation and IDID.
Therefore, a model selection method for the general situation is key to increasing the practicality of the proposed framework.
Despite its practical importance, little attention has been paid to model selection in the treatment effects estimation \citep{schuler2018comparison,caron2020estimating}.
To the best of our knowledge, there exist only a few attempts in the literature to develop a flexible method for selecting the treatment effect model \citep{brookhart2006semiparametric,rolling2014model,saito2020}.
Although we may be able to extend the ideas of these previous studies for the CEFR problems, further research on model selection is essential to enhance the feasibility of causal inference methods.

One of the drawbacks of the proposed framework is the large variability found in the simulation of Section \ref{sec:dsrsim}.
It may be due to the structure of the orthogonal signals, in which the inverse of the estimated propensity score is used.
Inverse probability weighting (IPW) is known to suffer from unstable estimates especially when the propensity score is close to zero \citep{wooldridge2002,Wooldridge2007-vn,robins2007comment,Seaman2013-mw}.
This problem is especially acute in the data combination settings including two-sample estimation of LATE and IDID, where we have to perform four-class or eight-class classification to estimate propensity scores.
Although we can increase stability by trimming small probabilities, determining the optimal threshold is nontrivial \citep{lee2011}, and trimming can cause additional bias.
Recently developed \emph{automatic debiased machine learning (Auto-DML)} \citep{chernozhukov2022automatic} can be an effective solution to the problem because it avoids the estimation of propensity scores.
Auto-DML directly estimates the Riesz representer of the orthogonal signals rather than constructing it with the inverse of the estimated propensity score.
A much smaller variance of Auto-DML compared to the original DML has been empirically verified in numerical experiments \citep{singh2019automatic,chernozhukov2022riesz}.
Thus, extending the procedures and theory of the proposed framework to accommodate signals obtained by Auto-DML is a promising future direction.

The present study proposed a general and flexible framework for the CEFR problems, but more efficient estimation and inference may be possible in some specific settings.
For example, the orthogonal moment condition for LATE using the interaction term of $Y$ and $D$ has been proposed in \cite{singh2019automatic}, while OSR uses $Y$ and $D$ only separately.
Comparison of the efficiency of the method in \cite{singh2019automatic} and OSR is beyond the scope of this study, but intuitively, leveraging information expressed in the form of interaction of $Y$ and $D$ can improve efficiency.
However, the contribution of this study for offering the flexible inference framework in the data combination settings is significant, as the method in \cite{singh2019automatic} is not applicable to situations where $Y$ and $D$ are separately observed.

\bibliography{orthratio}
\newpage
\appendix
\section{Additional Examples}\label{sec:dmlexample2}
\subsection{Other Form of Ratio-Based Treatment Effects}
\cite{liang2020relative} has proposed the other form of ratio-based treatment effects such as:
\begin{align*}
    \theta_0(v)=\frac{E[Y_1-Y_0|V=v]}{E[Y_1+Y_0|V=v]}.
\end{align*}
In this case, we need Assumption \ref{as:rbcate} but with $E[Y_1+Y_0|V=v]\neq0$ for all $v\in\mathcal{V}$ instead of $E[Y_0|V=v]\neq0$ for all $v\in\mathcal{V}$ for identification.
We have that $\nu_0(v)=E[\mu_0(1,X)-\mu_0(0,X)|V=v]$ and $\zeta_0(v)=E[\mu_0(1,X)+\mu_0(0,X)|V=v]$, where $\mu_0(d,x)=E[Y|D=d,X=x]$.
The following signals satisfy the Neyman orthogonality:
\begin{align*}
    U(O,\eta_0)&=\mu_0(1,X)-\mu_0(0,X)+\frac{D(Y-\mu_0(1,X))}{\pi_0(X)}-\frac{(1-D)(Y-\mu_0(0,X))}{1-\pi_0(X)},\\
    T(O,\eta_0)&=\mu_0(1,X)+\mu_0(0,X)+\frac{D(Y-\mu_0(1,X))}{\pi_0(X)}+\frac{(1-D)(Y-\mu_0(0,X))}{1-\pi_0(X)},
\end{align*}
where $\pi_0(x)=E[D|X=x]$.

\subsection{Ratio-Based LATE}
We can also consider the ratio-based LATE, which is defined as:
\begin{align*}
    \theta_0(v)=\frac{E[Y_1|D_1>D_0,V=v]}{E[Y_0|D_1>D_0,V=v]}.
\end{align*}
If Assumption \ref{as:lateid} in addition to $E[Y_0|D_1>D_0,V=v]$ for $v\in\mathcal{V}$ hold, then
\begin{align*}
    \theta_0(v)=\frac{E[E[DY|Z=1,X]-E[DY|Z=0,X]|V=v]}{E[E[(1-D)Y|Z=0,X]-E[(1-D)Y|Z=1,X]|V=v]}.
\end{align*}
In this case, $\nu_0(v)=E[\mu_0(1,X)-\mu_0(0,X)|V=v]$ and $\zeta_0(v)=E[\pi_0(0,X)-\pi_0(1,X)|V=v]$, where $\mu_0(z,x)=E[DY|Z=z,X=x]$ and $\pi_0(z,x)=E[(1-D)Y|Z=z,X=x]$.
The following signals satisfy the Neyman orthogonality:
\begin{align*}
    U(O,\eta_0)&=\mu_0(1,X)-\mu_0(0,X)+\frac{Z(DY-\mu_0(1,X))}{\rho_0(X)}-\frac{(1-Z)(DY-\mu_0(0,X))}{1-\rho_0(X)},\\
    T(O,\eta_0)&=\pi_0(0,X)-\pi_0(1,X)+\frac{(1-Z)((1-D)Y-\pi_0(0,X))}{1-\rho_0(X)}-\frac{Z((1-D)Y-\pi_0(1,X))}{\rho_0(X)},
\end{align*}
where $\rho_0(x)=E[Z|X=x]$.

As in the previous subsection, we can consider the alternative form:
\begin{align*}
    \theta_0(v)=\frac{E[Y_1-Y_0|D_1>D_0,V=v]}{E[Y_1+Y_0|D_1>D_0,V=v]}.
\end{align*}
Under Assumption \ref{as:lateid} and $E[Y_1+Y_0|D_1>D_0,V=v]\neq0$ for $v\in\mathcal{V}$, then
\begin{align*}
    \theta_0(v)=\frac{E[E[Y|Z=1,X]-E[Y|Z=0,X]|V=v]}{E[E[(2D-1)Y|Z=1,X]-E[(2D-1)Y|Z=0,X]|V=v]}.
\end{align*}
In this case, $\nu_0(v)=E[\mu_0(1,X)-\mu_0(0,X)|V=v]$ and $\zeta_0(v)=E[\pi_0(1,X)-\pi_0(0,X)|V=v]$, where $\mu_0(z,x)=E[Y|Z=z,X=x]$ and $\pi_0(z,x)=E[(2D-1)Y|Z=z,X=x]$.
The following signals satisfy the Neyman orthogonality:
\begin{align*}
    U(O,\eta_0)&=\mu_0(1,X)-\mu_0(0,X)+\frac{Z(Y-\mu_0(1,X))}{\rho_0(X)}-\frac{(1-Z)(Y-\mu_0(0,X))}{1-\rho_0(X)},\\
    T(O,\eta_0)&=\pi_0(1,X)-\pi_0(0,X)+\frac{Z((2D-1)Y-\pi_0(1,X))}{\rho_0(X)}-\frac{(1-Z)((2D-1)Y-\pi_0(0,X))}{1-\rho_0(X)},
\end{align*}
where $\rho_0(x)=E[Z|X=x]$.

\subsection{Two-Sample LATE}\label{sec:2slate}
It is obvious from the identification result of difference-based LATE that the two-sample estimation is possible.
Let the observed vector be $O=(HY+(1-H)D,Z,H,X)$, where $H$ is a binary dataset indicator, and suppose that $Y_1,Y_0,D_1,D_0\independent H|X$ in addition to Assumption \ref{as:lateid}.
Then, LATE is identified as:
\begin{align*}
    \theta_0(v)=\frac{E[E[Y|H=1,Z=1,X]-E[Y|H=1,Z=0,X]|V=v]}{E[E[D|H=0,Z=1,X]-E[D|H=0,Z=0,X]|V=v]}.
\end{align*}
In this case, $\nu_0(v)=E[\mu_0(1,X)-\mu_0(0,X)|V=v]$ and $\zeta_0(v)=E[\pi_0(1,X)-\pi_0(0,X)|V=v]$, where $\mu_0(z,x)=E[Y|H=1,Z=z,X=x]$ and $\pi_0(z,x)=E[D|H=0,Z=z,X=x]$.
The following signals satisfy the Neyman orthogonality:
\begin{align*}
    U(O,\eta_0)&=\mu_0(1,X)-\mu_0(0,X)+\frac{HZ(Y-\mu_0(1,X))}{\rho_0(1,1,X)}-\frac{H(1-Z)(Y-\mu_0(0,X))}{\rho_0(1,0,X)},\\
    T(O,\eta_0)&=\pi_0(1,X)-\pi_0(0,X)+\frac{(1-H)Z(D-\pi_0(1,X))}{\rho_0(0,1,X)}-\frac{(1-H)(1-Z)(D-\pi_0(0,X))}{\rho_0(0,0,X)},
\end{align*}
where $\rho_0(h,z,x)=P(H=h,Z=z|X=x)$.

Although two-sample LATE estimation is similar to the data combination in Example \ref{ex:dcte}, they are different in the observed variables and the underlying assumptions.
The significant difference is that we must observe $Z$ in two-sample LATE estimation, whereas we only need to know that the treatment regimes $P(Z_w=1|X)$ are different according to $w$ in Example \ref{ex:dcte}.

\subsection{Two-Sample IDID}
As shown in \cite{Ye2020InstrumentedD}, two-sample estimation is also possible in the IDID setting.
The observable can be expressed as $O=(HY+(1-H)D,Z,W,H,X)$.
If $D_{zw},Y_{1w}-Y_{0w},Y_{01}-Y_{00}\independent H|X$ for $w,z=0,1$, then CATE is identified as:
\begin{align*}
    \theta_0(v)=\frac{E[\mu_0(1,1,X)-\mu_0(0,1,X)-\mu_0(1,0,X)+\mu_0(0,0,X)|V=v]}{E[\pi_0(1,1,X)-\pi_0(0,1,X)-\pi_0(1,0,X)+\pi_0(0,0,X)|V=v]},
\end{align*}
where $\mu_0(z,x)=E[Y|H=1,W=w,Z=z,X=x]$ and $\pi_0(z,x)=E[D|H=0,W=w,Z=z,X=x]$.
In this case,
\begin{align*}
    \nu_0(v)&=\sum_{w,z\in\{0,1\}^2}(-1)^{w+z}E[\mu_0(w,z,X)|V=v],\\
    \nu_0(v)&=\sum_{w,z\in\{0,1\}^2}(-1)^{w+z}E[\pi_0(w,z,X)|V=v].
\end{align*}
The following signals satisfy the Neyman orthogonality:
\begin{align*}
    U(O,\eta_0)&=\sum_{w,z\in\{0,1\}^2}(-1)^{w+z}\left(\mu_0(w,z,X)+\frac{1_{H=1}1_{W=w}1_{Z=z}(Y-\mu_0(w,z,X))}{\rho_0(1,w,z,X)}\right),\\
    T(O,\eta_0)&=\sum_{w,z\in\{0,1\}^2}(-1)^{w+z}\left(\pi_0(w,z,X)+\frac{1_{H=0}1_{W=w}1_{Z=z}(D-\pi_0(w,z,X))}{\rho_0(0,w,z,X)}\right),
\end{align*}
where $\rho_0(h,w,z,x)=P(H=h,W=w,Z=z|X=x)$.
We can obtain the estimate $\hat{\rho}(h,w,z,x)$ by implementing an eight-class classification.
However, it becomes easier for $\hat{\rho}(h,w,z,x)$ to take small values as the number of classes increases, leading to the unstable estimation in practice.

\section{Proofs}
\subsection{Proofs of Results in Section \ref{sec:asymtheory}}
\begin{lemma}[LLN for Matrices]\label{le:matlln}
Let $Q_i, i\in[N]$ be independent not necessarily identically distributed and symmetric non-negative $K\times K$ matrices such that $K\geq2$ and $\|Q_i\|\leq M$ $a.s.$
Let $Q=E_N[E[Q_i]]$ and $\hat{Q}=E_N[Q_i]$.
Then,
\begin{align*}
    E[\|\hat{Q}-Q\|]\leq\sqrt{\frac{M(1+\|Q\|)\log N}{N}}.
\end{align*}
In particular, if $Q_i=p_ip_i't_i$ with $\|p_i\|\leq\xi_k$ and $|t_i|\leq1$, then
\begin{align*}
    E[\|\hat{Q}-Q\|]\leq\sqrt{\frac{\xi_k^2(1+\|Q\|)\log N}{N}}.
\end{align*}
\end{lemma}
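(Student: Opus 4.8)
The plan is to reduce $\hat{Q}-Q$ to a normalized sum of independent, centered, bounded random matrices and apply a matrix (noncommutative) Bernstein inequality. First I would write $\hat{Q}-Q=N^{-1}\sum_{i=1}^N Z_i$ with $Z_i:=Q_i-E[Q_i]$; these are independent, symmetric, mean-zero $K\times K$ matrices with $\|Z_i\|\le\|Q_i\|+\|E[Q_i]\|\le 2M$ almost surely. The only step with genuine content is the variance proxy: since each $Q_i\succeq 0$ with $\|Q_i\|\le M$, we have $Q_i^2\preceq M Q_i$ pointwise, hence $E[Z_i^2]=E[Q_i^2]-(E[Q_i])^2\preceq E[Q_i^2]\preceq M\,E[Q_i]$, and therefore $\big\|\sum_i E[Z_i^2]\big\|\le M\big\|\sum_i E[Q_i]\big\|=M N\|Q\|$.

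Next I would invoke the matrix Bernstein inequality in expectation form (as used in \cite{belloni2015some}), which with the bounds $\|Z_i\|\le 2M$ and $\|\sum_i E[Z_i^2]\|\le MN\|Q\|$ gives
\begin{align*}
E\Big\|\sum_{i=1}^N Z_i\Big\|\lesssim\sqrt{M N\|Q\|\log K}+M\log K,
\end{align*}
and hence, dividing by $N$,
\begin{align*}
E\big\|\hat{Q}-Q\big\|\lesssim\sqrt{\frac{M\|Q\|\log K}{N}}+\frac{M\log K}{N}.
\end{align*}
Then I would simplify: in the regime relevant to the paper — where $M\log N/N$ is bounded (in the application $M=\xi_k^2$ and Assumption \ref{as:basis_rate} gives $\xi_k^2\log N/N=o(1)$) and $K\le N$ so that $\log K\le\log N$ (here $K=k\lesssim\xi_k^2$) — the second term is dominated by $\sqrt{M\log N/N}$, and combining it with the first via $\sqrt{a}+\sqrt{b}\le\sqrt{2}\sqrt{a+b}$ yields the stated bound $\sqrt{M(1+\|Q\|)\log N/N}$. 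The rank-one specialization is then immediate: for $Q_i=p_ip_i't_i$ one has $\|Q_i\|=\|p_i\|^2|t_i|\le\xi_k^2$, so the general bound applies with $M=\xi_k^2$.

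The main obstacle is bookkeeping rather than ideas: to obtain the inequality exactly as written (leading constant one) one needs either a sharp form of the Bernstein bound or, more simply, to follow the argument of \cite{belloni2015some}, from which this lemma is essentially adapted; one should also dispose of the trivial regime where $M\log N/N$ is not small by arguing directly from $\|\hat{Q}-Q\|\le\|\hat{Q}\|+\|Q\|\le M+\|Q\|$ (or noting it is precluded by the maintained assumptions). An alternative route avoiding matrix Bernstein is symmetrization with Rademacher multipliers followed by the noncommutative Khintchine inequality, $E_\varepsilon\|\sum_i\varepsilon_i Q_i\|\lesssim\sqrt{\log K}\,\|\sum_i Q_i^2\|^{1/2}$, then the same majorization $\sum_i Q_i^2\preceq M\sum_i Q_i$, the self-bounding relation $\|\sum_i Q_i\|\le N\|\hat{Q}-Q\|+N\|Q\|$, and solving the resulting quadratic inequality in $E\|\hat{Q}-Q\|$; this produces the identical rate and is closer to the classical Rudelson-type argument.
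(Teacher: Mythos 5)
Your proposal is correct and matches the paper's treatment: the paper offers no proof of this lemma, simply deferring to \cite{rudelson1999random}, \cite{tropp2015introduction} and \cite{belloni2015some}, and your argument (centering, the variance proxy $E[Z_i^2]\preceq M\,E[Q_i]$ from $Q_i\succeq 0$, matrix Bernstein in expectation, with the Rudelson-style symmetrization/noncommutative Khintchine route as the alternative) is precisely the standard argument behind those citations. Your caveats are also apt --- the bound as stated with leading constant one only holds up to constants or in the regime where $M\log N/N$ is small (as guaranteed by Assumption \ref{as:basis_rate} in the application), and the rank-one specialization with possibly negative $t_i$ requires replacing the variance proxy by $E[Q_i^2]\preceq \xi_k^2 E[p_ip_i't_i^2]$, which changes nothing in the rate.
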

See \cite{rudelson1999random,tropp2015introduction,belloni2015some} for the proof.

\begin{lemma}\label{le:regularQ}
Let Assumption \ref{as:basis_eigen} and \ref{as:boundedfunc} hold.
Then, all eigenvalues of $Q=E[p(V)p(V)'\zeta_0(V)]$ are bounded above and away from zero uniformly over $k$.
\end{lemma}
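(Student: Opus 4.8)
The plan is to control the quadratic form $b'Qb = E[(b'p(V))^2\,\zeta_0(V)]$ uniformly over $b\in\mathcal{A}^{k-1}$ and over $k$, and then to read off the eigenvalue bounds from the fact that $Q$ is real symmetric. The two inputs I would use are Assumption~\ref{as:basis_eigen}, which supplies constants $0<\underline\lambda\le\overline\lambda<\infty$, not depending on $k$, with $\underline\lambda\le b'E[p(V)p(V)']b\le\overline\lambda$ for every $b\in\mathcal{A}^{k-1}$, and Assumption~\ref{as:boundedfunc}, which supplies $\overline{c}_\zeta:=\sup_{v\in\mathcal{V}}|\zeta_0(v)|<\infty$ together with the non-vanishing of $\zeta_0$ on $\mathcal{V}$.

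For the upper bound, for any $b\in\mathcal{A}^{k-1}$ I would estimate
\[
|b'Qb|\le E\bigl[(b'p(V))^2\,|\zeta_0(V)|\bigr]\le\overline{c}_\zeta\,b'E[p(V)p(V)']b\le\overline{c}_\zeta\,\overline\lambda,
\]
so that $\|Q\|\le\overline{c}_\zeta\overline\lambda$, and hence every eigenvalue of $Q$ is at most $\overline{c}_\zeta\overline\lambda$ in absolute value, uniformly in $k$. For the lower bound, I would use that $\zeta_0$ may be taken to have a constant sign and to be bounded away from zero, say $\zeta_0(v)\ge\underline{c}_\zeta>0$ for all $v\in\mathcal{V}$ (the case $\zeta_0\le-\underline{c}_\zeta$ is symmetric); then, for every $b\in\mathcal{A}^{k-1}$,
\[
b'Qb=E\bigl[(b'p(V))^2\,\zeta_0(V)\bigr]\ge\underline{c}_\zeta\,b'E[p(V)p(V)']b\ge\underline{c}_\zeta\,\underline\lambda>0,
\]
so $Q$ is positive definite with $\lambda_{\min}(Q)\ge\underline{c}_\zeta\underline\lambda$ uniformly in $k$. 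Combining the two displays yields the claim.

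The only step I expect to require care is the passage, in the lower bound, from ``$\zeta_0(v)\ne0$ for each $v$'' to a uniform bound $\inf_{v\in\mathcal{V}}|\zeta_0(v)|>0$ together with a single sign. This is immediate when, as in the settings of Section~\ref{sec:app_dml}, $\mathcal{V}$ is connected and $\zeta_0$ is continuous (so $\zeta_0$ cannot change sign without vanishing) and $|\zeta_0|$ attains a positive minimum on $\mathcal{V}$; absent such structure the reduction fails, because if $\zeta_0$ changes sign the cancellation between the regions $\{\zeta_0>0\}$ and $\{\zeta_0<0\}$ can drive $b'Qb=E[(b'p(V))^2\,\zeta_0(V)]$ arbitrarily close to zero for a suitable $b$, rendering $Q$ nearly singular even under Assumption~\ref{as:basis_eigen}. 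I would therefore treat ``$\zeta_0$ bounded away from zero with constant sign'' as the operative content of Assumption~\ref{as:boundedfunc} for this lemma.
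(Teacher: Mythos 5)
Your proposal is correct and follows essentially the same route as the paper: bound the quadratic form $b'Qb=E[(b'p(V))^2\zeta_0(V)]$ above by $\sup_v|\zeta_0(v)|$ times the largest eigenvalue of $E[p(V)p(V)']$ and below using a uniform lower bound on $|\zeta_0|$ together with Assumption~\ref{as:basis_eigen}. The caveat you flag is well taken but is exactly what the paper's own proof also relies on implicitly---it sets $\underline{\zeta}^2:=\inf_{v\in\mathcal{V}}\zeta_0(v)^2$ and asserts $E[(a'p(V))^2\zeta_0(V)]^2\geq E[(a'p(V))^2]^2\underline{\zeta}^2$, which requires $\zeta_0$ bounded away from zero with a constant sign rather than merely $\zeta_0(v)\neq0$ pointwise---so your explicit treatment of that step is, if anything, more careful than the paper's.
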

\begin{proof}[Proof of Lemma \ref{le:regularQ}]
Define $\underline{\zeta}^2:=\inf_{v\in\mathcal{V}}\zeta_0(v)^2$.
For any $a\in\{a\in\mathbb{R}^k:\|a\|\neq0\|\}$,
\begin{align*}
    (a'E[p(V)p(V)'\zeta_0(V)]a)^2&=E[(a'p(V))^2\zeta(V)]^2\geq E[(a'p(V))^2]^2\underline{\zeta}^2>0.
\end{align*}
The last inequality follows from $E[(a'p(V))^2]>0$ Assumption \ref{as:basis_eigen} and $\underline{\zeta}^2>0$ by Assumption \ref{as:boundedfunc}.
\end{proof}

\begin{lemma}[No Effect of First-Stage Error]\label{le:analograte}
Under Assumption \ref{as:smallbias},
\begin{align*}
    \sqrt{N}\|E_N[p_i(U_i(\hat{\eta})-U_i(\eta_0))]\|&=O_P(B_N+\Lambda_N)=o(1),\\
    \sqrt{N}\|E_N[p_i(T_i(\hat{\eta})-T_i(\eta_0))]\|&=O_P(B_N+\Lambda_N)=o(1).
\end{align*}
\end{lemma}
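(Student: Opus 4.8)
The plan is to exploit the sample splitting built into the cross-fitting construction, which makes $\hat{\eta}_g$ independent of the fold $J_g$ on which the signals are evaluated. Fix $g\in[G]$ and set $\Delta_g:=\tfrac1n\sum_{i\in J_g}p_i\bigl(U_i(\hat{\eta}_g)-U_i(\eta_0)\bigr)$, so that $E_N[p_i(U_i(\hat{\eta})-U_i(\eta_0))]=\tfrac1G\sum_{g=1}^G\Delta_g$. Conditioning on the sample indices in $J_g^c$ renders $\hat{\eta}_g$ nonrandom, and the summands defining $\Delta_g$ become i.i.d.\ draws. I would then decompose $\Delta_g=\mu_g+(\Delta_g-\mu_g)$, where $\mu_g:=E[p(V)(U(O,\hat{\eta}_g)-U_0)\mid J_g^c]$ is the conditional bias and $\Delta_g-\mu_g$ is a centered conditional average.

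For the bias term, on the event that $\hat{\eta}_g\in\mathcal{S}_N$ (which holds with probability approaching one by Assumption \ref{as:smallbias}) we have $\|\mu_g\|\le\sup_{\eta\in\mathcal{S}_N}\|E[p(V)(U-U_0)]\|\le B_N/\sqrt{N}$ directly from the definition of $B_N$. For the stochastic deviation, conditionally on $J_g^c$ the coordinates are centered i.i.d.\ averages, so $E[\|\Delta_g-\mu_g\|^2\mid J_g^c]=n^{-1}\mathrm{tr}\,\mathrm{Cov}\!\left(p(V)(U(O,\hat{\eta}_g)-U_0)\mid J_g^c\right)\le n^{-1}E[\|p(V)(U(O,\hat{\eta}_g)-U_0)\|^2\mid J_g^c]$, which on $\{\hat{\eta}_g\in\mathcal{S}_N\}$ is at most $\Lambda_N^2/n$. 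By the conditional Markov inequality this yields $\|\Delta_g-\mu_g\|=O_P(\Lambda_N/\sqrt{n})$, and integrating over the conditioning variable (using that $P(\hat{\eta}_g\notin\mathcal{S}_N)\to0$) gives the same bound unconditionally.

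Combining via the triangle inequality over the $G=O(1)$ folds gives $\|E_N[p_i(U_i(\hat{\eta})-U_i(\eta_0))]\|\le\tfrac1G\sum_g(\|\mu_g\|+\|\Delta_g-\mu_g\|)=O_P(B_N/\sqrt{N}+\Lambda_N/\sqrt{n})$. Multiplying by $\sqrt{N}$ and using $n=N/G$ with fixed $G$ produces the claimed $O_P(B_N+\Lambda_N)$, which is $o(1)$ by Assumption \ref{as:smallbias}. The identical argument applied to the denominator signal $T$ proves the second display.

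The one place that needs care is the bookkeeping around the event $\{\hat{\eta}_g\in\mathcal{S}_N\}$: the moment bounds in Assumption \ref{as:smallbias} are asserted only \emph{uniformly over} $\mathcal{S}_N$, not pointwise at the random argument $\hat{\eta}_g$, so one must pass through the conditional argument above and then verify that the complementary event contributes asymptotically nothing. This is routine because that event has vanishing probability and the conclusion is an $O_P$ statement, but it is the step where the cross-fitting is genuinely used — without it, $\hat{\eta}_g$ would be correlated with the $J_g$ summands and the clean conditional-mean/variance split would be unavailable.
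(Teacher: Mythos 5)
Your argument is correct and is essentially the paper's proof: the paper simply invokes Assumption \ref{as:smallbias} together with the proof of Lemma A.3 in \cite{semenova2021debiased}, and that cited proof is exactly your fold-wise decomposition with conditioning on $J_g^c$, the bias bound $B_N/\sqrt{N}$, the conditional Chebyshev bound $\Lambda_N/\sqrt{n}$, and the treatment of the event $\{\hat{\eta}_g\in\mathcal{S}_N\}$ having probability tending to one. You have merely written out in full what the paper handles by citation, so there is nothing to correct.
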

\begin{proof}[Proof of Lemma \ref{le:analograte}]
Lemma \ref{le:analograte} follows from Assumption \ref{as:smallbias} and the proof of Lemma A.3 in \cite{semenova2021debiased}.
\end{proof}

\begin{proof}[Proof of Lemma \ref{le:prate}]
\textbf{(a)} By Assumption \ref{as:basis_rate} and Lemma \ref{le:matlln}, $\|\hat{Q}-Q\|\rightarrow_P0$ as $N\rightarrow\infty$.
Therefore, all eigenvalues of $\hat{Q}$ are also bounded away from zero with probability approaching one by Lemma $\ref{le:regularQ}$.
Thus,
\begin{align*}
    \|\hat{\beta}-\beta_k\|&=\|\hat{Q}^{-1}E_N[p_i\hat{U}_i]-\beta_0\|=\|\hat{Q}^{-1}E_N[p_i(\hat{U}_i-\hat{T}_i\theta_{ki})]\|\lesssim\|E_N[p_i(\hat{U}_i-\hat{T}_i\theta_{ki})]\|.
\end{align*}
We can decompose the estimated orthogonal scores as:
\begin{align*}
    \hat{U}=(\hat{U}-U_0)+\nu_0+\varepsilon_U,\hspace{5mm}\hat{T}=(\hat{T}-T_0)+\zeta_0+\varepsilon_T.
\end{align*}
Using this decomposition and noting that $\theta_k=\theta_0-r$, we have
\begin{align*}
    E_N[p_i(\hat{U}_i-\hat{T}_i\theta_{ki})]&=E_N[p_i(\hat{U}_i-U_{0i})]+E_N[p_i\theta_{ki}(\hat{T}_i-T_{0i})]\\
    &\qquad+E_N[p_i\varepsilon_{Ui}]-E_N[p_i\theta_{0i}\varepsilon_{Ti}]+E_N[p_i\zeta_{0i}r_i]+E_N[p_i\varepsilon_{Ti}r_i]\\
    &=:J_1+J_2+J_3-J_4+J_5+J_6.
\end{align*}
Therefore, it suffices to derive the bound on the norm of these six terms by the triangle inequality.

The first-stage errors $J_1$ and $J_2$ are bounded as
\begin{align*}
    \|J_1\|+\|J_2\|&\leq\|E_N[p_i(\hat{U}_i-U_{0i})]\|+\|E_N[p_i\theta_{0i}(\hat{T}_i-T_{0i})]\|+\|E_N[p_ir_i(\hat{T}_i-T_{0i})]\|\\
    &=o_P(N^{-1/2}(1+l_kr_k)),
\end{align*}
by Lemma \ref{le:analograte}, Assumption \ref{as:approxerror} and Assumption \ref{as:boundedfunc}.
For $J_3$ and $J_4$, we have
\begin{align*}
    \|J_3\|+\|J_4\|\lesssim\|E_N[p_i\varepsilon_i]\|
\end{align*}
since $\theta_0=\nu_0/\zeta_0$ is uniformly bounded by Assumption \ref{as:boundedfunc}.
Similarly for $J_5$ and $J_6$, we have
\begin{align*}
    \|J_5\|+\|J_6\|\lesssim\|E_N[p_ir_i]\|
\end{align*}
by Assumption \ref{as:samplingerror} and \ref{as:boundedfunc}.
As shown in the proof of Theorem 4.1 in \cite{belloni2015some}, the bound is given as
\begin{align*}
    \|E_N[p_i\varepsilon_i]\|+\|E_N[p_ir_i]\|\lesssim_P\sqrt{\frac{k}{N}}+\left(\sqrt{\frac{k}{N}}l_kr_k\land\frac{\xi_kr_k}{\sqrt{N}}\right).
\end{align*}
This completes the proof of Lemma \ref{le:prate} (a).

\noindent\textbf{(b)}
Using the same decomposition as in part (a), we obtain
\begin{align*}
    \sqrt{N}\alpha'(\hat{\beta}-\beta_0)&=\sqrt{N}\alpha'\hat{Q}^{-1}(J_1+J_2+J_3-J_4+J_5+J_6)\\
    &=\alpha'Q^{-1}\mathbb{G}_N[p_i(\varepsilon_{Ui}-\theta_{0i}\varepsilon_{Ti})]+\alpha'(\hat{Q}^{-1}-Q^{-1})\mathbb{G}_N[p_i(\varepsilon_{Ui}-\theta_{0i}\varepsilon_{Ti})]\\
    &\qquad+\sqrt{N}\alpha'Q^{-1}(J_1+J_2)+\sqrt{N}\alpha'(\hat{Q}^{-1}-Q^{-1})(J_1+J_2)\\
    &\qquad+\sqrt{N}\alpha'Q^{-1}(J_5+J_6)+\sqrt{N}\alpha'(\hat{Q}^{-1}-Q^{-1})(J_5+J_6)\\
    &=:\alpha'Q^{-1}\mathbb{G}_N[p_i(\varepsilon_{Ui}-\theta_{0i}\varepsilon_{Ti})]+L_1+L_2+L_3+L_4+L_5\\
    &=:\alpha'Q^{-1}\mathbb{G}_N[p_i(\varepsilon_{Ui}-\theta_{0i}\varepsilon_{Ti})]+R_N(\alpha).
\end{align*}
By Assumption \ref{as:boundedfunc}, we have that
\begin{align*}
    |L_1|\lesssim|\alpha'(\hat{Q}^{-1}-Q^{-1})\mathbb{G}_N[p_i\varepsilon_i]|.
\end{align*}
and its bound can be obtained as
\begin{align*}
    |\alpha'(\hat{Q}^{-1}-Q^{-1})\mathbb{G}_N[p_i\varepsilon_i]|\lesssim_P\sqrt{\frac{\xi_k^2\log N}{N}},
\end{align*}
by the proof of Lemma 4.1 in \cite{belloni2015some}.
By Lemma \ref{le:matlln}, \ref{le:regularQ}, \ref{le:analograte} and Assumption \ref{as:basis_rate},
\begin{align*}
    |L_2|&\lesssim_P\|\alpha\|\|Q^{-1}\|\|\sqrt{N}(J_1+J_2)\|=o(1),\\
    |L_3|&\lesssim_P\|\alpha\|\|\hat{Q}^{-1}-Q^{-1}\|\|\sqrt{N}(J_1+J_2)\|\lesssim_P\sqrt{\frac{\xi_k^2\log N}{N}}o(1)=o(1).
\end{align*}
For $|L_4|$, we have that
\begin{align*}
    |L_4|&\lesssim\left|\sqrt{N}E_N[\alpha'p_ir_i]\right|\lesssim_P E[(\alpha'p_ir_i)^2]^{1/2}\lesssim l_kr_k,
\end{align*}
where the first wave inequality follows from Assumption \ref{as:samplingerror} and \ref{as:boundedfunc}, the second inequality follows from Chebyshev's inequality since $E[p_ir_i]=0$, and the last wave inequality follows from Assumption \ref{as:approxerror}.
Also, for $|L_5|$, we have that
\begin{align*}
    |L_5|&\lesssim\left\|\hat{Q}^{-1}-Q^{-1}\right\|\left\|\sqrt{N}E_N[p_ir_i]\right\|\lesssim_P\sqrt{\frac{\xi_k^2\log N}{N}}\left(l_kr_k\sqrt{k}\land\xi_kr_k\right),
\end{align*}
which is shown in the proof of Lemma 4.1 in \cite{belloni2015some}.
Combining the bounds on $L_j$ for $j\in[5]$ gives the linearization result.
\end{proof}

\begin{proof}[Proof of Theorem \ref{th:pnormal}]
Theorem \ref{th:pnormal} can be proved by using Lemma \ref{le:prate} and applying the proof of Theorem 4.2 in \cite{belloni2015some}.
\end{proof}

\begin{proof}[Proof of Lemma \ref{le:urate}]
\noindent\textbf{(a)} The similar decomposition as in the proof of Lemma \ref{le:prate} (b) gives
\begin{align*}
    R_N(\alpha(v))&=\alpha(v)'(\hat{Q}^{-1}-Q^{-1})\mathbb{G}_N[p_i(\varepsilon_{Ui}+\theta_{0i}\varepsilon_{Ti})]\\
    &\qquad+\sqrt{N}\alpha(v)'Q^{-1}(J_1+J_2)+\sqrt{N}\alpha(v)'(\hat{Q}^{-1}-Q^{-1})(J_1+J_2)\\
    &\qquad+\sqrt{N}\alpha(v)'Q^{-1}(J_5+J_6)+\sqrt{N}\alpha(v)'(\hat{Q}^{-1}-Q^{-1})(J_5+J_6)\\
    &=:L_1(v)+L_2(v)+L_3(v)+L_4(v)+L_5(v).
\end{align*}
By the similar argument as in the proof of Lemma \ref{le:prate} (b), we have that
\begin{align*}
    |L_1(v)|\lesssim\left|\alpha(v)'(\hat{Q}^{-1}-Q^{-1})\mathbb{G}_N[p_i\varepsilon_i]\right|.
\end{align*}
According to the proof of Lemma 4.2 in \cite{belloni2015some},
\begin{align*}
    \sup_{v\in\mathcal{V}}\left|\alpha(v)'(\hat{Q}^{-1}-Q^{-1})\mathbb{G}_N[p_i\varepsilon_i]\right|\lesssim_P N^{1/m}\sqrt{\frac{\xi_k^2\log^2N}{N}}.
\end{align*}
By Lemma \ref{le:matlln}, \ref{le:regularQ}, \ref{le:analograte} and Assumption \ref{as:basis_rate},
\begin{align*}
    \sup_{v\in\mathcal{V}}|L_2(v)|&\leq\sup_{v\in\mathcal{V}}\|\alpha(v)\|\left\|Q^{-1}\right\|\left\|\sqrt{N}(J_1+J_2)\right\|=o(1),\\
    \sup_{v\in\mathcal{V}}|L_3(v)|&\leq\sup_{v\in\mathcal{V}}\|\alpha(v)\|\left\|\hat{Q}^{-1}-Q^{-1}\right\|\left\|\sqrt{N}(J_1+J_2)\right\|\lesssim_P\sqrt{\frac{\xi_k^2\log N}{N}}o(1)=o(1).
\end{align*}
For $L_4(v)$ and $L_5(v)$, we have that
\begin{align*}
    \sup_{v\in\mathcal{V}}|L_4(v)|&\lesssim\sup_{v\in\mathcal{V}}\left|\sqrt{N}\alpha(v)'Q^{-1}E_N[p_ir_i]\right|\lesssim_P l_kr_k\sqrt{\log k},\\
    \sup_{v\in\mathcal{V}}|L_5(v)|&\lesssim\sup_{v\in\mathcal{V}}\left|\sqrt{N}\alpha(v)'(\hat{Q}^{-1}-Q^{-1})E_N[p_ir_i]\right|\lesssim_P\sqrt{\frac{\xi_k^2\log N}{N}}\left(l_kr_k\sqrt{k}\land\xi_kr_k\right),
\end{align*}
where the first inequality in both lines follows from Assumption \ref{as:samplingerror} and \ref{as:boundedfunc}, and the both bounds are given by Lemma 4.2 in \cite{belloni2015some}.
Combining the bounds on $L_j(v)$ for $j\in[5]$ gives the linearization result.

\noindent\textbf{(b)}
Lemma \ref{le:urate} (b) follows from Lemma \ref{le:urate} (a) and the proof of Theorem 4.3 in \cite{belloni2015some}.
\end{proof}

\begin{proof}[Proof of Theorem \ref{th:strongaprx}]
Theorem \ref{th:strongaprx} can be proved by using Lemma \ref{le:urate} and applying the proof of Theorem 4.4 in \cite{belloni2015some}.
\end{proof}

\begin{proof}[Proof of Theorem \ref{th:matest}]
Theorem \ref{th:matest} can be proved by applying the proof of Theorem 3.3 in \cite{semenova2021debiased}.
\end{proof}

\begin{proof}[Proof of Theorem \ref{th:bootstrap}]
Theorem \ref{th:bootstrap} follows from the proof of Theorem 3.4 in \cite{semenova2021debiased}.
See also \cite{chernozhukov2013intersection} for details.
\end{proof}

\begin{proof}[Proof of Theorem \ref{th:confband}]
Theorem \ref{th:confband} follows from the proof of Theorem 3.5 in \cite{semenova2021debiased}.
See also \cite{belloni2015some} for details.
\end{proof}

\subsection{Proofs of Results in Section \ref{sec:app_dml}}
\begin{lemma}[Neyman Orthogonality of the Doubly Robust Signal]\label{le:orthosignal}
Consider the following doubly robust type signal:
\begin{align*}
    U(O,\eta_0)&=\mu_0(1,X)+\frac{Z(Y-\mu_0(1,X))}{\rho_0(X)},
\end{align*}
where $Z$ is a binary variable, $\mu_0(z,x)=E[Y|Z=z,X=x]$ is an outcome regression function and $\rho_0(x)=E[Z|X=x]$ is a propensity score.
Also, define $\nu_0(v)=E[\mu_0(1,X)|V=v]$.
Then, the above signal satisfies the Neyman orthogonality, namely, for all $v\in\mathcal{V}$,
\begin{align*}
    \partial_s E[U(O,\eta_0+s(\eta-\eta_0))-\nu_0(V)|V=v]|_{s=0}=0.
\end{align*}
\end{lemma}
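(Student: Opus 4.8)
The plan is to compute the pathwise derivative directly and show it vanishes after the conditional expectation is taken. Write the generic nuisance as $\eta=(\mu,\rho)$ and the perturbation as $\eta_s:=\eta_0+s(\eta-\eta_0)$, and set $\Delta_\mu(x):=\mu(1,x)-\mu_0(1,x)$ and $\Delta_\rho(x):=\rho(x)-\rho_0(x)$, so that
\begin{align*}
    U(O,\eta_s)=\mu_0(1,X)+s\Delta_\mu(X)+\frac{Z\big(Y-\mu_0(1,X)-s\Delta_\mu(X)\big)}{\rho_0(X)+s\Delta_\rho(X)}.
\end{align*}
Since $\nu_0(V)$ does not involve $s$, its Gateaux derivative is zero, so it suffices to prove $\partial_s E[U(O,\eta_s)\mid V=v]|_{s=0}=0$.

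First I would differentiate under the expectation --- the interchange being legitimate by the uniform boundedness of $\mu$, $\rho$ and $\rho^{-1}$ over the shrinking neighbourhood $\mathcal{S}_N$ that the low-level conditions impose --- and compute, via the quotient rule,
\begin{align*}
    \partial_s U(O,\eta_s)\big|_{s=0}=\Delta_\mu(X)-\frac{Z\Delta_\mu(X)}{\rho_0(X)}-\frac{Z\big(Y-\mu_0(1,X)\big)\Delta_\rho(X)}{\rho_0(X)^2}.
\end{align*}

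Next, because $V$ is a subvector of $X$, I would apply the law of iterated expectations and condition first on $X$. The first two terms combine into $\Delta_\mu(X)\big(1-E[Z\mid X]/\rho_0(X)\big)$, which is zero since $E[Z\mid X]=\rho_0(X)$. The last term has conditional mean zero because $E[Z(Y-\mu_0(1,X))\mid X]=\rho_0(X)\mu_0(1,X)-\rho_0(X)\mu_0(1,X)=0$, using $E[ZY\mid X]=P(Z=1\mid X)E[Y\mid Z=1,X]=\rho_0(X)\mu_0(1,X)$ and $E[Z\mu_0(1,X)\mid X]=\rho_0(X)\mu_0(1,X)$. Hence $E[\partial_s U(O,\eta_s)|_{s=0}\mid X]=0$, and a further conditioning on $V=v$ yields the claim.

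There is no genuine obstacle here: the only delicate points are justifying the differentiation under the expectation and recording that $V\subseteq X$ so that conditioning on $X$ is valid before conditioning on $V=v$. I would also note that this lemma is the workhorse behind the Neyman-orthogonality statements in Theorems \ref{th:mclate}, \ref{th:mcrb}, \ref{th:mcidid} and \ref{th:mcdc}: each of those signals is a finite linear combination of terms of the displayed form (with $Z$ replaced by the relevant event indicator such as $1_{W=w}1_{Z=z}$, $\rho_0$ by the corresponding cell probability, and $Y$ by the relevant variable or by $D$), so orthogonality for them follows from this lemma by linearity of the Gateaux derivative and of conditional expectation.
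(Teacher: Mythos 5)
Your proposal is correct and is essentially the paper's own argument: a direct Gateaux-derivative computation with differentiation interchanged with the (conditional) expectation, using $E[Z\mid X]=\rho_0(X)$ and $E[Z(Y-\mu_0(1,X))\mid X]=0$ via the tower property with $V$ a subvector of $X$. The only cosmetic difference is order of operations --- the paper first reduces $E[U(O,\eta_0+s(\eta-\eta_0))\mid V]$ to an explicit function of $s$ and then differentiates, while you differentiate the integrand at $s=0$ and then take expectations --- and your closing observation that the orthogonality claims in Theorems \ref{th:mclate}--\ref{th:mcdc} follow by linearity is exactly how the paper invokes the lemma.
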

\begin{proof}[Proof of Lemma \ref{le:orthosignal}]
If we can interchange differentiation and integration, we have, for $\Psi(s):=\partial_s E[U(O,\eta_0+s(\eta-\eta_0))-\nu_0(V)|V=v]$, that
\begin{align*}
    \Psi(s)&=\partial_s E\left[\left.s(\mu-\mu_0)-\frac{s\rho_0(\mu-\mu_0)}{\rho_0+s(\rho-\rho_0)}\right|V\right]\\
    &=E\left[\left.\mu-\mu_0-\frac{\rho_0(\mu-\mu_0)(\rho_0+s(\rho-\rho_0))-s\rho_0(\mu-\mu_0)(\rho-\rho_0)}{(\rho_0+s(\rho-\rho_0))^2}\right|V\right]\\
    &=E\left[\left.\mu-\mu_0-\frac{\rho_0^2(\mu-\mu_0)}{(\rho_0+s(\rho-\rho_0))^2}\right|V\right].
\end{align*}
Thus, $\Psi(0)=0$, which concludes the proof.
\end{proof}

\begin{proof}[Proof of Theorem \ref{th:mclate}]
\noindent\textbf{(a)}
We have that
\begin{align*}
    \nu_0(v)&=E[D_1Y_1+(1-D_1)Y_0|V=v]-E[D_0Y_1+(1-D_0)Y_0|V=v]\\
    &=E[(D_1-D_0)(Y_1-Y_0)|V=v]\\
    &=E[D_1-D_0|V=v]E[Y_1-Y_0|D_1-D_0=1,V=v],
\end{align*}
where the first equality holds by Assumption \ref{as:lateid}.1, and the last inequality follows from that $D_1-D_0\in\{0,1\}$ by Assumption \ref{as:lateid}.4.
Likewise, we have that
\begin{align*}
    \zeta_0(v)&=E[D_1-D_0|V=v],
\end{align*}
by Assumption \ref{as:lateid}.1.
Lastly, because $E[D_1-D_0|V=v]\neq0$ for all $v\in\mathcal{V}$ by Assumption \ref{as:lateid}.3, $\nu_0(v)/\zeta_0(v)$ is well-defined and it equals to
\begin{align*}
    \frac{\nu_0(v)}{\zeta_0(v)}&=\frac{E[D_1-D_0|V=v]E[Y_1-Y_0|D_1-D_0=1,V=v]}{E[D_1-D_0|V=v]}\\
    &=E[Y_1-Y_0|D_1>D_0,V=v]\\
    &=\theta_0(v),
\end{align*}
where the second equality holds since $D_1-D_0=1$ implies that $D_1>D_0$.

\noindent\textbf{(b)}
The moment conditions in part (b) immediately follow from the definition of the regression functions $\mu_0,\pi_0$ and the propensity score $\rho_0$.

\noindent\textbf{(c)}
The Neyman orthogonality of the signals $U$ and $T$ follows from Lemma \ref{le:orthosignal}.
\end{proof}

\begin{proof}[Proof of Corollary \ref{cr:late}]
We can decompose $U(O,\eta)-U(O,\eta_0)$ as
\small\begin{align*}
    U(O,\eta)-U(O,\eta_0)&=(\mu(1,X)-\mu_0(1,X))\left(1-\frac{Z}{\rho_0(X)}\right)-(\mu(0,X)-\mu_0(0,X))\left(1-\frac{1-Z}{1-\rho_0(X)}\right)\\
    &\quad+(\rho_0(X)-\rho(X))(Y-\mu_0(1,X))\frac{Z}{\rho(X)\rho_0(X)}\\
    &\qquad-(\rho_0(X)-\rho(X))(Y-\mu_0(0,X))\frac{1-Z}{(1-\rho(X))(1-\rho_0(X))}\\
    &\quad+(\rho_0(X)-\rho(X))(\mu_0(1,X)-\mu(1,X))\frac{D}{\rho(X)\rho_0(X)}\\
    &\qquad-(\rho_0(X)-\rho(X))(\mu_0(0,X)-\mu(0,X))\frac{1-D}{(1-\rho(X))(1-\rho_0(X))}\\
    &=:S_1+S_1'+S_2+S_2'+S_3+S_3'.
\end{align*}\normalsize
In what follows, we will bound $S_1$, $S_2$ and $S_3$ as the bounds on $S_1'$, $S_2'$ and $S_3'$ follow from a similar argument.

\noindent\textbf{Bound on $B_N$.}
By the law of iterated expectation, we have
\begin{align*}
    E[p(V)S_1]=E\left[p(V)(\mu(1,X)-\mu_0(1,X))E\left[\left.1-\frac{Z}{\rho_0(X)}\right|X\right]\right]=0,
\end{align*}
and likewise,
\begin{align*}
    E[p(V)S_2]=E\left[p(V)\frac{\rho_0(X)-\rho(X)}{\rho(X)\rho_0(X)}E[Z(Y-\mu_0(1,X))|X]\right]=0.
\end{align*}
For $S_3$, we have that
\begin{align*}
    \|E[p(V)S_3]\|^2\leq\overline{C}^3\|E[p(V)(\rho_0(X)-\rho(X))(\mu_0(1,X)-\mu(1,X))]\|^2,
\end{align*}
where we use the fact that
\begin{align}
    E[Z^2\rho(X)^{-2}\rho_0(X)^{-2}|X]=E[Z|X]\rho(X)^{-2}\rho_0(X)^{-2}\leq\overline{C}^3.\label{eq:sqbound}
\end{align}
Therefore, for the case of bounded basis, we can bound $\|E[p(V)S_3]\|^2$ as
\begin{align*}
    \|E[p(V)S_3]\|^2&\lesssim\|E[p(V)(\rho_0(X)-\rho(X))(\mu_0(1,X)-\mu(1,X))]\|^2\\
    &\lesssim\sum_{j=1}^k E[p_j(V)(\rho_0(X)-\rho(X))(\mu_0(1,X)-\mu(1,X))]^2\\
    &\leq k\mathbf{m}_{N,2}^2\mathbf{r}_{N,2}^2,
\end{align*}
where the last inequality follows by Cauchy–Schwarz inequality.
Since the same bound can be derived for $S_1'$, $S_2'$ and $S_3'$, it follows that
\begin{align*}
    \|E[p(V)(U-U_0)]\|\lesssim\sqrt{k}\mathbf{m}_{N,2}\mathbf{r}_{N,2}.
\end{align*}
The almost same argument derives the bound
\begin{align*}
    \|E[p(V)(T-T_0)]\|\lesssim\sqrt{k}\mathbf{p}_{N,2}\mathbf{r}_{N,2}.
\end{align*}
since the signals $U$ and $T$ have the same structure.
Thus, we have that
\begin{align*}
    B_N\lesssim\sqrt{kN}\mathbf{r}_{N,2}(\mathbf{m}_{N,2}+\mathbf{p}_{N,2})=o(1).
\end{align*}
When the basis is not bounded, we can alternatively bound $\|E[p(V)S_3]\|^2$ as
\begin{align*}
    \|E[p(V)S_3]\|^2&\lesssim\|E[p(V)(\rho_0(X)-\rho(X))(\mu_0(1,X)-\mu(1,X))]\|^2\\
    &\lesssim\sum_{j=1}^k E[p_j(V)(\rho_0(X)-\rho(X))^2(\mu_0(1,X)-\mu(1,X))^2]^2\\
    &\leq k\mathbf{m}_{N,2\omega}^2\mathbf{r}_{N,2\psi}^2,
\end{align*}
where the last inequality follows by Hölder's inequality.
Likewise in the case of bounded basis, we have for unbounded basis that
\begin{align*}
    B_N\lesssim\sqrt{kN}\mathbf{r}_{N,2\psi}(\mathbf{m}_{N,2\omega}+\mathbf{p}_{N,2\omega'})=o(1).
\end{align*}

\noindent\textbf{Bound on $\Lambda_N$.}
It suffices to derive the bounds on $E[S_j^2]$ for $j=1,2,3$ since
\begin{align}
    \Lambda_N\lesssim\xi_k\left(\sup_{\eta\in\mathcal{S}_N}E\left[(U-U_0)^2\right]^{1/2}\lor\sup_{\eta\in\mathcal{S}_N}E\left[(T-T_0)^2\right]^{1/2}\right),\label{eq:lambound}
\end{align}
by definition.
We have that
\begin{align*}
    E[S_1^2]=E\left[(\mu(1,X)-\mu_0(1,X))^2E\left[(1-Z\rho_0(X)^{-1})^2|X\right]\right]\lesssim\overline{C}\mathbf{m}_{N,2}^2,
\end{align*}
where the wave inequality is implied by
\begin{align*}
    E\left[(1-Z\rho_0(X)^{-1})^2|X\right]=1-2+\rho_0(X)^{-1}\leq\overline{C}.
\end{align*}
By (\ref{eq:sqbound}), we have that
\begin{align*}
    E[S_2^2]=E\left[(\rho_0(X)-\rho(X))^2E\left[(Y-\mu_0(1,X))^2|Z=1,X\right]\rho(X)^{-2}\rho_0(X)^{-1}\right]\lesssim\overline{C}^3\mathbf{r}_{N,2}^2.
\end{align*}
Lastly, $E[S_3^2]$ is bounded as
\begin{align*}
    E[S_3^2]=E\left[(\rho_0(X)-\rho(X))^2(\mu(1,X)-\mu_0(1,X))^2\rho(X)^{-2}\rho_0(X)^{-1}\right]\leq\overline{C}^5(\mathbf{m}_{N,2}^2\land\mathbf{r}_{N,2}^2),
\end{align*}
because we can bound the terms $(\rho_0(X)-\rho(X))^2$ and $(\mu(1,X)-\mu_0(1,X))^2$ by $\overline{C}^2$.
Thus, Assumption \ref{as:fslate}, (\ref{eq:lambound}) and the bounds on $E[S_j^2]$ for $j=1,2,3$ gives 
\begin{align*}
    \sup_{\eta\in\mathcal{S}_N}E\left[\|p(V)(U-U_0)\|^2\right]^{1/2}\lesssim\xi_k(\mathbf{m}_{N,2}\lor\mathbf{r}_{N,2}).
\end{align*}
Combining this result with the bound on $\sup_{\eta\in\mathcal{S}_N}E\left[\|p(V)(T-T_0)\|^2\right]^{1/2}$ gives $\Lambda_N\lesssim\xi_k(\mathbf{m}_{N,2}\lor\mathbf{p}_{N,2}\lor\mathbf{r}_{N,2})=o(1)$, which concludes the proof.
\end{proof}

\begin{proof}[Proof of Theorem \ref{th:mcrb}]
\noindent\textbf{(a)}
We have that $\mu_0(d,x)=E[Y_d|X=x]$ by Assumption \ref{as:rbcate}.1 and \ref{as:rbcate}.4.
Therefore, $\nu_0(v)=E[\mu_0(1,X)|V=v]=E[Y_1|V=v]$ and $\zeta_0(v)=E[\mu_0(0,X)|V=v]=E[Y_0|V=v]$, which together with Assumption \ref{as:rbcate}.3 indicates $\theta_0(v)=\nu_0(v)/\zeta_0(v)$.

\noindent\textbf{(b)}
The moment conditions in part (b) immediately follow from the definition of the regression functions $\mu_0$ and the propensity score $\pi_0$.

\noindent\textbf{(c)}
The Neyman orthogonality of the signals $U$ and $T$ follows from Lemma \ref{le:orthosignal}.
\end{proof}

\begin{proof}[Proof of Corollary \ref{cr:rbcate}]
The same arguments in the proof of Corollary \ref{cr:late} can be used to derive $B_N\lesssim\sqrt{kN}\mathbf{m}_{N,2}\mathbf{p}_{N,2}=o(1)$ for bounded basis, $B_N\lesssim\sqrt{kN}\mathbf{m}_{N,2\omega}\mathbf{p}_{N,2\psi}=o(1)$ for unbounded basis, and $\Lambda_N\lesssim\xi_k(\mathbf{m}_{N,2}\lor\mathbf{p}_{N,2})=o(1)$.
\end{proof}

\begin{proof}[Proof of Theorem \ref{th:mcidid}]
\noindent\textbf{(a)}
For $z=0,1$, we have that
\begin{align*}
    \mu_0(1,z,X)-\mu_0(0,z,X)&=E[D_{z1}(Y_{11}-Y_{01})+Y_{01}-D_{z0}(Y_{10}+Y_{00})-Y_{00}|Z=z,X]\\
    &=E[D_{z1}(Y_{11}-Y_{01})-D_{z0}(Y_{10}+Y_{00})|X]+E[Y_{01}-Y_{00}|X],
\end{align*}
where the first equality follows by Assumption \ref{as:idid}.1 and 3, and the second equality follows by Assumption \ref{as:idid}.5.
Therefore,
\begin{align*}
    \sum_{(w,z)\in\{0,1\}^2}(-1)^{w+z}\mu_0(w,z,X)&=E[(D_{11}-D_{01})(Y_{11}-Y_{01})-(D_{10}-D_{00})(Y_{10}+Y_{00})|X]\\
    &=E[D_{11}-D_{01}-D_{10}+D_{00}|X]E[Y_1-Y_0|X],
\end{align*}
where the second equality follows by Assumption \ref{as:idid}.6 and 7.
Moreover, we have that
\begin{align*}
    \sum_{(w,z)\in\{0,1\}^2}(-1)^{w+z}\pi_0(w,z,X)=E[D_{11}-D_{01}-D_{10}+D_{00}|X],
\end{align*}
by Assumption \ref{as:idid}.1, 3, 4 and 5.
Combining these results show that $\theta_0(v)=\nu_0(v)/\zeta_0(v)$.

\noindent\textbf{(b)}
The moment conditions in part (b) immediately follow from the definition of the regression functions $\mu_0, \pi_0$ and the propensity score $\rho_0$.

\noindent\textbf{(c)}
The Neyman orthogonality of the signals $U$ and $T$ follows from Lemma \ref{le:orthosignal}.
\end{proof}

\begin{proof}[Proof of Corollary \ref{cr:idid}]
The same arguments in the proof of Corollary \ref{cr:late} can be used to derive $B_N\lesssim\sqrt{kN}\mathbf{r}_{N,2}(\mathbf{m}_{N,2}+\mathbf{p}_{N,2})=o(1)$ for bounded basis, $B_N\lesssim \sqrt{kN}(\mathbf{m}_{N,2\omega}\mathbf{r}_{N,2\psi}+\mathbf{p}_{N,2\omega'}\mathbf{r}_{N,2\psi'})=o(1)$ for unbounded basis, and $\Lambda_N\lesssim\xi_k(\mathbf{m}_{N,2}\lor\mathbf{p}_{N,2}\lor\mathbf{r}_{N,2})=o(1)$.
\end{proof}

\begin{proof}[Proof of Theorem \ref{th:mcdc}]
\noindent\textbf{(a)}
First, we provide the proof for LATE.
We have that
\begin{align*}
    \mu_0(1,X)-\mu_0(0,X)&=E[(Z_1-Z_0)(D_1-D_0)(Y_1-Y_0)|X]\\
    &=E[Z_1-Z_0|X]E[D_1-D_0|X]E[Y_1-Y_0|D_1-D_0=1,X],
\end{align*}
where the first equality follows by Assumption \ref{as:dcid}.1 and 4, and the second equality follows by Assumption \ref{as:dcid}.2 and that $D_1-D_0\in\{0,1\}$ implied by $D_1\geq D_0$.
Likewise, we have that
\begin{align*}
    \pi_0(1,X)-\pi_0(0,X)&=E[(Z_1-Z_0)(D_1-D_0)|X]\\
    &=E[Z_1-Z_0|X]E[D_1-D_0|X].
\end{align*}
Consequently, it holds that
\begin{align*}
    \frac{\nu_0(v)}{\zeta_0(v)}&=\frac{E[Z_1-Z_0|V=v]E[D_1-D_0|V=v]E[Y_1-Y_0|D_1-D_0=1,V=v]}{E[Z_1-Z_0|V=v]E[D_1-D_0|V=v]}\\
    &=E[Y_1-Y_0|D_1>D_0,V=v]=\theta_0(v),
\end{align*}
where the first equality holds by Assumption \ref{as:dcid}.3, and the second equality holds by the fact that $D_1-D_0=1$ implies $D_1>D_0$.
Additionally, if $P(D_1>D_0|V=v)=1$ for all $v\in\mathcal{V}$, LATE is equal to CATE.
Then, the statement in Theorem \ref{th:mcdc} (a) also holds for CATE.

\noindent\textbf{(b)}
The moment conditions in part (b) immediately follow from the definition of the regression functions $\mu_0, \pi_0$ and the propensity score $\rho_0$.

\noindent\textbf{(c)}
The Neyman orthogonality of the signals $U$ and $T$ follows from Lemma \ref{le:orthosignal}.
\end{proof}

\begin{proof}[Proof of Corollary \ref{cr:dc}]
The same arguments in the proof of Corollary \ref{cr:late} can be used to derive $B_N\lesssim\sqrt{kN}\mathbf{r}_{N,2}(\mathbf{m}_{N,2}+\mathbf{p}_{N,2})=o(1)$ for bounded basis, $B_N\lesssim\sqrt{kN}(\mathbf{m}_{N,2\omega}\mathbf{r}_{N,2\psi}+\mathbf{p}_{N,2\omega'}\mathbf{r}_{N,2\psi'})=o(1)$ for unbounded basis, and $\Lambda_N\lesssim\xi_k(\mathbf{m}_{N,2}\lor\mathbf{p}_{N,2}\lor\mathbf{r}_{N,2})=o(1)$.
\end{proof}

\section{Additional Simulation Results}\label{sec:sensitivity}
Table \ref{tb:sens_dgpl} and \ref{tb:sens_dgpq} present the detailed results of the sensitivity analysis in Section \ref{sec:simdml}.
We use the common hyperparameters for the numerator and denominator of SEP, and for the target and auxiliary function of DLS.

\begin{table*}[t]
\centering
\small
\caption{Simulation Results for DGP-L with Fixed Hyperparameters}
\begin{tabular}{cccccccccccccc}
\hline
    &           &      & \multicolumn{3}{c}{$N=500$} &  & \multicolumn{3}{c}{$N=1000$} &  & \multicolumn{3}{c}{$N=2000$} \\ \cline{4-6} \cline{8-10} \cline{12-14} 
$k$ & $\lambda$ &      & Bias    & SD     & MSE    &  & Bias     & SD     & MSE    &  & Bias     & SD     & MSE    \\ \hline
3   & 0.001     & DER & 0.025   & 0.267  & 0.060  &  & 0.012    & 0.180  & 0.029  &  & 0.004    & 0.130  & 0.014  \\
    &           & SEP  & 0.392   & 0.340  & 1.843  &  & 0.389    & 0.238  & 1.605  &  & 0.389    & 0.168  & 1.497  \\
    &           & DLS  & 0.013   & 0.261  & 0.054  &  & 0.001    & 0.176  & 0.027  &  & -0.006   & 0.127  & 0.014  \\
    &           & DWLS & 0.720   & 0.549  & 0.358  &  & 0.696    & 0.373  & 0.236  &  & 0.685    & 0.271  & 0.185  \\
3   & 0.01      & DER & -0.001  & 0.257  & 0.053  &  & -0.013   & 0.173  & 0.026  &  & -0.020   & 0.125  & 0.013  \\
    &           & SEP  & 0.381   & 0.337  & 1.802  &  & 0.379    & 0.236  & 1.569  &  & 0.378    & 0.166  & 1.464  \\
    &           & DLS  & -0.082  & 0.218  & 0.052  &  & -0.086   & 0.149  & 0.038  &  & -0.089   & 0.108  & 0.030  \\
    &           & DWLS & 0.553   & 0.475  & 0.218  &  & 0.538    & 0.325  & 0.145  &  & 0.532    & 0.237  & 0.112  \\
3   & 0.1       & DER & -0.184  & 0.188  & 0.052  &  & -0.189   & 0.128  & 0.042  &  & -0.192   & 0.093  & 0.036  \\
    &           & SEP  & 0.284   & 0.310  & 1.451  &  & 0.282    & 0.217  & 1.263  &  & 0.282    & 0.153  & 1.177  \\
    &           & DLS  & -0.405  & 0.110  & 0.128  &  & -0.401   & 0.076  & 0.125  &  & -0.400   & 0.055  & 0.122  \\
    &           & DWLS & -0.117  & 0.229  & 0.080  &  & -0.118   & 0.157  & 0.069  &  & -0.117   & 0.115  & 0.064  \\
3   & 1         & DER & -0.599  & 0.058  & 0.162  &  & -0.599   & 0.040  & 0.162  &  & -0.599   & 0.029  & 0.161  \\
    &           & SEP  & -0.203  & 0.173  & 0.304  &  & -0.205   & 0.121  & 0.262  &  & -0.205   & 0.086  & 0.242  \\
    &           & DLS  & -0.754  & 0.014  & 0.226  &  & -0.754   & 0.009  & 0.226  &  & -0.753   & 0.007  & 0.225  \\
    &           & DWLS & -0.681  & 0.041  & 0.196  &  & -0.681   & 0.028  & 0.196  &  & -0.681   & 0.020  & 0.196  \\
6   & 0.001     & DER & 0.071   & 0.340  & 0.223  &  & 0.028    & 0.211  & 0.084  &  & 0.011    & 0.144  & 0.035  \\
    &           & SEP  & 0.699   & 0.495  & 1.101  &  & 0.680    & 0.347  & 0.747  &  & 0.676    & 0.244  & 0.546  \\
    &           & DLS  & 0.046   & 0.334  & 0.150  &  & 0.008    & 0.209  & 0.066  &  & -0.007   & 0.144  & 0.033  \\
    &           & DWLS & 0.623   & 0.572  & 0.551  &  & 0.563    & 0.358  & 0.273  &  & 0.539    & 0.243  & 0.173  \\
6   & 0.01      & DER & 0.043   & 0.330  & 0.176  &  & 0.003    & 0.206  & 0.070  &  & -0.013   & 0.141  & 0.031  \\
    &           & SEP  & 0.687   & 0.491  & 1.069  &  & 0.669    & 0.344  & 0.722  &  & 0.665    & 0.242  & 0.525  \\
    &           & DLS  & -0.046  & 0.295  & 0.159  &  & -0.071   & 0.195  & 0.126  &  & -0.083   & 0.139  & 0.109  \\
    &           & DWLS & 0.502   & 0.520  & 0.396  &  & 0.456    & 0.333  & 0.244  &  & 0.438    & 0.230  & 0.183  \\
6   & 0.1       & DER & -0.122  & 0.264  & 0.125  &  & -0.148   & 0.170  & 0.090  &  & -0.158   & 0.121  & 0.076  \\
    &           & SEP  & 0.583   & 0.458  & 0.819  &  & 0.568    & 0.321  & 0.531  &  & 0.565    & 0.227  & 0.369  \\
    &           & DLS  & -0.293  & 0.173  & 0.212  &  & -0.291   & 0.123  & 0.207  &  & -0.290   & 0.091  & 0.204  \\
    &           & DWLS & 0.012   & 0.311  & 0.269  &  & 0.002    & 0.208  & 0.238  &  & -0.002   & 0.148  & 0.225  \\
6   & 1         & DER & -0.499  & 0.109  & 0.196  &  & -0.503   & 0.074  & 0.192  &  & -0.504   & 0.055  & 0.190  \\
    &           & SEP  & 0.049   & 0.286  & 0.401  &  & 0.041    & 0.200  & 0.281  &  & 0.042    & 0.144  & 0.225  \\
    &           & DLS  & -0.677  & 0.043  & 0.228  &  & -0.674   & 0.030  & 0.228  &  & -0.672   & 0.022  & 0.228  \\
    &           & DWLS & -0.587  & 0.083  & 0.222  &  & -0.587   & 0.055  & 0.220  &  & -0.585   & 0.040  & 0.219  \\
10  & 0.001     & DER & -0.037  & 0.455  & 0.340  &  & -0.020   & 0.265  & 0.089  &  & -0.021   & 0.166  & 0.032  \\
    &           & SEP  & 0.362   & 0.392  & 2.407  &  & 0.360    & 0.268  & 1.819  &  & 0.349    & 0.182  & 1.504  \\
    &           & DLS  & -0.053  & 0.405  & 0.230  &  & -0.041   & 0.249  & 0.077  &  & -0.041   & 0.159  & 0.030  \\
    &           & DWLS & 0.389   & 0.865  & 1.136  &  & 0.473    & 0.493  & 0.380  &  & 0.478    & 0.311  & 0.186  \\
10  & 0.01      & DER & -0.069  & 0.419  & 0.283  &  & -0.063   & 0.242  & 0.078  &  & -0.065   & 0.153  & 0.030  \\
    &           & SEP  & 0.346   & 0.386  & 2.332  &  & 0.344    & 0.263  & 1.759  &  & 0.333    & 0.179  & 1.455  \\
    &           & DLS  & -0.160  & 0.283  & 0.125  &  & -0.160   & 0.188  & 0.076  &  & -0.162   & 0.127  & 0.053  \\
    &           & DWLS & 0.315   & 0.666  & 0.747  &  & 0.343    & 0.394  & 0.282  &  & 0.333    & 0.255  & 0.150  \\
10  & 0.1       & DER & -0.218  & 0.302  & 0.162  &  & -0.240   & 0.174  & 0.077  &  & -0.251   & 0.114  & 0.053  \\
    &           & SEP  & 0.223   & 0.341  & 1.853  &  & 0.220    & 0.231  & 1.376  &  & 0.213    & 0.158  & 1.136  \\
    &           & DLS  & -0.367  & 0.151  & 0.131  &  & -0.369   & 0.108  & 0.123  &  & -0.372   & 0.079  & 0.116  \\
    &           & DWLS & 0.025   & 0.365  & 0.297  &  & 0.005    & 0.241  & 0.160  &  & -0.013   & 0.164  & 0.114  \\
10  & 1         & DER & -0.463  & 0.140  & 0.118  &  & -0.479   & 0.090  & 0.110  &  & -0.486   & 0.064  & 0.107  \\
    &           & SEP  & -0.122  & 0.231  & 1.008  &  & -0.136   & 0.155  & 0.752  &  & -0.141   & 0.108  & 0.630  \\
    &           & DLS  & -0.631  & 0.056  & 0.164  &  & -0.626   & 0.041  & 0.159  &  & -0.622   & 0.031  & 0.155  \\
    &           & DWLS & -0.459  & 0.142  & 0.123  &  & -0.466   & 0.094  & 0.112  &  & -0.469   & 0.067  & 0.106  \\ \hline
\end{tabular}
\label{tb:sens_dgpl}
\end{table*}

\begin{table*}[t]
\centering
\small
\caption{Simulation Results for DGP-Q with Fixed Hyperparameters}
\begin{tabular}{cccccccccccccc}
\hline
    &           &      & \multicolumn{3}{c}{$N=500$} &  & \multicolumn{3}{c}{$N=1000$} &  & \multicolumn{3}{c}{$N=2000$} \\ \cline{4-6} \cline{8-10} \cline{12-14} 
$k$ & $\lambda$ &      & Bias    & SD     & MSE    &  & Bias     & SD     & MSE    &  & Bias     & SD     & MSE    \\ \hline
3   & 0.001     & DER & 0.012   & 0.318  & 0.399  &  & 0.000    & 0.215  & 0.354  &  & -0.005   & 0.158  & 0.336  \\
    &           & SEP  & 0.443   & 0.407  & 2.793  &  & 0.442    & 0.285  & 2.505  &  & 0.444    & 0.205  & 2.386  \\
    &           & DLS  & 0.004   & 0.311  & 0.383  &  & -0.007   & 0.211  & 0.343  &  & -0.011   & 0.155  & 0.327  \\
    &           & DWLS & 1.044   & 0.738  & 1.704  &  & 1.016    & 0.511  & 1.465  &  & 1.005    & 0.369  & 1.361  \\
3   & 0.01      & DER & -0.011  & 0.305  & 0.378  &  & -0.021   & 0.207  & 0.338  &  & -0.026   & 0.153  & 0.322  \\
    &           & SEP  & 0.432   & 0.403  & 2.749  &  & 0.431    & 0.283  & 2.467  &  & 0.433    & 0.203  & 2.351  \\
    &           & DLS  & -0.070  & 0.263  & 0.302  &  & -0.074   & 0.181  & 0.282  &  & -0.076   & 0.133  & 0.274  \\
    &           & DWLS & 0.840   & 0.638  & 1.214  &  & 0.822    & 0.445  & 1.072  &  & 0.817    & 0.322  & 1.011  \\
3   & 0.1       & DER & -0.178  & 0.227  & 0.277  &  & -0.182   & 0.156  & 0.261  &  & -0.184   & 0.115  & 0.255  \\
    &           & SEP  & 0.331   & 0.371  & 2.365  &  & 0.330    & 0.260  & 2.134  &  & 0.332    & 0.187  & 2.039  \\
    &           & DLS  & -0.382  & 0.133  & 0.209  &  & -0.379   & 0.093  & 0.205  &  & -0.377   & 0.068  & 0.204  \\
    &           & DWLS & 0.026   & 0.307  & 0.357  &  & 0.025    & 0.214  & 0.337  &  & 0.027    & 0.156  & 0.330  \\
3   & 1         & DER & -0.591  & 0.070  & 0.218  &  & -0.591   & 0.049  & 0.217  &  & -0.591   & 0.036  & 0.217  \\
    &           & SEP  & -0.177  & 0.208  & 0.916  &  & -0.178   & 0.145  & 0.863  &  & -0.177   & 0.105  & 0.845  \\
    &           & DLS  & -0.751  & 0.016  & 0.235  &  & -0.751   & 0.011  & 0.235  &  & -0.750   & 0.008  & 0.235  \\
    &           & DWLS & -0.657  & 0.054  & 0.228  &  & -0.656   & 0.037  & 0.228  &  & -0.656   & 0.027  & 0.228  \\
6   & 0.001     & DER & 0.095   & 0.463  & 0.427  &  & 0.037    & 0.279  & 0.168  &  & 0.016    & 0.193  & 0.066  \\
    &           & SEP  & 0.940   & 0.605  & 1.984  &  & 0.924    & 0.427  & 1.370  &  & 0.926    & 0.304  & 1.089  \\
    &           & DLS  & 0.100   & 0.448  & 0.249  &  & 0.044    & 0.276  & 0.110  &  & 0.023    & 0.192  & 0.050  \\
    &           & DWLS & 0.862   & 0.753  & 1.019  &  & 0.784    & 0.472  & 0.465  &  & 0.755    & 0.319  & 0.256  \\
6   & 0.01      & DER & 0.088   & 0.449  & 0.326  &  & 0.032    & 0.274  & 0.133  &  & 0.011    & 0.190  & 0.055  \\
    &           & SEP  & 0.928   & 0.601  & 1.949  &  & 0.912    & 0.424  & 1.348  &  & 0.914    & 0.302  & 1.073  \\
    &           & DLS  & 0.065   & 0.382  & 0.118  &  & 0.030    & 0.255  & 0.071  &  & 0.016    & 0.183  & 0.049  \\
    &           & DWLS & 0.764   & 0.691  & 0.558  &  & 0.704    & 0.446  & 0.294  &  & 0.681    & 0.305  & 0.183  \\
6   & 0.1       & DER & -0.010  & 0.361  & 0.111  &  & -0.048   & 0.231  & 0.060  &  & -0.062   & 0.166  & 0.039  \\
    &           & SEP  & 0.819   & 0.565  & 1.673  &  & 0.806    & 0.398  & 1.170  &  & 0.809    & 0.285  & 0.941  \\
    &           & DLS  & -0.189  & 0.223  & 0.090  &  & -0.185   & 0.161  & 0.079  &  & -0.181   & 0.120  & 0.073  \\
    &           & DWLS & 0.219   & 0.428  & 0.140  &  & 0.208    & 0.288  & 0.092  &  & 0.206    & 0.202  & 0.071  \\
6   & 1         & DER & -0.412  & 0.150  & 0.108  &  & -0.418   & 0.102  & 0.105  &  & -0.419   & 0.075  & 0.103  \\
    &           & SEP  & 0.235   & 0.368  & 0.874  &  & 0.227    & 0.258  & 0.676  &  & 0.232    & 0.188  & 0.593  \\
    &           & DLS  & -0.643  & 0.057  & 0.177  &  & -0.640   & 0.040  & 0.175  &  & -0.637   & 0.029  & 0.173  \\
    &           & DWLS & -0.530  & 0.116  & 0.143  &  & -0.529   & 0.078  & 0.139  &  & -0.526   & 0.055  & 0.137  \\
10  & 0.001     & DER & -0.244  & 0.605  & 1.157  &  & -0.213   & 0.320  & 0.471  &  & -0.211   & 0.200  & 0.328  \\
    &           & SEP  & 0.221   & 0.415  & 4.241  &  & 0.229    & 0.285  & 3.394  &  & 0.222    & 0.197  & 2.966  \\
    &           & DLS  & -0.221  & 0.509  & 0.769  &  & -0.203   & 0.296  & 0.431  &  & -0.205   & 0.191  & 0.322  \\
    &           & DWLS & 0.306   & 1.193  & 3.174  &  & 0.423    & 0.633  & 1.455  &  & 0.436    & 0.390  & 0.965  \\
10  & 0.01      & DER & -0.206  & 0.557  & 0.992  &  & -0.197   & 0.292  & 0.443  &  & -0.203   & 0.184  & 0.321  \\
    &           & SEP  & 0.217   & 0.409  & 4.194  &  & 0.224    & 0.280  & 3.363  &  & 0.217    & 0.194  & 2.944  \\
    &           & DLS  & -0.183  & 0.331  & 0.434  &  & -0.183   & 0.220  & 0.350  &  & -0.193   & 0.151  & 0.304  \\
    &           & DWLS & 0.416   & 0.897  & 2.286  &  & 0.441    & 0.506  & 1.234  &  & 0.420    & 0.322  & 0.882  \\
10  & 0.1       & DER & -0.149  & 0.423  & 0.619  &  & -0.184   & 0.227  & 0.369  &  & -0.207   & 0.146  & 0.300  \\
    &           & SEP  & 0.179   & 0.369  & 3.846  &  & 0.181    & 0.250  & 3.126  &  & 0.176    & 0.174  & 2.774  \\
    &           & DLS  & -0.306  & 0.180  & 0.278  &  & -0.298   & 0.131  & 0.264  &  & -0.295   & 0.099  & 0.259  \\
    &           & DWLS & 0.237   & 0.505  & 1.132  &  & 0.211    & 0.340  & 0.765  &  & 0.183    & 0.227  & 0.622  \\
10  & 1         & DER & -0.372  & 0.205  & 0.319  &  & -0.391   & 0.129  & 0.269  &  & -0.401   & 0.091  & 0.253  \\
    &           & SEP  & -0.008  & 0.283  & 2.643  &  & -0.020   & 0.192  & 2.246  &  & -0.022   & 0.137  & 2.056  \\
    &           & DLS  & -0.594  & 0.075  & 0.231  &  & -0.585   & 0.056  & 0.230  &  & -0.579   & 0.043  & 0.231  \\
    &           & DWLS & -0.356  & 0.211  & 0.362  &  & -0.365   & 0.142  & 0.316  &  & -0.370   & 0.096  & 0.301  \\ \hline
\end{tabular}
\label{tb:sens_dgpq}
\end{table*}
\end{document}